\documentclass[12pt]{article}

\usepackage[right=1in,left=1in,top=1in,bottom=1in]{geometry}
\usepackage{amsmath,amssymb,amsthm,mathtools,bm}
\usepackage{booktabs,longtable,array,enumitem}
\usepackage{graphicx}
\usepackage{placeins}
\usepackage{url}
\usepackage[round]{natbib}
\usepackage{xcolor}
\usepackage{microtype}
\usepackage{setspace}
\usepackage{sectsty}
\usepackage{hyperref}
\hypersetup{
  colorlinks,
  citecolor=blue,
  filecolor=blue,
  linkcolor=blue,
  urlcolor=blue
}

\newcommand{\E}{\mathbb E}
\newcommand{\Prb}{\mathbb P}
\newcommand{\Var}{\operatorname{Var}}
\newcommand{\Cov}{\operatorname{Cov}}

\newcommand{\ind}{\mathbf 1}
\newcommand{\dd}{\,\mathrm d}
\newcommand{\Q}{Q}

\newcommand{\norm}[1]{\lVert #1\rVert}

\newtheorem{theorem}{Theorem}
\newtheorem{proposition}{Proposition}
\newtheorem{lemma}{Lemma}
\newtheorem{corollary}{Corollary}
\theoremstyle{definition}
\newtheorem{assumption}{Assumption}

\theoremstyle{remark}
\newtheorem{remark}{Remark}

\renewcommand{\arraystretch}{1.5}

\sectionfont{\large}
\subsectionfont{\normalsize}
\subsubsectionfont{\normalsize}

\title{\vspace*{-2.5cm}Two Margins in Difference-in-Differences with a Continuous Treatment}

\author{
Fangzhou Yu\thanks{School of Economics, University of Sydney. \href{mailto:fangzhou.yu@sydney.edu.au}{fangzhou.yu@sydney.edu.au}}
}

\date{September 2026}

\begin{document}

\hypersetup{pageanchor=false}

\bgroup
\let\footnoterule\relax
\begin{doublespace}
  \maketitle
\end{doublespace}

\begin{singlespace}
  \begin{abstract}
	This paper studies difference-in-differences with staggered adoption and a continuous, time-invariant dose. Each cohort-time comparison contains two margins. The level margin is the average treatment effect at realized doses. Under level parallel trends it equals the level contrast between the treated cohort and not-yet-treated controls. The response margin is the within-cohort slope of the outcome change on dose. It uses no controls, and its causal interpretation requires a response parallel trends assumption and a restriction on selection on gains. We show that the continuous-dose OLS coefficient in each cohort-time comparison is a convex combination of the response index and the level contrast per unit of mean dose, with a mixing weight that depends on the not-yet-treated share. We provide estimators of both margins, joint inference across cohort-time comparisons and event-time aggregates, and a covariate-adjusted extension. In an application to hydraulic fracturing, the level leads reject a joint zero restriction, whereas the response-index leads do not. The continuous-dose OLS coefficient draws primarily on the level margin. We report the level and response margin separately.
  \end{abstract}
\end{singlespace}
\thispagestyle{empty}

\clearpage
\egroup
\setcounter{page}{1}
\hypersetup{pageanchor=true}

\section{Introduction}\label{sec:intro}

Staggered difference-in-differences (DiD) with a binary treatment forms each cohort-time effect from a valid control group and aggregates only afterward \citep{dechaisemartin-dhaultfoeuille-2020,goodmanbacon-2021,sun-abraham-2021,callaway-santanna-2021,borusyak-jaravel-spiess-2024}. With a continuous dose, each cohort-time comparison also contains variation in dose within the treated cohort. A regression of the outcome change on dose uses both the treated-control difference and this within-cohort variation. This paper separates the two comparisons and develops their identification, estimators, and joint inference for the cohort-time and event-time parameters.

The level margin is the cohort's average treatment effect at its realized doses relative to zero treatment. Its observed-data counterpart, the level contrast, is the difference in mean outcome changes between the cohort and units not yet treated. The response margin concerns the effect of increasing dose within the treated cohort. We study this margin through the response index, defined as the population OLS slope of the outcome change on dose within the cohort. The level contrast is measured in outcome units, and the response index in outcome units per unit of dose. Neither determines the other.

We show that the OLS coefficient in each cohort-time comparison is a convex combination of the response index and the level contrast divided by the mean treated dose. Its mixing weight depends on the treated dose distribution and the not-yet-treated share. Even with the conditional outcome and dose distributions held fixed, changing this share changes the coefficient unless the two components coincide.

The two comparisons require separate identifying restrictions. Level parallel trends (PT-L) equates mean untreated outcome changes for the treated cohort and its not-yet-treated controls and identifies the level margin. Response parallel trends (PT-R) removes the relation between dose and untreated outcome changes within the treated cohort. These restrictions are non-nested. PT-R does not rule out selection on gains: under the conditions for the derivative representation, the response index is the sum of a convex weighted average of causal derivatives and a selection term. A bound on the selection derivative gives a sensitivity bound for the causal component.

We present an analysis of county employment using the hydraulic fracturing application of \citet{bartik-et-al-2019}. Under county-level i.i.d. inference, the pre-treatment level leads reject a joint zero restriction, whereas the response-index leads do not. Since the level leads reject and nonrejection of the response pre-trend restriction implies neither PT-R nor the absence of selection on gains, we report both paths descriptively. The continuous-dose OLS coefficient in each cohort-time comparison places most of its weight on the level-contrast component. Removing counties whose missing prospectivity scores were coded as zero changes the level estimates and leaves the response estimates unchanged.

The paper builds on \citet{callaway-goodmanbacon-santanna-2026}, henceforth CGBS. They identify treatment-on-the-treated effects at realized doses under dose-specific parallel trends, show how comparisons across dose groups combine causal responses with selection on gains, and decompose continuous-treatment two-way fixed-effects (TWFE) regressions. Their staggered-adoption extension and event-study aggregations are developed in \citet{callaway-goodmanbacon-santanna-2024-event}. Our level margin is the average of their realized-dose effect over the cohort's dose distribution. In the two-period setting, CGBS's dose-specific parallel-trends condition combines PT-L and PT-R, while their aggregate condition coincides with PT-L.

The baseline estimators are the treated-control mean contrast and the within-cohort OLS slope. Because their influence functions are evaluated on the same units, the stacked covariance retains dependence across cohort-time comparisons and margins, and between cell estimates and estimated cohort-share weights. This yields joint Wald tests and simultaneous inference for fixed collections of comparisons and aggregates. The covariate-adjusted extension preserves the level margin, while residualizing dose within the cohort generally defines a different response index. Building on semiparametric DiD and orthogonal-score methods \citep{abadie-2005,santanna-zhao-2020,chang-2020,chernozhukov-et-al-2018}, we develop cross-fitted estimators and stacked limit theory for the adjusted targets. The response score has the residual-on-residual form used by \citet{clarke-polselli-2026} in their DML estimator for panel data.

Other continuous- or multivalued-treatment DiD methods use different designs or target different objects. \citet{fricke-2017} studies a two-period design with multiple treatments. \citet{dhaultfoeuille-hoderlein-sasaki-2023} use repeated cross-sections and a crossing change in the treatment distribution to identify nonparametric average and quantile effects under stationarity and rank restrictions. For treatments that change over time, \citet{dechaisemartin-et-al-2026-stayers} compare switchers with stayers, \citet{dechaisemartin-dhaultfoeuille-vazquezbare-2024} treat the case in which every unit switches, and \citet{haddad-et-al-2026} develop kernel DML for treatment histories. \citet{zhang-2026} develops DML for kernel-smoothed, dose-specific DiD effects with conditional-density nuisances and uniform-in-dose bands. Both our baseline and adjusted procedures estimate scalar targets without a dose density, a derivative estimator, or a bandwidth.

Section~\ref{sec:setup} defines the staggered design and the two cohort-time targets, and Section~\ref{sec:generic-assumptions} gives their identifying restrictions. Section~\ref{sec:why-two} derives the regression mixture and the causal and selection components of the response index. Section~\ref{sec:paths} defines cohort-specific and event-time parameters. Sections~\ref{sec:baseline-estimation} and~\ref{sec:covariates} develop baseline estimation and inference and the covariate-adjusted extension. Sections~\ref{sec:simulation} and~\ref{sec:application} report simulation results and the hydraulic fracturing application, and Section~\ref{sec:conclusion} concludes. The appendices collect proofs, the covariate-adjusted results and DML derivations, and further application results.

\section{Setup}\label{sec:setup}

This section reduces the staggered-adoption design to cohort-time comparisons, each pairing one treated cohort with units that remain untreated at the outcome date, and defines the two margins within each comparison.

\subsection{Panel structure and cohort-time comparisons}

We consider a balanced panel observed over periods $1,\ldots,\mathcal T$, where $\mathcal T<\infty$.  For one unit, write the observed vector as
\[
  (Y_1,\ldots,Y_{\mathcal T},G,D).
\]
Here $G$ denotes the first treatment date, with $G=\infty$ for never-treated units.  If $G=g<\infty$, the unit receives a scalar dose $D>0$ that remains fixed from period $g$ onward. For never-treated units, set $D=0$. The treatment-path setup follows CGBS.

Let $Y_s(g,d)$ denote the potential outcome at date $s$ under adoption at date $g$ with dose $d$, and let $Y_s(\infty,0)$ denote the never-treated potential outcome. Let $\mathcal C\subseteq\{(g,t):2\le g\le t\le\mathcal T\}$ index the cohort-time comparisons under consideration. For brevity, call each pair $c=(g,t)\in\mathcal C$ a cell. Define
\[
  S_c=\ind\{G=g\text{ or }G>t\},
  \qquad
  T_c=\ind\{G=g\},
  \qquad
  Y_c=Y_t-Y_{g-1},
\]
and let $D_c=T_cD$.  Whenever $P(S_c=1)>0$, let $P_c=P(\,\cdot\mid S_c=1)$ denote the corresponding conditional distribution.  Thus $S_c$ selects cohort-$g$ units and units that remain untreated at date $t$, while $T_c$ distinguishes cohort-$g$ units within this sample.  In particular, already-treated units never serve as controls.

For this cell, define the untreated outcome change and the date-$t$ treatment effect by
\[
  U_c=Y_t(\infty,0)-Y_{g-1}(\infty,0),
  \qquad
  \tau_c(d)=Y_t(g,d)-Y_t(\infty,0).
\]
Consistency and no anticipation imply
\[
  Y_c=U_c+\tau_c(D)\quad\text{when }G=g,
  \qquad
  Y_c=U_c\quad\text{when }G>t.
\]
Conditional on $S_c=1$, each cell is therefore a two-period continuous-treatment DiD problem under $P_c$.

\subsection{Cohort-time target parameters}

The level margin is
\begin{equation}\label{eq:cohort-level-def}
  \tau_{L,g,t}=\E[\tau_c(D)\mid G=g].
\end{equation}
It is the average, within cohort $g$, of each unit's treatment effect at its realized dose relative to zero treatment at outcome date $t$. Whenever $P(G>t)>0$, its observed-data counterpart is the level contrast
\begin{equation}\label{eq:cohort-level-contrast-def}
  \delta_{L,g,t}=\E[Y_c\mid G=g]-\E[Y_c\mid G>t],
\end{equation}
the difference in mean outcome changes between cohort $g$ and the units still untreated at date $t$.  Section~\ref{sec:generic-assumptions} gives the condition under which the two coincide.

The level margin averages the $ATT(g,t,d\mid g,d)$ parameter of \citet{callaway-goodmanbacon-santanna-2026} over the realized-dose distribution $D\mid G=g$, where $g$, $t$, and $d$ index the timing group, outcome period, and dose group in their notation.

The response margin concerns variation across positive doses within cohort $g$.  Let
\[
  \mu_{D,g}=\E[D\mid G=g],
  \qquad
  V_g=\Var(D\mid G=g).
\]
When $V_g>0$, define the response index
\begin{equation}\label{eq:cohort-response-0-def}
  \theta_{R,g,t}
  =
  \frac{\Cov(D,Y_c\mid G=g)}{V_g}
  =
  \E[\alpha_g(D)Y_c\mid G=g],
  \qquad
  \alpha_g(d)=\frac{d-\mu_{D,g}}{V_g}.
\end{equation}
The response index is the coefficient on $D$ in the population linear projection of $Y_c$ on a constant and $D$ within cohort $g$.  Given finite second moments of $D$ and $Y_c$ within cohort $g$, $V_g>0$ is necessary and sufficient for it to be defined. Because the cohort's dose distribution does not vary with the outcome date, $\mu_{D,g}$, $V_g$, $\alpha_g$, and the response weight introduced below carry no $t$ subscript.

Two naming conventions apply throughout the paper.  Objects defined by the observed outcome distribution carry the names "contrast" and "index": the level contrast $\delta_{L,g,t}$ and the response index $\theta_{R,g,t}$ require no causal assumption, and we read the response index in particular as a statistical summary unless the conditions of Section~\ref{sec:why-two} are invoked explicitly.  The name "margin" refers to the causal comparison: the level margin $\tau_{L,g,t}$, and the causal component of the response margin identified in Section~\ref{sec:why-two}.

We report the two margins separately. The level margin is measured in outcome units and the response index in outcome units per unit of dose, so their sum has neither a common scale nor an interpretation as a total effect. The example at the end of the next subsection shows that neither margin determines the other.

\subsection{Generic two-period notation}

To state results that apply to every cohort-time cell, consider a generic two-period problem.  Write
\[
  O=(Y_1,Y_2,D),
  \qquad
  Y=Y_2-Y_1,
  \qquad
  T=\ind\{D>0\},
\]
\[
  U=Y_2(0)-Y_1(0),
  \qquad
  \tau(d)=Y_2(d)-Y_2(0),
\]
so $Y=U+\tau(D)$ under consistency.  Let $p=P(T=1)$, $\rho=P(T=0)=1-p$, and let $\Q=P(\,\cdot\mid T=1)$, with expectation $\E_+[\cdot]$.  Define
\[
  \tau_L=\E_+[\tau(D)],
\]
and define the treated dose moments and, whenever $V_D>0$, the response index and representer by
\[
  \begin{gathered}
    \mu_D=\E_+[D],
    \qquad
    V_D=\Var_+(D),\\
    \theta_R=\frac{\Cov_+(D,Y)}{V_D}
    =\E_+[\alpha(D)Y],
    \qquad
    \alpha(d)=\frac{d-\mu_D}{V_D}.
  \end{gathered}
\]
Under $P_c$, this notation applies cellwise by setting $(Y,T,D)=(Y_c,T_c,D_c)$.  For example, if $D\in\{1,2\}$, a constant effect $\tau(d)=a$ gives $\tau_L=a$ and $\theta_R=0$, whereas a linear effect $\tau(d)=bd$ gives $\tau_L=b\mu_D$ and $\theta_R=b$.

\section{Identification}\label{sec:generic-assumptions}

This section states one parallel-trends restriction for each margin, shows what each identifies, and transfers both to cohort-time cells.

\subsection{Assumptions}

\begin{assumption}[Positivity and moments]\label{ass:sampling}
  (i) $p>0$; (ii) $p<1$; (iii) $\E[Y^2+D^2]<\infty$.
\end{assumption}

\begin{assumption}[Positive treated-dose variation]\label{ass:variation}
  $V_D>0$.
\end{assumption}

Part (i) of Assumption~\ref{ass:sampling} is needed whenever expectations under $\Q$ enter, part (ii) whenever a control group enters, and Assumption~\ref{ass:variation} whenever the response index enters. Under part (i), part (iii) implies $\E_+[Y^2+D^2]<\infty$, so the treated-only projections and covariances below are well defined, and Assumption~\ref{ass:variation} then gives $0<V_D<\infty$.

\begin{assumption}[Consistency, no anticipation, and potential-outcome moments]\label{ass:consistency}
  (i) $Y_1=Y_1(0)$ and $Y_2=Y_2(D)$; (ii) $\E[U^2+\tau(D)^2]<\infty$.
\end{assumption}

Part (i) combines consistency at the outcome date with no anticipation at the pre-treatment date. Part (ii) is a potential-outcome moment condition for the causal covariance decompositions. It is not implied by Assumption~\ref{ass:sampling}(iii), because $U$ and $\tau(D)$ may offset in the observed change. Conversely, under part (i), part (ii) implies the $Y^2$ component of Assumption~\ref{ass:sampling}(iii), and the latter remains useful because it also controls $D^2$ and supports results stated entirely in terms of the observed distribution.

\begin{assumption}[Level parallel trends, PT-L]\label{ass:levelpt}
  \[
    \E[U\mid T=1]=\E[U\mid T=0].
  \]
\end{assumption}

\begin{assumption}[Response parallel trends, PT-R]\label{ass:dosept}
  \[
    \E[U\mid D,T=1]=\E[U\mid T=1]
    \quad\text{almost surely under }\Q.
  \]
\end{assumption}

PT-L is a between-group restriction, whereas PT-R is a within-treated restriction across positive doses.

\begin{proposition}[PT-L and PT-R are non-nested]\label{prop:pt-nonnested}
  Neither Assumption~\ref{ass:levelpt} nor Assumption~\ref{ass:dosept} implies the other.  If both hold, then
  \begin{equation}\label{eq:all-dose-pt}
    \E[U\mid D,T=1]=\E[U\mid T=0]
    \quad\text{almost surely under }\Q.
  \end{equation}
\end{proposition}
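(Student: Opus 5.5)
The plan is to prove the two non-implications by explicit counterexamples and the final claim by direct substitution. Both counterexamples use the generic two-period notation, with a treated dose distribution that is nondegenerate so that Assumption~\ref{ass:variation} holds. I take $D\in\{1,2\}$ with equal probability under $\Q$ and $p=1/2$. The treatment effect $\tau(\cdot)$ plays no role, since both assumptions restrict only the distribution of $U$ given $(T,D)$; I set $\tau\equiv 0$. I also let $U$ be a deterministic function of $(T,D)$, so that Assumptions~\ref{ass:sampling} and~\ref{ass:consistency} hold trivially with bounded variables.

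For \emph{PT-L without PT-R}, I set $U=0$ when $T=0$, $U=-1$ when $D=1$, and $U=1$ when $D=2$. Then $\E[U\mid T=1]=0=\E[U\mid T=0]$, so Assumption~\ref{ass:levelpt} holds. However, $\E[U\mid D,T=1]=2D-3$ is not $\Q$-a.s.\ constant, so Assumption~\ref{ass:dosept} fails.

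For \emph{PT-R without PT-L}, I set $U=1$ for every treated unit and $U=0$ for every control. Then $\E[U\mid D,T=1]=1=\E[U\mid T=1]$ almost surely, so Assumption~\ref{ass:dosept} holds. But $\E[U\mid T=1]=1\neq 0=\E[U\mid T=0]$, so Assumption~\ref{ass:levelpt} fails.

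For the last claim, Assumption~\ref{ass:dosept} gives $\E[U\mid D,T=1]=\E[U\mid T=1]$ $\Q$-a.s. Assumption~\ref{ass:levelpt} says that the constant $\E[U\mid T=1]$ equals $\E[U\mid T=0]$. Chaining the two equalities yields \eqref{eq:all-dose-pt}.

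The argument needs the conditional expectations to be well defined. This follows from $\E[U^2]<\infty$ in Assumption~\ref{ass:consistency}(ii), which gives $U$ integrable. It also follows from $0<p<1$ in Assumption~\ref{ass:sampling}, which makes conditioning on $T=0$ and on $T=1$ meaningful.

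There is no real obstacle. The only care needed is to check that each counterexample satisfies all the maintained regularity assumptions, namely positivity, finite moments, $V_D>0$ and consistency. That way the failure of the implication cannot be blamed on a degenerate design.
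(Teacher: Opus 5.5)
Your proposal is correct and follows the paper's proof essentially verbatim: the same two counterexamples (in the PT-L-without-PT-R case you flip the sign of $U$ across the two doses, which is immaterial), and the same chaining of PT-R with PT-L for the final claim. Your explicit check that each counterexample satisfies positivity, the moment conditions, $V_D>0$, and consistency is a small refinement the paper leaves implicit.
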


In the generic two-period notation, equation~\eqref{eq:all-dose-pt} is the dose-specific parallel-trends condition in \citet{callaway-goodmanbacon-santanna-2026}.  Their condition implies PT-R directly and PT-L after averaging over the positive-dose distribution, and Proposition~\ref{prop:pt-nonnested} gives the converse.  Their aggregate parallel-trends condition coincides with PT-L in this two-period problem. Separating the restrictions therefore does not weaken the parallel-trends assumption but assigns one restriction to each margin.

\subsection{Identification of the two margins}

\begin{theorem}[Identification of the two margins]\label{thm:two-margins}
  \begin{enumerate}[label=(\roman*),leftmargin=*]
    \item Suppose Assumptions~\ref{ass:sampling}(i), (ii) and~\ref{ass:consistency} hold. Under Assumption~\ref{ass:levelpt},
          \[
            \tau_L=\E[Y\mid T=1]-\E[Y\mid T=0].
          \]
    \item Under Assumptions~\ref{ass:sampling}(i), (iii) and~\ref{ass:variation},
          \begin{equation}\label{eq:baseline-balancing}
            \E_+[\alpha(D)]=0,
            \qquad
            \E_+[\alpha(D)D]=1,
          \end{equation}
          and $\theta_R$ is the unique population OLS slope of $Y$ on a constant and $D$ under $\Q$.
    \item Suppose Assumptions~\ref{ass:sampling}(i), (iii), \ref{ass:variation}, and~\ref{ass:consistency} hold. Under Assumption~\ref{ass:dosept},
          \begin{equation}\label{eq:response-gain-id}
            \theta_R
            =
            \frac{\Cov_+\{D,\tau(D)\}}{V_D}.
          \end{equation}
  \end{enumerate}
\end{theorem}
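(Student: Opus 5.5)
We prove the three parts in order. Each one reduces to a short moment calculation under $\Q$, after we check that the relevant expectations are finite.

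For part (i), Assumption~\ref{ass:sampling}(i),(ii) make both conditional means $\E[\cdot\mid T=1]$ and $\E[\cdot\mid T=0]$ well defined. Assumption~\ref{ass:consistency}(ii) gives $\E|U|<\infty$ and $\E|\tau(D)|<\infty$, and hence the corresponding conditional means are finite because $p,\rho>0$. By Assumption~\ref{ass:consistency}(i), $Y=U+\tau(D)$ on $\{T=1\}$. On $\{T=0\}$ we have $D=0$ and $\tau(0)=0$, so $Y=U$. Therefore
\[
  \E[Y\mid T=1]-\E[Y\mid T=0]=\E[\tau(D)\mid T=1]+\E[U\mid T=1]-\E[U\mid T=0].
\]
Assumption~\ref{ass:levelpt} cancels the last two terms, leaving $\tau_L$.

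For part (ii), finiteness of $\E_+[D^2]$ and $\E_+[Y^2]$ follows from Assumption~\ref{ass:sampling}(i),(iii), since $\E_+[Z]\le \E[Z]/p$ for nonnegative $Z$. Together with Assumption~\ref{ass:variation}, this gives $0<V_D<\infty$, so $\alpha$ is well defined. The balancing identities are then direct computations. First, $\E_+[\alpha(D)]=(\mu_D-\mu_D)/V_D=0$. Second, $\E_+[\alpha(D)D]=\E_+[(D-\mu_D)D]/V_D=V_D/V_D=1$. For the OLS claim, we minimize $\E_+[(Y-a-bD)^2]$; this is finite for every $(a,b)$ by Cauchy--Schwarz. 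The normal equations are $\E_+[Y-a-bD]=0$ and $\E_+[(Y-a-bD)D]=0$. Their Gram matrix has determinant $V_D>0$, so the solution is unique, with slope $b=\Cov_+(D,Y)/V_D$. This equals $\theta_R$. Finally, $\E_+[\alpha(D)Y]=\Cov_+(D,Y)/V_D$ because $\E_+[\alpha(D)]=0$.

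For part (iii), we again write $Y=U+\tau(D)$ under $\Q$. Then
\[
  \Cov_+(D,Y)=\Cov_+(D,U)+\Cov_+\{D,\tau(D)\},
\]
where every term is finite by Cauchy--Schwarz, using $\E_+[U^2+\tau(D)^2]\le \E[U^2+\tau(D)^2]/p$ and $\E_+[D^2]<\infty$. We will show $\Cov_+(D,U)=0$ by the tower property. Since $\E_+[U\mid D]$ exists, $\Cov_+(D,U)=\E_+[(D-\mu_D)\,\E_+[U\mid D]]$. Under $\Q$, conditioning on $D$ is the same as conditioning on $(D,T=1)$. Assumption~\ref{ass:dosept} therefore says $\E_+[U\mid D]$ is $\Q$-a.s. the constant $\E_+[U]$. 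Hence $\Cov_+(D,U)=\E_+[U]\,\E_+[D-\mu_D]=0$. Dividing by $V_D$ gives \eqref{eq:response-gain-id}.

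The only delicate step is the integrability bookkeeping. Observed-data moments (Assumption~\ref{ass:sampling}(iii)) do not control $U$ and $\tau(D)$ separately, which is why part (iii) invokes Assumption~\ref{ass:consistency}(ii). We also need to confirm that conditioning under $\Q$ matches the conditioning in Assumption~\ref{ass:dosept}; that identification is immediate.
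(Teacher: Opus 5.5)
Your proposal is correct and follows the paper's proof essentially step for step. In (i), the decomposition $Y=U+\tau(D)$ lets PT-L cancel the untreated-trend means; in (ii), direct computation and the normal equations give the balancing identities and the slope; in (iii), the split $\Cov_+(D,Y)=\Cov_+(D,U)+\Cov_+\{D,\tau(D)\}$ has its first term removed by PT-R. Your integrability bookkeeping and the explicit tower-property argument for $\Cov_+(D,U)=0$ only spell out details the paper leaves implicit.
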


PT-R removes the covariance between dose and untreated trends in part (iii), but selection on gains may remain.  Since
\[
  \theta_R
  =
  \frac{\Cov_+(D,U)}{V_D}
  +
  \frac{\Cov_+\{D,\tau(D)\}}{V_D},
\]
the condition $\Cov_+(D,U)=0$ is necessary and sufficient for \eqref{eq:response-gain-id} when $V_D>0$. We state PT-R rather than this weaker covariance condition because it is the within-treated parallel-trends condition. And together with PT-L, it reproduces the dose-specific condition of CGBS. Section~\ref{sec:why-two} gives the additional conditions under which the response index admits a derivative interpretation.

\subsection{Cellwise assumptions and identification}

For $c=(g,t)\in\mathcal C$, extend the cell treatment effect to zero dose by setting $\widetilde\tau_c(0)=0$ and $\widetilde\tau_c(d)=\tau_c(d)$ for $d>0$.  Under $P_c$, the generic two-period notation applies with
\[
  (Y,T,D,U,\tau(\cdot))
  =
  (Y_c,T_c,D_c,U_c,\widetilde\tau_c(\cdot)).
\]

\begin{assumption}[Panel consistency and no anticipation]\label{ass:staggered}
  Observed outcomes satisfy $Y_s=Y_s(G,D)$ when $G<\infty$ and $Y_s=Y_s(\infty,0)$ when $G=\infty$. For every feasible adoption date $h<\infty$, dose $d>0$, and period $s<h$, $Y_s(h,d)=Y_s(\infty,0)$.
\end{assumption}

Whenever $P(S_c=1)>0$, Assumption~\ref{ass:staggered} gives $Y_c=U_c+\widetilde\tau_c(D_c)$ under $P_c$, the cellwise counterpart of Assumption~\ref{ass:consistency}(i).  The distribution $\Q$ becomes $P(\,\cdot\mid G=g)$, the positivity conditions become $P(G=g)>0$ and $P(G>t)>0$, and $V_D$ becomes $V_g$.  The two margin-specific restrictions become
\[
  \mathrm{PT\text{-}L}_{g,t}:
  \quad
  \E[U_c\mid G=g]=\E[U_c\mid G>t],
\]
and
\[
  \mathrm{PT\text{-}R}_{g,t}:
  \quad
  \E[U_c\mid D,G=g]=\E[U_c\mid G=g]
  \quad\text{almost surely}.
\]

\begin{theorem}[Cohort-time identification]\label{thm:staggered}
  Fix $c=(g,t)\in\mathcal C$ and suppose Assumption~\ref{ass:staggered} holds.
  \begin{enumerate}[label=(\roman*),leftmargin=*]
    \item Suppose $P(G=g)>0$, $P(G>t)>0$,
          \[
            \E[U_c^2+\tau_c(D)^2\mid G=g]<\infty,
            \qquad
            \E[U_c^2\mid G>t]<\infty,
          \]
          and $\mathrm{PT\text{-}L}_{g,t}$ holds. Then
          \[
            \tau_{L,g,t}
            =
            \delta_{L,g,t}
            =
            \E[Y_c\mid G=g]-\E[Y_c\mid G>t].
          \]
    \item Suppose $P(G=g)>0$,
          \[
            \E[D^2+U_c^2+\tau_c(D)^2\mid G=g]<\infty,
          \]
          $V_g>0$, and $\mathrm{PT\text{-}R}_{g,t}$ holds. Then
          \begin{equation}\label{eq:cohort-response-gain-id}
            \theta_{R,g,t}
            =
            \frac{\Cov\{D,\tau_c(D)\mid G=g\}}{V_g}.
          \end{equation}
  \end{enumerate}
\end{theorem}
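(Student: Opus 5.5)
The plan is to reduce each part to Theorem~\ref{thm:two-margins} applied under $P_c$, after checking that the cellwise objects satisfy that theorem's assumptions. Fix $c=(g,t)$ with $P(G=g)>0$; then $P(S_c=1)\ge P(G=g)>0$, so $P_c$ is defined. Under $P_c$, set $(Y,T,D,U,\tau(\cdot))=(Y_c,T_c,D_c,U_c,\widetilde\tau_c(\cdot))$. The first step is to record the dictionary between cell and generic objects:
\[
  P_c(\cdot\mid T_c=1)=P(\cdot\mid G=g),
  \qquad
  P_c(\cdot\mid T_c=0)=P(\cdot\mid G>t).
\]
These hold because $\{S_c=1,T_c=1\}=\{G=g\}$ and $\{S_c=1,T_c=0\}=\{G>t\}$, the latter using $g\le t$ so the two events are disjoint. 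On $\{G=g\}$ we have $D_c=D>0$, so $T_c=\ind\{D_c>0\}$ and $\widetilde\tau_c(D_c)=\tau_c(D)$. On $\{G>t\}$ we have $D_c=0$ and $\widetilde\tau_c(0)=0$. Hence $\Q$, $\mu_D$, $V_D$, $\alpha$, $\tau_L$ and $\theta_R$ become $P(\cdot\mid G=g)$, $\mu_{D,g}$, $V_g$, $\alpha_g$, $\tau_{L,g,t}$ and $\theta_{R,g,t}$, and PT-L and PT-R become $\mathrm{PT\text{-}L}_{g,t}$ and $\mathrm{PT\text{-}R}_{g,t}$.

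Next, verify the hypotheses of Theorem~\ref{thm:two-margins}. Assumption~\ref{ass:consistency}(i) holds under $P_c$ because Assumption~\ref{ass:staggered} gives $Y_c=U_c+\widetilde\tau_c(D_c)$, as noted after that assumption.
\begin{itemize}
  \item[(a)] For part (i), $p_c=P(G=g)/P(S_c=1)>0$ and $\rho_c=P(G>t)/P(S_c=1)>0$, which give Assumption~\ref{ass:sampling}(i),(ii).
  \item[(b)] For Assumption~\ref{ass:consistency}(ii), decompose $\E_c[U_c^2+\widetilde\tau_c(D_c)^2]$ as $p_c\E[U_c^2+\tau_c(D)^2\mid G=g]+\rho_c\E[U_c^2\mid G>t]$. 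Both terms are finite by hypothesis.
  \item[(c)] Part (i) then follows from Theorem~\ref{thm:two-margins}(i), which gives $\tau_{L,g,t}=\E_c[Y_c\mid T_c=1]-\E_c[Y_c\mid T_c=0]=\delta_{L,g,t}$.
\end{itemize}

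Part (ii) is the main obstacle, because it assumes nothing about the control group: $P(G>t)$ may be zero and $\E[U_c^2\mid G>t]$ may be infinite. So Assumption~\ref{ass:sampling}(iii) and Assumption~\ref{ass:consistency}(ii) need not hold under $P_c$. I will handle this first by applying Theorem~\ref{thm:two-margins}(iii) under the alternative measure $P(\cdot\mid G=g)$ rather than $P_c$. The argument of Theorem~\ref{thm:two-margins}(iii) uses only treated-side moments: $\theta_R=\E_+[\alpha(D)Y]$ together with $Y=U+\tau(D)$ under $\Q$ and PT-R. It does not use $p<1$. If one prefers not to rely on that reading of the earlier proof, I will instead prove \eqref{eq:cohort-response-gain-id} directly in two steps:
\begin{itemize}
  \item[(1)] By Cauchy--Schwarz and the stated moments $\E[D^2+U_c^2+\tau_c(D)^2\mid G=g]<\infty$, we get $\Cov(D,Y_c\mid G=g)=\Cov(D,U_c\mid G=g)+\Cov\{D,\tau_c(D)\mid G=g\}$.
  \item[(2)] By iterated expectations and $\mathrm{PT\text{-}R}_{g,t}$, $\Cov(D,U_c\mid G=g)=\E[(D-\mu_{D,g})\E(U_c\mid D,G=g)\mid G=g]=0$.
\end{itemize}
Dividing by $V_g>0$ then yields \eqref{eq:cohort-response-gain-id}.
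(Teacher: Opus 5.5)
Your proposal is correct and follows the paper's proof in substance. The paper argues directly from $Y_c=U_c+\tau_c(D)$ on $\{G=g\}$ and $Y_c=U_c$ on $\{G>t\}$, and for (ii) uses exactly your covariance decomposition with $\mathrm{PT\text{-}R}_{g,t}$ setting $\Cov(D,U_c\mid G=g)=0$; your reduction of (i) to Theorem~\ref{thm:two-margins}(i) under $P_c$ repackages the same computation, and your observation that (ii) cannot be routed through $P_c$ because no control-group moments are assumed is accurate and explains why the cohort-only argument is the right one there.
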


Figure~\ref{fig:pt-roles} in Appendix~\ref{app:illustrations} illustrates the separate roles of the two restrictions in a stylized cohort-time cell. In Panel~A, PT-L identifies the level margin, while failure of PT-R leaves a covariance between dose and untreated trends in the response index. In Panel~B, PT-R removes that covariance, while failure of PT-L prevents the level contrast from identifying the level margin.

\section{Why separate margins}\label{sec:why-two}

A regression using both zero and positive doses combines the level contrast and the response index in a single coefficient.  This section derives the mixture, decomposes the response index into causal and selection components, and characterizes its representer by balancing and local reweighting.

\subsection{The continuous-dose regression as a mixture}

Let
\[
  \Delta_Y=\E[Y\mid T=1]-\E[Y\mid T=0],
  \qquad
  \beta_{\mathrm{TWFE}}=\frac{\Cov(D,Y)}{\Var(D)}.
\]
The latter is the OLS slope of the long-difference outcome on a constant and $D$.  Equivalently, it is the coefficient on $D_i\ind\{s=2\}$ in the two-period panel regression with unit and time effects.

\begin{theorem}[Two-period TWFE mixture]\label{thm:mixture}
  Suppose Assumptions~\ref{ass:sampling} and~\ref{ass:variation} hold and $D=0$ when $T=0$.  Then
  \begin{equation}\label{eq:share-formula}
    \beta_{\mathrm{TWFE}}
    =
    \frac{V_D\theta_R+\rho\mu_D\Delta_Y}
    {V_D+\rho\mu_D^2}
    =
    \lambda\theta_R
    +(1-\lambda)\frac{\Delta_Y}{\mu_D},
    \qquad
    \lambda=\frac{V_D}{V_D+\rho\mu_D^2}.
  \end{equation}
  Under Assumptions~\ref{ass:consistency} and \ref{ass:levelpt}, $\Delta_Y=\tau_L$, and hence
  \[
    \beta_{\mathrm{TWFE}}
    =
    \lambda\theta_R
    +(1-\lambda)\frac{\tau_L}{\mu_D}.
  \]
  Holding the conditional distributions of $(D,Y)$ given $T=1$ and of $Y$ given $T=0$ fixed while varying $\rho$ gives
  \begin{equation}\label{eq:share-derivative}
    \frac{\partial\beta_{\mathrm{TWFE}}}{\partial\rho}
    =
    \frac{\mu_DV_D(\Delta_Y-\mu_D\theta_R)}
    {(V_D+\rho\mu_D^2)^2}.
  \end{equation}
\end{theorem}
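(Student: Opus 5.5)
The plan is a direct moment computation: write $\Cov(D,Y)$ and $\Var(D)$ as between-group plus within-treated pieces, exploiting $D=0$ on $\{T=0\}$.

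\textbf{Variance.} By the law of total variance conditioning on $T$, and because $D$ is degenerate at zero given $T=0$,
\[
  \Var(D)=pV_D+p\rho\mu_D^2 .
\]
Assumption~\ref{ass:variation} and $p>0$ make this positive, so $\beta_{\mathrm{TWFE}}$ is well defined. Assumption~\ref{ass:sampling}(iii) makes every moment used here finite.

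\textbf{Covariance.} The law of total covariance gives
\[
  \Cov(D,Y)=\E[\Cov(D,Y\mid T)]+\Cov\{\E(D\mid T),\E(Y\mid T)\}.
\]
The first term equals $p\Cov_+(D,Y)=pV_D\theta_R$, since the conditional covariance vanishes on $\{T=0\}$. For the second term, $\E(D\mid T)=\mu_D T$. For any function $a(T)$, $\Cov\{T,a(T)\}=p\rho\{a(1)-a(0)\}$, so this term equals $\mu_D p\rho\Delta_Y$. Dividing by $\Var(D)$ and cancelling $p$ yields the first expression in \eqref{eq:share-formula}. Splitting the numerator as $V_D\theta_R+\rho\mu_D^2(\Delta_Y/\mu_D)$ gives the convex-combination form with the stated $\lambda\in(0,1]$.

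\textbf{Dividing by $\mu_D$.} This step is the main point needing care. It requires $\mu_D\neq0$, which holds because $D>0$ on $\{T=1\}$, so $\mu_D>0$. If one prefers to avoid this, the unnormalized first form is valid regardless.

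\textbf{Causal version.} The second display follows by substituting $\Delta_Y=\tau_L$ from Theorem~\ref{thm:two-margins}(i). That result uses Assumption~\ref{ass:sampling}(i),(ii), Assumption~\ref{ass:consistency}, and Assumption~\ref{ass:levelpt}.

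\textbf{Derivative in $\rho$.} Fixing the conditional laws holds $V_D$, $\mu_D$, $\theta_R$ and $\Delta_Y$ constant, so $\beta_{\mathrm{TWFE}}$ is a ratio of functions affine in $\rho$. The quotient rule gives the numerator
\[
  \mu_D\Delta_Y(V_D+\rho\mu_D^2)-\mu_D^2(V_D\theta_R+\rho\mu_D\Delta_Y)
  =\mu_D V_D(\Delta_Y-\mu_D\theta_R),
\]
because the $\rho$ terms cancel. Dividing by $(V_D+\rho\mu_D^2)^2$ gives \eqref{eq:share-derivative}.

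Nothing here is delicate beyond this bookkeeping: checking that $p$ cancels, and that the total-covariance decomposition with a binary $T$ yields the factor $p\rho$.
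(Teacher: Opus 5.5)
Your proposal is correct and follows essentially the paper's argument. The paper computes $\E[D]=p\mu_D$, $\E[D^2]=p(V_D+\mu_D^2)$ and $\E[DY]=p\{\Cov_+(D,Y)+\mu_D\E[Y\mid T=1]\}$ directly and then subtracts products of means, which is the same between/within bookkeeping you carry out through the laws of total variance and covariance; it then takes $\Delta_Y=\tau_L$ from Theorem~\ref{thm:two-margins}(i) and differentiates the ratio exactly as you do, and your explicit check that $\mu_D>0$ (because $T=\ind\{D>0\}$) is a small point the paper leaves implicit.
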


The first component of \eqref{eq:share-formula} is the within-treated response index, and the second is the level contrast $\Delta_Y$ per unit of mean treated dose. Dividing by $\mu_D$ puts both terms in units of outcome per dose.  The mixture weight depends on the dispersion and mean of treated doses and on the untreated share.  Holding the group-specific distributions fixed, an increase in the untreated share shifts weight from the response index toward the level contrast per unit of mean treated dose. As Equation~\eqref{eq:share-derivative} shows, the OLS coefficient can therefore change solely because the relative sizes of the two groups change, even though both components remain fixed. The coefficient is invariant to the untreated share only when $\Delta_Y/\mu_D=\theta_R$.

\begin{corollary}[Cellwise two-period TWFE mixture]\label{cor:share}
  Fix $c=(g,t)$. Suppose $P(G=g)>0$, $P(G>t)>0$,
  \[
    \E[Y_c^2+D_c^2\mid S_c=1]<\infty,
    \qquad
    V_g>0.
  \]
  Let
  \begin{equation}\label{eq:cell-twfe-components}
    \rho_{g,t}=P_c(G>t)
    =\frac{P(G>t)}{P(G=g)+P(G>t)},
  \end{equation}
  and let $\delta_{L,g,t}$ be the level contrast in \eqref{eq:cohort-level-contrast-def}.  The OLS slope of $Y_c$ on a constant and $D_c$ under $P_c$, equivalently the two-period TWFE coefficient for the comparison between cohort $g$ and units not yet treated, is
  \begin{equation}\label{eq:cell-twfe-mixture}
    \beta_{\mathrm{TWFE},g,t}
    =
    \lambda_{g,t}\theta_{R,g,t}
    +(1-\lambda_{g,t})\frac{\delta_{L,g,t}}{\mu_{D,g}},
    \qquad
    \lambda_{g,t}
    =
    \frac{V_g}{V_g+\rho_{g,t}\mu_{D,g}^2}.
  \end{equation}
  Under Assumption~\ref{ass:staggered} and the conditions of Theorem~\ref{thm:staggered}(i), $\delta_{L,g,t}=\tau_{L,g,t}$.
\end{corollary}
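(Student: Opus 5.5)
The plan is to reduce the corollary to Theorem~\ref{thm:mixture} by applying it under $P_c$ with the cellwise dictionary $(Y,T,D)=(Y_c,T_c,D_c)$. Under $P_c$ the control group is $\{G>t\}$, so $T_c=0$ exactly there, and there $D_c=T_cD=0$. This supplies the hypothesis ``$D=0$ when $T=0$''.

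The first step is to check Assumptions~\ref{ass:sampling} and~\ref{ass:variation} under $P_c$. Since $\{S_c=1\}$ is the disjoint union of $\{G=g\}$ and $\{G>t\}$ (because $g\le t$), we have $P(S_c=1)=P(G=g)+P(G>t)>0$, so $P_c$ is well defined. Then $p=P_c(T_c=1)=P(G=g)/\{P(G=g)+P(G>t)\}\in(0,1)$, using both positivity conditions. Its complement is exactly $\rho_{g,t}$ in \eqref{eq:cell-twfe-components}. The moment condition $\E[Y_c^2+D_c^2\mid S_c=1]<\infty$ is Assumption~\ref{ass:sampling}(iii) under $P_c$.

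The second step is to identify the conditional law $P_c(\cdot\mid T_c=1)$ with $P(\cdot\mid G=g)$. Because $\{T_c=1\}\subseteq\{S_c=1\}$, conditioning $P_c$ on $T_c=1$ is conditioning $P$ on $G=g$. Hence $\mu_D=\mu_{D,g}$, $V_D=V_g>0$, and $\theta_R=\theta_{R,g,t}$. By the same argument, $\E_{P_c}[Y_c\mid T_c=0]=\E[Y_c\mid G>t]$, so $\Delta_Y=\delta_{L,g,t}$. Substituting these quantities into \eqref{eq:share-formula} gives \eqref{eq:cell-twfe-mixture}, with $\lambda_{g,t}$ as stated.

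The third step is the equivalence with the two-period TWFE coefficient. I would invoke the remark before Theorem~\ref{thm:mixture}: with two periods $g-1$ and $t$ and unit fixed effects, the TWFE coefficient on $D_i\ind\{s=t\}$ equals the OLS slope of the long difference $Y_c$ on a constant and $D_c$, by within-unit differencing (Frisch--Waugh). This holds in the population restricted to $S_c=1$.

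The final claim, $\delta_{L,g,t}=\tau_{L,g,t}$, is exactly Theorem~\ref{thm:staggered}(i). The only point I expect to need care is the bookkeeping that the cohort and control events are disjoint and that conditioning nests correctly. I do not expect a genuine obstacle.
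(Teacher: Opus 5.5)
Your proposal is correct and follows essentially the paper's proof. The paper likewise repeats the population-moment calculation of Theorem~\ref{thm:mixture} under $P_c$, with cohort $g$ as the treated group, $G>t$ as the controls, $D_c=0$ on the controls, and $\Delta_Y=\delta_{L,g,t}$; it then substitutes $\rho_{g,t}$ and invokes Theorem~\ref{thm:staggered}(i) for the last claim. Your extra bookkeeping, namely the disjointness of $\{G=g\}$ and $\{G>t\}$ (since $g\le t$) and $P_c(\cdot\mid T_c=1)=P(\cdot\mid G=g)$, makes explicit what the paper leaves implicit.
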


Theorem~\ref{thm:mixture} applies to a generic two-period regression, and Corollary~\ref{cor:share} applies it to one cell at a time. As shown for binary treatment by \citet{goodmanbacon-2021}, pooled staggered TWFE can use already-treated cohorts as controls, so changes in their treatment effects enter the comparison. These comparisons are excluded from our cells comparing a treated cohort with units not yet treated, so characterizing the pooled continuous-dose coefficient would require a separate decomposition.

Figure~\ref{fig:two-margins-cell} in Appendix~\ref{app:illustrations} summarizes the two margins in outcome-level and long-difference views. Panel~A illustrates their distinct sources of variation and units. Panel~B visualizes the mixture in Corollary~\ref{cor:share}. The OLS slope using both treated and control units lies between the within-cohort dose slope and the level contrast divided by the mean treated dose.

\subsection{Causal interpretation of the response margin}

Under PT-R, the covariance in Theorem~\ref{thm:two-margins}(iii) may reflect both causal responses and selection on gains. The representation below combines the OLS derivative weight identity of \citet{yitzhaki-1996} with the decomposition in \citet{callaway-goodmanbacon-santanna-2026} of the realized dose effect derivative into a causal response and a selection term.

Define the response weight
\[
  W(d)
  =
  -\E_+[\alpha(D)\ind\{D\le d\}].
\]

\begin{lemma}[Nonnegative response weight]\label{lem:weight-nonnegative}
  Under Assumptions~\ref{ass:sampling}(i), (iii) and~\ref{ass:variation},
  \[
    W(d)\ge0\quad\text{for every }d.
  \]
\end{lemma}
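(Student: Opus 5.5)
We prove the claim with the standard Yitzhaki-type argument, using only the first balancing identity in \eqref{eq:baseline-balancing}. By Theorem~\ref{thm:two-margins}(ii), $\E_+[\alpha(D)]=0$, and $\alpha$ is well defined and integrable under $\Q$, since $0<V_D<\infty$ and $\E_+[D^2]<\infty$. Hence $W(d)$ is finite for every $d$, and we can rewrite it as
\[
  W(d)=-\E_+[\alpha(D)\ind\{D\le d\}]
  =\E_+[\alpha(D)\ind\{D>d\}].
\]
Because $\alpha(x)=(x-\mu_D)/V_D$ with $V_D>0$, the function $\alpha$ is strictly increasing and changes sign exactly at $\mu_D$. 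We therefore split into two cases according to the position of $d$ relative to $\mu_D$.

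\emph{Case $d\le\mu_D$.} On the event $\{D\le d\}$ we have $D\le\mu_D$, so $\alpha(D)\le0$. Hence $\alpha(D)\ind\{D\le d\}\le0$ pointwise, and therefore $W(d)=-\E_+[\alpha(D)\ind\{D\le d\}]\ge0$.

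\emph{Case $d>\mu_D$.} On the event $\{D>d\}$ we have $D>\mu_D$, so $\alpha(D)>0$. Hence $\alpha(D)\ind\{D>d\}\ge0$ pointwise, and the tail form gives $W(d)=\E_+[\alpha(D)\ind\{D>d\}]\ge0$.

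The only delicate step is justifying the tail representation. It requires $\E_+|\alpha(D)|<\infty$, which holds because $\E_+[D^2]<\infty$ under Assumption~\ref{ass:sampling}(i),(iii) implies $\E_+|D|<\infty$. The identity $\E_+[\alpha(D)]=0$ is taken directly from Theorem~\ref{thm:two-margins}(ii). No support restriction on $D$ is needed, so the argument covers every real $d$, including $d<0$ and $d$ above the support, where $W(d)=0$. As a remark, one could also write $W(d)=\E_+[(\mu_D-D)\ind\{D\le d\}]/V_D$ and note that $F(d)\bigl(\mu_D-\E_+[D\mid D\le d]\bigr)\ge0$, because a truncated-below mean cannot exceed the unconditional mean; the two-case argument above is more direct.
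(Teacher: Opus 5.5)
Your proof is correct and follows essentially the same route as the paper. Both split on the position of $d$ relative to $\mu_D$: in one case you use the sign of $D-\mu_D$ on $\{D\le d\}$, and in the other the centering identity $\E_+[\alpha(D)]=0$ converts the integral to the tail $\{D>d\}$. The only difference is that the paper assigns the boundary point $d=\mu_D$ to the tail case, which is immaterial.
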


To obtain a derivative representation, impose the following support, smoothness, and compatibility conditions.

\begin{samepage}
\begin{assumption}[Connected support, smoothness, and compatibility]\label{ass:connected}
  (i) The support of $D\mid T=1$ is a compact interval $[a,b]$, $a<b$; (ii) the effect surface
  \[
    \tau(u\mid d)=\E[\tau(u)\mid D=d,T=1]
  \]
  has a version that is continuously differentiable on an open neighborhood of the diagonal $\{(d,d):d\in[a,b]\}$; (iii) the chosen version, with $q(d)=\tau(d\mid d)$, satisfies
  \begin{equation}\label{eq:diagonal-compatibility}
    q(D)=\E[\tau(D)\mid D,T=1]
    \qquad \Q\text{-almost surely}.
  \end{equation}
\end{assumption}
\end{samepage}

Define
\begin{equation}\label{eq:acrt-selection}
  ACRT(d\mid d)
  =
  \left.\partial_u\tau(u\mid d)\right|_{u=d},
  \qquad
  S(d)
  =
  \left.\partial_{d'}\tau(d\mid d')\right|_{d'=d}.
\end{equation}
Here $ACRT(d\mid d)$ is the derivative of the conditional average treatment effect with respect to the assigned dose, evaluated at $u=d$, while $S(d)$ measures how average treatment effects vary across realized-dose groups.  The diagonal $q(d)=\tau(d\mid d)$ of the effect surface is continuously differentiable on $[a,b]$ under Assumption~\ref{ass:connected}, with $q'(d)=ACRT(d\mid d)+S(d)$ by the chain rule.

\begin{theorem}[Response weights and selection]\label{thm:response-decomp}
  Suppose Assumptions~\ref{ass:sampling}(i), (iii), \ref{ass:variation}, \ref{ass:consistency}, \ref{ass:dosept}, and \ref{ass:connected} hold.  Then
  \begin{equation}\label{eq:W-normalized}
    W(d)\ge0,
    \qquad
    \int_a^bW(d)\dd d=1,
  \end{equation}
  \begin{equation}\label{eq:qprime-representation}
    \theta_R=\int_a^bW(d)q'(d)\dd d,
  \end{equation}
  and
  \begin{equation}\label{eq:selection-decomp}
    \theta_R
    =
    \theta_R^{\mathrm{causal}}
    +B_R^{\mathrm{selection}},
  \end{equation}
  where
  \[
    \theta_R^{\mathrm{causal}}
    =
    \int_a^bW(d)ACRT(d\mid d)\dd d,
  \]
  \begin{equation}\label{eq:selection-part}
    B_R^{\mathrm{selection}}
    =
    \int_a^bW(d)S(d)\dd d.
  \end{equation}
  If $S(d)=0$ almost everywhere, then $\theta_R=\theta_R^{\mathrm{causal}}$.  If $S(d)\ge0$ almost everywhere, then $\theta_R^{\mathrm{causal}}\le\theta_R$, with the inequality reversed when $S(d)\le0$.  Finally, if $|S(d)|\le\bar s(d)$,
  \begin{equation}\label{eq:sensitivity-bound}
    |\theta_R-\theta_R^{\mathrm{causal}}|
    \le
    \int_a^bW(d)\bar s(d)\dd d.
  \end{equation}
\end{theorem}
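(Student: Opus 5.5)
The plan is to reduce everything to the diagonal function $q$ and then apply a Yitzhaki-type integration-by-parts identity. First, by Theorem~\ref{thm:two-margins}(iii), PT-R gives $\theta_R=\Cov_+\{D,\tau(D)\}/V_D=\E_+[\alpha(D)\tau(D)]$. By the tower property and the compatibility condition \eqref{eq:diagonal-compatibility}, $\E_+[\alpha(D)\tau(D)]=\E_+[\alpha(D)\E\{\tau(D)\mid D,T=1\}]=\E_+[\alpha(D)q(D)]$. This step is justified because $\alpha(D)$ is square integrable and $\tau(D)$ is square integrable under $\Q$ by Assumption~\ref{ass:consistency}(ii) and $p>0$. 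So the problem becomes purely about the deterministic, continuously differentiable function $q$ on $[a,b]$.

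Next, I would write $q(D)=q(a)+\int_a^b q'(d)\ind\{d<D\}\dd d$ for $D\in[a,b]$, which holds $\Q$-a.s. by Assumption~\ref{ass:connected}(i). Multiplying by $\alpha(D)$ and taking $\E_+$, the constant term drops by $\E_+[\alpha(D)]=0$ from \eqref{eq:baseline-balancing}. Fubini is legitimate because $q'$ is continuous, hence bounded, on the compact interval and $\alpha(D)$ is integrable. This gives $\theta_R=\int_a^b q'(d)\E_+[\alpha(D)\ind\{D>d\}]\dd d$. Using $\E_+[\alpha(D)]=0$ again, $\E_+[\alpha(D)\ind\{D>d\}]=-\E_+[\alpha(D)\ind\{D\le d\}]=W(d)$, which yields \eqref{eq:qprime-representation}.

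Nonnegativity of $W$ is Lemma~\ref{lem:weight-nonnegative}. For the normalization, I would apply the identity just derived to $q(d)=d$, which is admissible since the derivation used only that $q$ is $C^1$ on $[a,b]$. Then $\int_a^b W(d)\dd d=\E_+[\alpha(D)D]=1$ by \eqref{eq:baseline-balancing}. An alternative is to note directly that $W(d)=0$ for $d<a$ and for $d\ge b$, and to use the layer-cake formula.

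The decomposition \eqref{eq:selection-decomp} follows by substituting the chain-rule identity $q'(d)=ACRT(d\mid d)+S(d)$ stated after Assumption~\ref{ass:connected} and splitting the integral. Both pieces are finite, because $ACRT$ and $S$ are continuous on $[a,b]$ as partial derivatives of a $C^1$ surface restricted to the diagonal. The sign statements follow from $W\ge0$: $B_R^{\mathrm{selection}}$ is zero, nonnegative, or nonpositive as $S$ is. The bound \eqref{eq:sensitivity-bound} follows from $|B_R^{\mathrm{selection}}|\le\int_a^b W|S|\le\int_a^b W\bar s$.

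The main technical care lies in the first step. The version of the effect surface must be tied to the observed covariance only through \eqref{eq:diagonal-compatibility}. One must also check that $q$ is genuinely $C^1$ on the closed interval, including the endpoints, via the open neighborhood of the diagonal. This is what justifies the fundamental theorem of calculus representation of $q(D)$.
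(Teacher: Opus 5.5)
Your proposal is correct and follows the paper's proof essentially step for step. You reduce $\theta_R$ to $\E_+[\alpha(D)q(D)]$ using PT-R, iterated expectations, and the compatibility condition~\eqref{eq:diagonal-compatibility}. You then get \eqref{eq:qprime-representation} from the fundamental theorem of calculus and Fubini; the paper packages that step as Lemma~\ref{lem:signed-connected} applied to $\nu(B)=\E_+[\alpha(D)\ind\{D\in B\}]$, whereas you write the argument out inline. The normalization comes from taking $q(d)=d$, and you finish with $q'(d)=ACRT(d\mid d)+S(d)$ and $W\ge0$, exactly as the paper does.
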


For an absolutely continuous dose, $W$ is the nonnegative OLS derivative weight of \citet{yitzhaki-1996}.  Once PT-R isolates $q$, the pointwise identity $q'(d)=ACRT(d\mid d)+S(d)$ is the decomposition in CGBS, and Theorem~\ref{thm:response-decomp} integrates that identity against $W$.

The restriction $S(d)=0$ is a local no-selection condition that makes the response index a weighted average of causal derivatives without requiring treatment effects to coincide across all realized-dose groups. Equation~\eqref{eq:sensitivity-bound} expresses departures from this condition in outcome units per unit of dose. A useful starting point is a constant envelope, $\bar s(d)=s_0$, which bounds the local variation in average treatment effects across neighboring realized-dose groups, holding the assigned dose fixed. For example, participants choosing more intensive training may have higher returns even when assigned the same training intensity.

Since the response weights integrate to one, the constant envelope implies $\theta_R^{\mathrm{causal}}\in[\theta_R-s_0,\theta_R+s_0]$.  If $[L_R,U_R]$ is an asymptotically valid confidence interval for $\theta_R$, then $[L_R-s_0,U_R+s_0]$ is a conservative asymptotic confidence interval for $\theta_R^{\mathrm{causal}}$ under the selection bound and the assumptions of Theorem~\ref{thm:response-decomp}, and the criterion $L_R-s_0>r_{\min}$ accounts for both sampling uncertainty and the allowed selection when assessing whether the average causal response exceeds a prespecified threshold $r_{\min}$.  Reporting this criterion over several values of $s_0$ shows how much selection on gains the conclusion tolerates, as \citet{rambachan-roth-2023} recommend for violations of parallel trends in binary-treatment DiD, and the additive bound itself parallels the bounded differential trends of \citet{manski-pepper-2018}.  Here PT-R is maintained and $s_0$ bounds selection on gains across realized-dose groups rather than untreated trends, so pre-trend tests do not bound it without an additional assumption linking the two.

\begin{remark}[Relation to CGBS]\label{rem:cgbs-weights}
  \citet[Theorem~3.4(a)]{callaway-goodmanbacon-santanna-2026} decompose $\beta_{\mathrm{TWFE}}$ in Theorem~\ref{thm:mixture} under their dose-specific parallel-trends condition~\eqref{eq:all-dose-pt} and an absolutely continuous treated dose with support $(a,b)$.  With $q'(d)=ACRT(d\mid d)+S(d)$ as in Theorem~\ref{thm:response-decomp}, their decomposition is
  \[
    \begin{gathered}
      \beta_{\mathrm{TWFE}}
      =
      \int_a^b w_1(d)\,q'(d)\dd d
      +w_0\,\frac{q(a)}{a},\\
      w_1(d)=\frac{\E[(D-\E[D])\ind\{D\ge d\}]}{\Var(D)},
      \qquad
      w_0=\frac{(\mu_D-\E[D])\,p\,a}{\Var(D)},
    \end{gathered}
  \]
  where $q(a)$ is the average effect of the lowest dose for the lowest-dose group.  A direct calculation in Appendix~\ref{app:mainproofs} expresses their weights in the notation of Theorem~\ref{thm:mixture}:
  \begin{equation}\label{eq:cgbs-weight-identity}
    w_1(d)
    =
    \lambda W(d)
    +(1-\lambda)\frac{\Q(D>d)}{\mu_D}
    \quad\text{for }d\in(a,b),
    \qquad
    w_0=(1-\lambda)\frac{a}{\mu_D}.
  \end{equation}
  Their causal-response weight at each dose is therefore a mixture of the within-treated response weight $W$ and a level component proportional to the treated-dose survival function.  Under the hypotheses of Theorem~\ref{thm:response-decomp} and PT-L, integrating the second component against $q'$ gives $(1-\lambda)\{\tau_L-q(a)\}/\mu_D$, and the lowest-dose term contributes $(1-\lambda)q(a)/\mu_D$, and their sum is the second term of \eqref{eq:share-formula}.  The level margin thus enters their decomposition through both terms, whereas \eqref{eq:share-formula} collects it in one term and leaves the response term with weights that do not depend on the untreated share.  CGBS observe that $w_1$ depends on the size of the untreated group.  In \eqref{eq:cgbs-weight-identity} that dependence runs only through $\lambda$, and $\rho\to0$ gives $\lambda\to1$ and $\beta_{\mathrm{TWFE}}\to\theta_R$.  Their Remark~3.2 on designs without untreated units is the case $\lambda=1$, in which the coefficient equals $\theta_R$ and their decomposition reduces to Theorem~\ref{thm:response-decomp}, proved there under Assumption~\ref{ass:connected} without a dose density.
\end{remark}

\subsection{Balancing and local reweighting}

The response index is the linear functional $\theta_R=\E_+[\alpha(D)Y]$ of the treated outcome distribution.  This subsection gives two characterizations of the representer $\alpha$ that use neither the derivative representation nor Assumption~\ref{ass:connected}.

The first concerns the two restrictions in \eqref{eq:baseline-balancing}.  Define
\begin{equation}\label{eq:balancing-class-main}
  \mathcal A_R
  =
  \left\{
  a\in L_2(\Q_D):
  \E_+[a(D)]=0,
  \quad
  \E_+[a(D)D]=1
  \right\}.
\end{equation}
For $a\in\mathcal A_R$, the functional $\E_+[a(D)Y]$ is invariant to adding a constant to $Y$ and equals one when $Y=D$.  By Theorem~\ref{thm:two-margins}(ii), $\alpha\in\mathcal A_R$.

\begin{proposition}[Minimum-norm balancing weight]\label{prop:minnorm}
  Under Assumptions~\ref{ass:sampling}(i), (iii) and~\ref{ass:variation}, $\alpha$ is the unique solution to
  \[
    \min_{a\in\mathcal A_R}\E_+[a(D)^2],
  \]
  and
  \[
    \min_{a\in\mathcal A_R}\E_+[a(D)^2]=\frac1{V_D}.
  \]
  If $Y=m(D)+\varepsilon$ under $\Q$, with $\E_+[\varepsilon\mid D]=0$ and $\E_+[\varepsilon^2\mid D]=\sigma^2$ for some $0<\sigma^2<\infty$, the same representer uniquely minimizes $\Var_+\{a(D)\varepsilon\}$ over $\mathcal A_R$.
\end{proposition}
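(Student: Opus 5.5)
The plan is to use a Hilbert-space orthogonality argument in $L_2(\Q_D)$. Theorem~\ref{thm:two-margins}(ii) gives $\alpha\in\mathcal A_R$. Note also that $\alpha$ is affine in $D$, so $\alpha\in L_2(\Q_D)$: Assumption~\ref{ass:sampling}(i),(iii) give $\E_+[D^2]<\infty$, and Assumption~\ref{ass:variation} gives $0<V_D<\infty$.

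For any $a\in\mathcal A_R$, write $a=\alpha+h$ with $h=a-\alpha$. Both $a$ and $\alpha$ satisfy the two linear constraints, so $h$ satisfies the homogeneous versions, $\E_+[h(D)]=0$ and $\E_+[h(D)D]=0$. Since $\alpha(D)=(D-\mu_D)/V_D$ is a linear combination of $1$ and $D$, it follows that $\E_+[\alpha(D)h(D)]=0$. Hence
\[
  \E_+[a(D)^2]=\E_+[\alpha(D)^2]+\E_+[h(D)^2]\ge\E_+[\alpha(D)^2],
\]
with equality if and only if $h=0$ $\Q_D$-almost surely. This gives existence and uniqueness (as an element of $L_2(\Q_D)$). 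The minimum value is then a direct computation: $\E_+[\alpha(D)^2]=\E_+[(D-\mu_D)^2]/V_D^2=1/V_D$.

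For the variance statement, fix $a\in\mathcal A_R$. By iterated expectations, $\E_+[a(D)\varepsilon]=\E_+[a(D)\E_+(\varepsilon\mid D)]=0$. Therefore
\[
  \Var_+\{a(D)\varepsilon\}=\E_+[a(D)^2\varepsilon^2]=\E_+[a(D)^2\sigma^2]=\sigma^2\E_+[a(D)^2].
\]
Because $\sigma^2\in(0,\infty)$ is a fixed positive constant, minimizing this variance over $\mathcal A_R$ is equivalent to minimizing $\E_+[a(D)^2]$. The first part then gives the same unique minimizer $\alpha$.

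The main obstacle is integrability in the conditional-expectation step, which requires $a(D)\varepsilon$ to be integrable and its second moment finite. Finiteness holds because $\E_+[a(D)^2\varepsilon^2]=\sigma^2\E_+[a(D)^2]<\infty$, computed by conditioning the nonnegative integrand on $D$ (Tonelli). This justifies both the mean-zero claim and the variance formula. Uniqueness is understood up to $\Q_D$-null sets, as is standard in $L_2(\Q_D)$.
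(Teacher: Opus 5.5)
Your proof is correct, but it reaches the result by a different, though closely related, route. The paper uses Cauchy--Schwarz. For $a\in\mathcal A_R$, the centering constraint turns the normalization into $1=\E_+[a(D)(D-\mu_D)]\le\{\E_+[a(D)^2]V_D\}^{1/2}$, which gives the bound $1/V_D$ directly. The function $\alpha$ attains this bound, and uniqueness comes from the equality case: $a$ must be proportional to $D-\mu_D$, with the constant fixed by normalization. You instead write $a=\alpha+h$, observe that $h$ satisfies the homogeneous constraints and is therefore orthogonal to $\operatorname{span}\{1,D\}\ni\alpha$, and apply Pythagoras.

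Your route yields the exact identity $\E_+[a(D)^2]=1/V_D+\E_+[\{a(D)-\alpha(D)\}^2]$ for every feasible $a$. Uniqueness then needs no equality-case analysis. Under homoskedasticity, the identity also quantifies the variance penalty of any other balancing weight as $\sigma^2\E_+[\{a(D)-\alpha(D)\}^2]$. It further shows $\alpha$ as the usual minimum-norm point of an affine set defined by finitely many linear constraints, namely the feasible element in the span of the constraint functions. The paper's route is shorter and derives the bound without positing a candidate first.

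Your treatment of the variance statement matches the paper's. You add the correct justification that $a(D)\varepsilon$ is square integrable, which licenses the iterated-expectation step giving $\E_+[a(D)\varepsilon]=0$.
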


The least-squares estimands of \citet{hines-diazordaz-vansteelandt-2026} impose the same two restrictions. Appendix~\ref{app:response-characterizations} compares the two criteria.

The second characterization reweights the treated-dose distribution.  Let $\mu(d)=\E_+[Y\mid D=d]$.  For $\varepsilon$ near zero, define
\begin{equation}\label{eq:tilt-law}
  \frac{\dd\Q_\varepsilon}{\dd\Q_D}(d)
  =
  \frac{\exp(\varepsilon d)}{M(\varepsilon)},
  \qquad
  M(\varepsilon)=\E_+[\exp(\varepsilon D)],
\end{equation}
and
\[
  \mathcal Y(\varepsilon)=\int\mu(d)\Q_\varepsilon(\dd d),
  \qquad
  \mathcal D(\varepsilon)=\int d\,\Q_\varepsilon(\dd d),
\]
the mean outcome and mean dose under $\Q_\varepsilon$ when the conditional distribution of $Y$ given $D$ is held fixed.  A positive $\varepsilon$ shifts mass toward higher observed doses.

\begin{proposition}[Local tilt interpretation]\label{prop:tilt}
  Suppose Assumptions~\ref{ass:sampling}(i), (iii) and~\ref{ass:variation} hold and $\E_+[\exp(\varepsilon_0|D|)]<\infty$ for some $\varepsilon_0>0$.  Then $\mathcal Y$ and $\mathcal D$ are differentiable at $\varepsilon=0$, with
  \[
    \mathcal Y'(0)=\Cov_+(D,Y),
    \qquad
    \mathcal D'(0)=V_D,
  \]
  so
  \begin{equation}\label{eq:tilt-interpretation}
    \theta_R
    =
    \frac{\mathcal Y'(0)}{\mathcal D'(0)}
    =
    \left.\frac{\dd\mathcal Y}{\dd\mathcal D}\right|_{\varepsilon=0}.
  \end{equation}
\end{proposition}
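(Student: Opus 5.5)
The plan is to differentiate $M$, $\mathcal Y$ and $\mathcal D$ under the integral sign at $\varepsilon=0$, justifying each interchange by dominated convergence using the exponential moment. Write $\Q_D$ for the treated-dose law. Then
\[
  \mathcal Y(\varepsilon)=\frac{N_Y(\varepsilon)}{M(\varepsilon)},\qquad N_Y(\varepsilon)=\E_+[\mu(D)e^{\varepsilon D}],
\]
and $\mathcal D(\varepsilon)=N_D(\varepsilon)/M(\varepsilon)$ with $N_D(\varepsilon)=\E_+[De^{\varepsilon D}]$. We have $M(0)=1$, $N_Y(0)=\E_+[\mu(D)]=\E_+[Y]$, and $N_D(0)=\mu_D$. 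Assumption~\ref{ass:sampling}(i),(iii) gives $\E_+[Y^2+D^2]<\infty$. By conditional Jensen, $\E_+[\mu(D)^2]\le\E_+[Y^2]<\infty$. The quotient rule at $0$ then yields
\[
  \mathcal Y'(0)=N_Y'(0)-N_Y(0)M'(0),
\]
and likewise for $\mathcal D$, provided the three numerators are differentiable at $0$ with
\[
  M'(0)=\mu_D,\qquad N_Y'(0)=\E_+[\mu(D)D],\qquad N_D'(0)=\E_+[D^2].
\]
Granting these, $\mathcal Y'(0)=\E_+[\mu(D)D]-\E_+[Y]\mu_D=\Cov_+(D,\mu(D))=\Cov_+(D,Y)$ by the tower property, and $\mathcal D'(0)=\E_+[D^2]-\mu_D^2=V_D$.

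The main step is the interchange of derivative and expectation. Fix $0<\eta<\varepsilon_0$, so that $|\varepsilon|\le\eta/2$ keeps us inside the region of finite exponential moments. By the mean value theorem, $|(e^{\varepsilon d}-1)/\varepsilon|\le|d|e^{|\varepsilon||d|}\le|d|e^{\eta|d|/2}$. So the difference quotients of $M$, $N_Y$ and $N_D$ are dominated by $|D|e^{\eta|D|/2}$, $|\mu(D)||D|e^{\eta|D|/2}$ and $D^2e^{\eta|D|/2}$, respectively.

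\begin{itemize}
  \item The first and third envelopes are integrable, since $|d|^ke^{\eta|d|/2}\le C_ke^{\eta|d|}$ and $\E_+[e^{\varepsilon_0|D|}]<\infty$.
  \item For the second, apply Cauchy--Schwarz:
  \[
    \E_+\bigl[|\mu(D)||D|e^{\eta|D|/2}\bigr]\le\bigl(\E_+[\mu(D)^2]\bigr)^{1/2}\bigl(\E_+[D^2e^{\eta|D|}]\bigr)^{1/2}.
  \]
  The right side is finite provided $\eta<\varepsilon_0$, after shrinking $\eta$ slightly so that $D^2e^{\eta|D|}\le Ce^{\varepsilon_0|D|}$.
\end{itemize}

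Dominated convergence then gives the three derivative formulas. The pointwise limits of the quotients are $D$, $\mu(D)D$ and $D^2$. The functions $M$, $N_Y$ and $N_D$ are finite near $0$ by the same bounds. $M$ is positive and continuous at $0$, so the quotients $\mathcal Y$ and $\mathcal D$ are differentiable there.

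Finally, Assumption~\ref{ass:variation} gives $\mathcal D'(0)=V_D>0$, so $\mathcal Y'(0)/\mathcal D'(0)=\Cov_+(D,Y)/V_D=\theta_R$ by the definition of the response index. This ratio is also the derivative of $\mathcal Y$ with respect to $\mathcal D$ along the path. Since $\mathcal D'(0)\ne0$, $\mathcal D$ is locally strictly monotone near $0$ in the sense of the derivative. The notation $\dd\mathcal Y/\dd\mathcal D|_{\varepsilon=0}$ therefore means the ratio of derivatives along the parametrized curve, which is the chain-rule reading of \eqref{eq:tilt-interpretation}. No inverse-function argument beyond differentiability at $0$ is needed, since the limit of $\{\mathcal Y(\varepsilon)-\mathcal Y(0)\}/\{\mathcal D(\varepsilon)-\mathcal D(0)\}$ as $\varepsilon\to0$ is the ratio of the two derivatives whenever the denominator's derivative is nonzero.
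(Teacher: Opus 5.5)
Your proof is correct and follows the paper's route. Both arguments use dominated convergence with an exponential-moment envelope, Jensen's inequality to get $\E_+[\mu(D)^2]\le\E_+[Y^2]$, and Cauchy--Schwarz to make the $\mu(D)$ term integrable; the only cosmetic difference is that you apply the quotient rule to $N_Y/M$ and $N_D/M$, whereas the paper differentiates the likelihood ratio $e^{\varepsilon d}/M(\varepsilon)$ directly to obtain $d-\mu_D$, which is the same computation.
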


Along this path, $\theta_R$ is the derivative of the mean outcome with respect to the mean dose at the observed treated-dose distribution.  The path changes the weight on units at each realized dose and does not change any unit's dose.  Appendix~\ref{app:response-characterizations} contains the proofs, the comparison with the stochastic-policy tilts of \citet{jetsupphasuk-et-al-2025}, and the comparison with the global average derivative of \citet{callaway-goodmanbacon-santanna-2026}.

\section{Cohort-specific and event-time parameters}\label{sec:paths}

The preceding representations apply separately to every cohort-time cell. Because a cohort's dose distribution is common across outcome dates, this section states their cohort forms, whose representers and weights are common over $t$, and then defines event-time aggregates.

For cohort $g$, define
\[
  W_g(d)
  =
  -\E[\alpha_g(D)\ind\{D\le d\}\mid G=g].
\]
When a derivative interpretation is invoked, let the support of $D\mid G=g$ be $[a_g,b_g]$, define
\[
  \tau_{g,t}(u\mid d)
  =
  \E[Y_t(g,u)-Y_t(\infty,0)\mid D=d,G=g],
  \qquad
  q_{g,t}(d)=\tau_{g,t}(d\mid d),
\]
and define $ACRT_{g,t}(d\mid d)$ and $S_{g,t}(d)$ as in \eqref{eq:acrt-selection}.

For the cohort tilt, let $\mu_{g,t}(d)=\E[Y_c\mid D=d,G=g]$ and define
\begin{equation*}% \label{eq:cohort-tilt-law-0}
  \frac{\dd P_{g,\varepsilon}}{\dd P(\,\cdot\mid G=g)}(d)
  =
  \frac{\exp(\varepsilon d)}
  {\E[\exp(\varepsilon D)\mid G=g]}.
\end{equation*}
The induced outcome and dose paths are
\begin{equation}\label{eq:cohort-tilt-paths-0}
  \mathcal Y_{g,t}(\varepsilon)
  =
  \int\mu_{g,t}(d)P_{g,\varepsilon}(\dd d),
  \qquad
  \mathcal D_g(\varepsilon)
  =
  \int d\,P_{g,\varepsilon}(\dd d).
\end{equation}

\begin{corollary}[Cohort balancing, tilt, and response weights]\label{cor:cohort-interpretations}
  For a fixed cohort $g$ with $V_g>0$:
  \begin{enumerate}[label=(\roman*),leftmargin=*]
    \item $\alpha_g$ uniquely minimizes
          \[
            \E[\alpha(D)^2\mid G=g]
          \]
          over square-integrable $\alpha$ satisfying $\E[\alpha(D)\mid G=g]=0$ and $\E[\alpha(D)D\mid G=g]=1$, with minimum $1/V_g$.
    \item If $\E[Y_c^2\mid G=g]<\infty$ and $\E[\exp(\varepsilon_0|D|)\mid G=g]<\infty$ for some $\varepsilon_0>0$, then the paths in \eqref{eq:cohort-tilt-paths-0} are differentiable at zero and
          \[
            \theta_{R,g,t}
            =
            \frac{\mathcal Y_{g,t}'(0)}{\mathcal D_g'(0)}.
          \]
    \item $\alpha_g$ and $W_g$ are common to every $t\ge g$, $W_g(d)\ge0$, and, under connected support,
          \[
            \int_{a_g}^{b_g}W_g(d)\dd d=1.
          \]
          If the conditions of Theorem~\ref{thm:staggered}(ii) hold and Assumption~\ref{ass:connected} holds for the cohort distribution $P(\,\cdot\mid G=g)$, with support $[a_g,b_g]$ and effect surface $\tau_{g,t}(\cdot\mid\cdot)$, then
          \[
            \theta_{R,g,t}
            =
            \int_{a_g}^{b_g}W_g(d)
            \{ACRT_{g,t}(d\mid d)+S_{g,t}(d)\}\dd d.
          \]
          The sign and sensitivity conclusions in Theorem~\ref{thm:response-decomp} apply cellwise.
  \end{enumerate}
\end{corollary}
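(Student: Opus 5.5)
The plan is to reduce each part to the corresponding generic two-period result, applied to the cohort distribution $P(\,\cdot\mid G=g)$ in the role of $\Q$. Set $\Q=P(\,\cdot\mid G=g)$, so that $\mu_D=\mu_{D,g}$, $V_D=V_g$, $\alpha=\alpha_g$, and $W=W_g$. The generic results need $p>0$, finite second moments under $\Q$, and $V_D>0$. All of these are supplied by conditioning on a cohort with $P(G=g)>0$, so that $P(\,\cdot\mid G=g)$ is well defined, and $V_g>0$, which presupposes $\E[D^2\mid G=g]<\infty$. The control group plays no role in any part, so Assumption~\ref{ass:sampling}(ii) is never needed.

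For part (i), I would invoke Proposition~\ref{prop:minnorm} with $\Q_D$ the law of $D$ given $G=g$. That proposition involves only the dose marginal, so it yields uniqueness and the minimum value $1/V_g$ directly. If one prefers a self-contained argument, write $a=\alpha_g+h$ with $\E[h(D)\mid G=g]=0=\E[h(D)D\mid G=g]$. Then $\E[\alpha_g h\mid G=g]=0$, because $\alpha_g$ is affine in $D$. Hence $\E[a^2\mid G=g]=1/V_g+\E[h^2\mid G=g]$.

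For part (ii), I would apply Proposition~\ref{prop:tilt} with $Y=Y_c$ and $\mu=\mu_{g,t}$. Its hypotheses are exactly the stated moment and exponential-moment conditions under $P(\,\cdot\mid G=g)$. The quotient $\mathcal Y'_{g,t}(0)/\mathcal D'_g(0)=\Cov(D,Y_c\mid G=g)/V_g=\theta_{R,g,t}$ then follows from \eqref{eq:cohort-response-0-def}.

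For part (iii), $t$-invariance is immediate. $\alpha_g$ depends only on $\mu_{D,g}$ and $V_g$, and $W_g(d)=-\E[\alpha_g(D)\ind\{D\le d\}\mid G=g]$ depends only on the law of $D\mid G=g$, which does not involve $t$ since the dose is time-invariant. Nonnegativity is Lemma~\ref{lem:weight-nonnegative} applied under $P(\,\cdot\mid G=g)$.

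The integral identity $\int W_g=1$ needs care, because Theorem~\ref{thm:response-decomp} as stated assumes PT-R and Assumption~\ref{ass:connected}, whereas here only connected support is assumed. I would prove it directly rather than route it through that theorem. Since $\E[\alpha_g(D)\mid G=g]=0$, we have $W_g(d)=\E[\alpha_g(D)\ind\{D>d\}\mid G=g]$. Fubini's theorem then gives
\[
\int_{a_g}^{b_g}W_g(d)\dd d=\E\bigl[\alpha_g(D)(D-a_g)\bigm| G=g\bigr]=1,
\]
where the last equality uses \eqref{eq:baseline-balancing}. Fubini is justified because $\E[|\alpha_g(D)|(b_g-a_g)\mid G=g]<\infty$.

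For the derivative representation, I would verify the hypotheses of Theorem~\ref{thm:response-decomp} for the cell. Under $P(\,\cdot\mid G=g)$, Theorem~\ref{thm:staggered}(ii) supplies $\mathrm{PT\text{-}R}_{g,t}$ and the moment bounds on $D$, $U_c$, and $\tau_c(D)$. Assumption~\ref{ass:staggered} gives $Y_c=U_c+\tau_c(D)$ on $\{G=g\}$, which is the consistency needed. Assumption~\ref{ass:connected} is assumed outright for the cohort. One subtlety is that the generic theorem is stated on the cell population $P_c$, with $T=\ind\{D>0\}$. Restricted to $T_c=1$, however, all of its ingredients ($\Q$, $\alpha$, $W$, PT-R, and the effect surface) coincide with their cohort versions, and none involves the controls.

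The conclusion then reads $\theta_{R,g,t}=\int W_g\,q_{g,t}'$, and the chain-rule identity $q_{g,t}'=ACRT_{g,t}+S_{g,t}$ gives the displayed formula. The sign and sensitivity statements carry over verbatim, since they use only $W_g\ge0$ and $\int W_g=1$.

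I expect the main obstacle to be this bookkeeping step: checking that Theorem~\ref{thm:response-decomp}, stated with Assumption~\ref{ass:sampling}(iii) on the full cell, needs only the treated-conditional moments provided by Theorem~\ref{thm:staggered}(ii). If it does not apply verbatim, I would instead apply it to an auxiliary two-period problem built on the cohort alone with an artificial control mass. Such a construction leaves every treated-conditional quantity unchanged.
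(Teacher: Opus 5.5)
Your proposal is correct and matches the paper's proof. The paper likewise applies Propositions~\ref{prop:minnorm} and~\ref{prop:tilt}, Lemma~\ref{lem:weight-nonnegative}, and the argument of Theorem~\ref{thm:response-decomp} under $P(\,\cdot\mid G=g)$ using only cohort moments, and your Fubini computation of $\int W_g=1$ is Lemma~\ref{lem:signed-connected} with $q(d)=d$.

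The artificial control mass is unnecessary. Theorem~\ref{thm:response-decomp} does not invoke Assumption~\ref{ass:sampling}(ii), so you may take $P(\,\cdot\mid G=g)$ as the whole population with $p=1$. Its moment and consistency hypotheses then reduce exactly to those of Theorem~\ref{thm:staggered}(ii) and Assumption~\ref{ass:staggered}.
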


Because $W_g$ depends only on $D\mid G=g$, changes over $t$ in a cohort's response path reflect changes in the relation between outcome and dose under a fixed weighting rule.

For event time $e=t-g$, choose a nonempty cohort set $\mathcal G_e$, with $(g,g+e)\in\mathcal C$ for every included cohort, together with a weighting rule.  Let $w_{g,e}$ denote the resulting population weights, which sum to one over $g\in\mathcal G_e$. Define
\begin{equation*}% \label{eq:event-aggregates}
  \Theta_{L,e}
  =
  \sum_{g\in\mathcal G_e}w_{g,e}\tau_{L,g,g+e},
  \qquad
  \Theta_{R,e}
  =
  \sum_{g\in\mathcal G_e}w_{g,e}\theta_{R,g,g+e}.
\end{equation*}
Only the cohort set and the weighting rule need to be fixed in advance, because the population weights are part of the target, and the numerical weights may be known constants or regular functionals of the treatment-timing distribution.  A common cohort set across event times holds cohort composition fixed, so the aggregates are comparable over $e$.

\section{Estimation and stacked inference}\label{sec:baseline-estimation}

These targets are functions of cohort means, dose moments, and within-cohort covariances, so they admit closed-form estimators.  Stacking their unit-level influence functions then delivers joint inference across margins and cohort-time cells while allowing arbitrary serial dependence within a unit.\footnote{Stacking here refers to the joint treatment of unit-level influence functions across cells and margins for inference.  It is unrelated to the stacked-regression estimator of \citet{cengiz-et-al-2019}, which appends cohort-specific event-study samples and estimates a single regression.}

\subsection{Direct estimators}

For each fixed cell $c=(g,t)$, let $n_g=\sum_i\ind\{G_i=g\}$ and $n_{0,g,t}=\sum_i\ind\{G_i>t\}$.  Define the relevant sample means by
\[
  \overline Y_{c,g}=\frac1{n_g}\sum_{i:G_i=g}Y_{c,i},
  \qquad
  \overline Y_{c,0}=\frac1{n_{0,g,t}}\sum_{i:G_i>t}Y_{c,i},
  \qquad
  \overline D_g=\frac1{n_g}\sum_{i:G_i=g}D_i.
\]
The estimators target the level contrast $\delta_{L,g,t}$ in \eqref{eq:cohort-level-contrast-def} and the response index $\theta_{R,g,t}$ in \eqref{eq:cohort-response-0-def}; under the conditions of Theorem~\ref{thm:staggered}(i) the former equals the level margin.  The estimators are
\[
  \widehat\delta_{L,g,t}
  =
  \overline Y_{c,g}-\overline Y_{c,0},
\]
and
\[
  \widehat\theta_{R,g,t}
  =
  \frac{
    \sum_{i:G_i=g}(D_i-\overline D_g)
    (Y_{c,i}-\overline Y_{c,g})
  }{
    \sum_{i:G_i=g}(D_i-\overline D_g)^2
  }.
\]
The response estimator is the cohort-$g$ OLS slope with an intercept.  Compute
\[
  \widehat V_g
  =
  \frac1{n_g}\sum_{i:G_i=g}(D_i-\overline D_g)^2
\]
once per cohort and reuse it for every $t$.

\subsection{Panel-level influence functions}

Let
\[
  q_g=P(G=g),
  \qquad
  q_{0,g,t}=P(G>t),
\]
\[
  \mu_{1,g,t}=\E[Y_c\mid G=g],
  \qquad
  \mu_{0,g,t}=\E[Y_c\mid G>t].
\]
The corresponding influence functions for the full panel population are
\begin{equation}\label{eq:phiL-0}
  \phi_{L,g,t}^{P}(O)
  =
  \frac{\ind\{G=g\}}{q_g}(Y_c-\mu_{1,g,t})
  -
  \frac{\ind\{G>t\}}{q_{0,g,t}}(Y_c-\mu_{0,g,t}),
\end{equation}
and
\begin{equation}\label{eq:phiR-0}
  \phi_{R,g,t}^{P}(O)
  =
  \frac{\ind\{G=g\}}{q_gV_g}
  \left[
    (D-\mu_{D,g})(Y_c-\mu_{1,g,t})
    -\theta_{R,g,t}(D-\mu_{D,g})^2
    \right].
\end{equation}

\begin{assumption}[Stacked-inference conditions]\label{ass:baseline-inference}
  Full-path observations are i.i.d. across units.  The cell collection $\mathcal C$ is finite, nonrandom, and not selected from the analysis sample.  For every $c=(g,t)\in\mathcal C$, $q_g>0$, $q_{0,g,t}>0$, $V_g>0$,
  \[
    \E[|Y_c|^4+|D|^4\mid G=g]<\infty,
    \qquad
    \E[Y_c^2\mid G>t]<\infty.
  \]
\end{assumption}

Let $\boldsymbol\vartheta$ stack $(\delta_{L,g,t},\theta_{R,g,t})$ in a fixed cell order, and let $\boldsymbol\phi_i^{P}$ stack the corresponding influence function columns on unit $i$.  Plug-in columns replace the probabilities, means, variances, and slopes by their sample analogues and are zero outside the relevant groups.  Every column is evaluated for all $n$ units in the same order, which preserves the cross-cell and cross-margin covariances.

\begin{theorem}[Stacked inference]\label{thm:baseline-stacked}
  Under Assumption~\ref{ass:baseline-inference},
  \begin{equation}\label{eq:baseline-vector-clt}
    \sqrt n
    (\widehat{\boldsymbol\vartheta}
    -\boldsymbol\vartheta)
    =
    \frac1{\sqrt n}\sum_{i=1}^n
    \boldsymbol\phi_i^{P}
    +o_{\Prb}(1)
    \xrightarrow{d}
    N(0,\Sigma),
  \end{equation}
  where $\Sigma=\E[\boldsymbol\phi^{P} \boldsymbol\phi^{P\prime}]$.  Moreover,
  \begin{equation}\label{eq:baseline-sigma}
    \widehat\Sigma
    =
    \mathbb P_n[
    \widehat{\boldsymbol\phi}^{P}_i
    \widehat{\boldsymbol\phi}^{P\prime}_i]
    \xrightarrow{p}\Sigma.
  \end{equation}
\end{theorem}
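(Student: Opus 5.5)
The argument is a standard delta-method linearization, carried out cell by cell and then stacked; each estimator is a smooth function of finitely many sample means of unit-level functions of $O$.

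For a fixed cell $c=(g,t)$, write $\widehat q_g=\mathbb P_n\ind\{G=g\}$ and $\widehat q_{0,g,t}=\mathbb P_n\ind\{G>t\}$. Then
\[
\overline Y_{c,g}=\mathbb P_n[\ind\{G=g\}Y_c]/\widehat q_g,
\qquad
\overline Y_{c,0}=\mathbb P_n[\ind\{G>t\}Y_c]/\widehat q_{0,g,t}.
\]
The moment conditions in Assumption~\ref{ass:baseline-inference} give $\widehat q_g\to q_g>0$ and $\widehat q_{0,g,t}\to q_{0,g,t}>0$. Expanding the ratio gives
\[
\overline Y_{c,g}-\mu_{1,g,t}=\mathbb P_n[\ind\{G=g\}(Y_c-\mu_{1,g,t})]/\widehat q_g,
\]
and this equals $\mathbb P_n[\ind\{G=g\}(Y_c-\mu_{1,g,t})/q_g]+o_{\Prb}(n^{-1/2})$. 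The reason is that the numerator is $O_{\Prb}(n^{-1/2})$ by the CLT, since $\E[\ind\{G=g\}Y_c^2]<\infty$, while $1/\widehat q_g-1/q_g=o_{\Prb}(1)$. The same step for the control mean yields the expansion for $\widehat\delta_{L,g,t}$ with influence function \eqref{eq:phiL-0}.

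For the slope, write the numerator residual directly:
\[
\widehat\theta_{R,g,t}-\theta_{R,g,t}
=\frac{n_g^{-1}\sum_{i:G_i=g}(D_i-\overline D_g)\{(Y_{c,i}-\overline Y_{c,g})-\theta_{R,g,t}(D_i-\overline D_g)\}}{\widehat V_g}.
\]
Replacing $\overline D_g$ and $\overline Y_{c,g}$ by $\mu_{D,g}$ and $\mu_{1,g,t}$ introduces a product of two $O_{\Prb}(n^{-1/2})$ terms. This works because the population residual $\epsilon=(Y_c-\mu_{1,g,t})-\theta_{R,g,t}(D-\mu_{D,g})$ is orthogonal to $1$ and $D-\mu_{D,g}$ within cohort $g$, which is the first-order condition behind Theorem~\ref{thm:two-margins}(ii) applied cellwise. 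Then $\widehat V_g\to V_g$ and $n_g/n\to q_g$ give the representation $\mathbb P_n\phi_{R,g,t}^P+o_{\Prb}(n^{-1/2})$ with \eqref{eq:phiR-0}. Finite second moments of $\ind\{G=g\}(D-\mu_{D,g})\epsilon$ need only $\E[|D|^4+|Y_c|^4\mid G=g]<\infty$ via Cauchy–Schwarz, and this is where the fourth moments enter.

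Because $\mathcal C$ is finite and nonrandom, the cellwise remainders stack into a vector that is $o_{\Prb}(1)$ after scaling by $\sqrt n$. The multivariate Lindeberg–Lévy CLT applied to the i.i.d. vectors $\boldsymbol\phi_i^P$ then gives \eqref{eq:baseline-vector-clt}. These vectors have mean zero by construction and finite second moments under the stated conditions. Nonrandomness of $\mathcal C$ matters here: it makes the index set fixed, so no selection step enters the limit.

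The main obstacle is the variance estimator \eqref{eq:baseline-sigma}. It suffices to show $\mathbb P_n\|\widehat{\boldsymbol\phi}_i^P-\boldsymbol\phi_i^P\|^2\to_p0$ and then combine this with the weak LLN $\mathbb P_n[\boldsymbol\phi\boldsymbol\phi']\to_p\Sigma$ through Cauchy–Schwarz on the cross terms. Each plug-in column differs from its population counterpart by terms of the form $(\widehat\gamma-\gamma)h(O)$. Here $\gamma$ ranges over the finitely many parameters $q_g,q_{0,g,t},\mu_{1,g,t},\mu_{0,g,t},\mu_{D,g},V_g,\theta_{R,g,t}$, all consistent, and $h$ is a polynomial in $(D,Y_c)$ of degree at most two times an indicator. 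Squaring, $\mathbb P_n$ of such a term is $o_{\Prb}(1)\cdot\mathbb P_n[h^2]$. The factor $\mathbb P_n[h^2]$ involves moments of order four in $(D,Y_c)$ within cohort $g$, for example $(D-\mu_{D,g})^4$ and $(D-\mu_{D,g})^2(Y_c-\mu)^2$. These are $O_{\Prb}(1)$ by the fourth-moment condition and the LLN. Only second moments are needed on the control side, which matches the assumption. I would check each parameter perturbation in turn, taking the slope column, where $\widehat\theta$ multiplies $(D-\overline D_g)^2$, as the binding case.
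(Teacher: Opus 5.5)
Your proposal is correct and follows the paper's proof essentially step for step: ratio expansions for the two group means, a delta-method linearization of the within-cohort slope transported to the full panel via $n_g/n\to q_g$, stacking over the finite cell set with a multivariate CLT, and covariance consistency from empirical $L_2$ convergence of the plug-in columns combined with the LLN. Your explicit residual algebra for the slope, using $\sum_{i:G_i=g}(D_i-\overline D_g)=0$ and the orthogonality of $\epsilon$, is a hands-on version of the paper's ratio rule for $C/V$. It also makes precise a point the paper states only loosely: the fourth-moment condition is what gives square integrability of $\phi^P_{R,g,t}$ and boundedness of $\mathbb P_n[h^2]$ in the variance step.
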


The covariance matrix of the estimator is estimated by $\widehat\Sigma/n$.  Throughout the paper, Wald and max-$t$ statistics presuppose nonsingular limiting covariances for the coordinates they use.

The influence function stack can be augmented to test a pre-specified finite collection of the cellwise restrictions $\delta_{L,g,t}=\mu_{D,g}\theta_{R,g,t}$ under which the regression coefficient in Corollary~\ref{cor:share} is invariant to the untreated share. The influence function of the cohort dose mean is
\begin{equation*}% \label{eq:phi-mu}
  \phi_{\mu,g}^{P}(O)
  =
  \frac{\ind\{G=g\}}{q_g}(D-\mu_{D,g}).
\end{equation*}
The delta method therefore gives the influence function of
$\delta_{L,g,t}-\mu_{D,g}\theta_{R,g,t}$ as
\begin{equation}\label{eq:phi-proportionality}
  \phi_{L,g,t}^{P}
  -\mu_{D,g}\phi_{R,g,t}^{P}
  -\theta_{R,g,t}\phi_{\mu,g}^{P}.
\end{equation}
Under the conditions of Theorem~\ref{thm:staggered}(i), the same restriction becomes $\tau_{L,g,t}=\mu_{D,g}\theta_{R,g,t}$.

The next result separates prespecification of an aggregate from knowledge of its numerical weights.  It applies to either the stack in Theorem~\ref{thm:baseline-stacked} or any other fixed finite cell stack whose estimators have a joint asymptotic linear representation under the full-panel distribution $P$.  Let $\mathcal J$ index the scalar coordinates of such a stack, with parameters $\vartheta_j$, estimators $\widehat\vartheta_j$, and full-panel influence functions $\phi_j^P$, stacked as $\boldsymbol\phi^P$ in the fixed cell-coordinate order.  Let $r=1,\ldots,K_A$ index a finite family of aggregates, each defined by a coordinate set $\mathcal J_r\subseteq\mathcal J$ and a treatment-timing weighting rule with population weights $w_{rj}$, $\sum_{j\in\mathcal J_r}w_{rj}=1$, and sample weights $\widehat w_{rj}$ with influence columns $\xi^w_{rj}$, stacked as $\boldsymbol\xi^w$ in the fixed aggregate-coordinate order.  We report only within-margin aggregates, so each $\mathcal J_r$ contains coordinates from one margin.  Define
\[
  \eta_r
  =
  \sum_{j\in\mathcal J_r}w_{rj}\vartheta_j,
  \qquad
  \widehat\eta_r
  =
  \sum_{j\in\mathcal J_r}\widehat w_{rj}\widehat\vartheta_j,
\]
and the combined influence function and its plug-in version
\begin{equation}\label{eq:event-psi}
  \psi_{r,i}
  =
  \sum_{j\in\mathcal J_r}
  \left(
  w_{rj}\phi_{j,i}^{P}
  +\vartheta_j\xi^w_{rj,i}
  \right),
  \qquad
  \widehat\psi_{r,i}
  =
  \sum_{j\in\mathcal J_r}
  \left(
  \widehat w_{rj}\widehat\phi_{j,i}^{P}
  +\widehat\vartheta_j\widehat\xi^w_{rj,i}
  \right).
\end{equation}

\begin{theorem}[Inference for weighted aggregates]\label{thm:event-aggregation}
  Let $K_A$ and the finite coordinate sets $\{\mathcal J_r\}_{r=1}^{K_A}$ be fixed as $n\to\infty$, with the coordinate sets and weighting rules nonrandom and not selected from the analysis sample.  Let $Z_i=(\boldsymbol\phi_i^{P\prime},\boldsymbol\xi_i^{w\prime})'$ be a function of the panel observation $O_i$, i.i.d. across $i$, with $\E[Z]=0$ and $\E[\norm{Z}^2]<\infty$.  Suppose that, jointly over $j\in\mathcal J$,
  \[
    \sqrt n(\widehat\vartheta_j-\vartheta_j)
    =
    \frac1{\sqrt n}\sum_{i=1}^n\phi_{j,i}^{P}
    +o_{\Prb}(1),
  \]
  and, jointly over $r=1,\ldots,K_A$ and $j\in\mathcal J_r$,
  \begin{equation}\label{eq:event-weight-alr}
    \sqrt n(\widehat w_{rj}-w_{rj})
    =
    \frac1{\sqrt n}\sum_{i=1}^n\xi^w_{rj,i}
    +o_{\Prb}(1).
  \end{equation}
  Then, for $\boldsymbol\eta=(\eta_1,\ldots,\eta_{K_A})'$ and $\boldsymbol\psi_i=(\psi_{1,i},\ldots,\psi_{K_A,i})'$ from \eqref{eq:event-psi},
  \begin{equation}\label{eq:event-if}
    \sqrt n(\widehat{\boldsymbol\eta}-\boldsymbol\eta)
    =
    \frac1{\sqrt n}\sum_{i=1}^n\boldsymbol\psi_i
    +o_{\Prb}(1)
    \xrightarrow{d}
    N(0,\Omega),
    \qquad
    \Omega=\E[\boldsymbol\psi\boldsymbol\psi'].
  \end{equation}
  If the plug-in columns $\widehat{\boldsymbol\psi}_i$ satisfy $\mathbb P_n[\norm{\widehat{\boldsymbol\psi}_i-\boldsymbol\psi_i}^2]=o_{\Prb}(1)$, then
  \begin{equation}\label{eq:event-sigma}
    \widehat\Omega
    =
    \mathbb P_n[
    \widehat{\boldsymbol\psi}_i
    \widehat{\boldsymbol\psi}_i']
    \xrightarrow{p}\Omega.
  \end{equation}
\end{theorem}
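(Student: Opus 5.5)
The argument is a product-rule linearization of each aggregate, followed by a joint central limit theorem for the stacked vector $Z$ and a Cauchy--Schwarz bound for the variance estimator.

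\textbf{Linearization.} For each $r$ we write
\[
  \widehat\eta_r-\eta_r
  =
  \sum_{j\in\mathcal J_r}\Bigl\{w_{rj}(\widehat\vartheta_j-\vartheta_j)
  +\vartheta_j(\widehat w_{rj}-w_{rj})
  +(\widehat w_{rj}-w_{rj})(\widehat\vartheta_j-\vartheta_j)\Bigr\}.
\]
We multiply by $\sqrt n$ and substitute the two assumed asymptotically linear representations. The first two terms then give $n^{-1/2}\sum_i\psi_{r,i}+o_{\Prb}(1)$, because the sets $\mathcal J_r$ are finite and fixed and the population quantities $w_{rj},\vartheta_j$ are constants.

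The cross term is $\sqrt n$ times a product of two $O_{\Prb}(n^{-1/2})$ factors. Each factor is $O_{\Prb}(n^{-1/2})$ because its scaled version equals a normalized sum of mean-zero, square-integrable i.i.d. terms plus $o_{\Prb}(1)$, and such a sum is tight by Chebyshev's inequality. The cross term is therefore $O_{\Prb}(n^{-1/2})=o_{\Prb}(1)$. Since $K_A$ is finite, these expansions hold jointly over $r$.

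\textbf{Asymptotic normality.} Each $\psi_{r,i}$ is a fixed linear map $A$ applied to $Z_i$, that is, $\boldsymbol\psi_i=AZ_i$. The multivariate Lindeberg--L\'evy CLT gives $n^{-1/2}\sum_i Z_i\to N(0,\E[ZZ'])$, using $\E[Z]=0$ and $\E\norm{Z}^2<\infty$. The continuous mapping theorem together with Slutsky's lemma then yields \eqref{eq:event-if} with $\Omega=A\E[ZZ']A'=\E[\boldsymbol\psi\boldsymbol\psi']$. Prespecification matters here: because the weights and coordinate sets are not data-selected, $A$ is nonrandom.

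\textbf{Consistency of $\widehat\Omega$.} First, $\mathbb P_n[\boldsymbol\psi_i\boldsymbol\psi_i']\to\Omega$ in probability by the weak law of large numbers, since $\E\norm{\boldsymbol\psi}^2\le\norm{A}^2\E\norm{Z}^2<\infty$. Next, write $\widehat{\boldsymbol\psi}=\boldsymbol\psi+\Delta$ and expand:
\[
  \mathbb P_n[\widehat{\boldsymbol\psi}\widehat{\boldsymbol\psi}']-\mathbb P_n[\boldsymbol\psi\boldsymbol\psi']
  =\mathbb P_n[\boldsymbol\psi\Delta'+\Delta\boldsymbol\psi'+\Delta\Delta'].
\]
In operator norm, Cauchy--Schwarz bounds this by
\[
  2\bigl(\mathbb P_n\norm{\boldsymbol\psi}^2\bigr)^{1/2}\bigl(\mathbb P_n\norm{\Delta}^2\bigr)^{1/2}+\mathbb P_n\norm{\Delta}^2 .
\]
The first factor is $O_{\Prb}(1)$ by the law of large numbers, and $\mathbb P_n\norm{\Delta}^2=o_{\Prb}(1)$ by hypothesis, so the difference is $o_{\Prb}(1)$. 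This gives \eqref{eq:event-sigma}.

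\textbf{Main obstacle.} The delicate step is the cross term, specifically establishing tightness of the estimation errors from the assumed representations alone. The argument above handles this, because the i.i.d. second-moment condition on $Z$ makes each normalized sum tight. No further conditions on the plug-in columns are needed beyond the stated mean-square consistency.
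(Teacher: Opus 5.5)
Your proposal is correct and follows essentially the same route as the paper: the same three-term product decomposition with a cross term of order $O_{\Prb}(n^{-1})$, a central limit theorem for the fixed linear image of the i.i.d.\ vector $Z$, and Cauchy--Schwarz plus the law of large numbers for $\widehat\Omega$. The differences are cosmetic: you use the multivariate Lindeberg--L\'evy theorem where the paper invokes Cram\'er--Wold, and the paper's proof also checks the weight expansion for cohort-share weights via the ratio delta method, which the theorem takes as a hypothesis.
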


The covariance matrix of $\widehat{\boldsymbol\eta}$ is estimated by $\widehat\Omega/n$.

For the event-time application and margin $m\in\{L,R\}$, take $r=(m,e)$ and $j=(m,g,g+e)$, with $\xi^w_{rj}=\xi^w_{g,e}$.  Thus the same estimated cohort share enters the two separate margin aggregates, and the stacked covariance retains the induced cross-margin dependence.

An event-time aggregate of $\delta_{L,g,t}-\mu_{D,g}\theta_{R,g,t}$ uses the same formula, with the cell influence function in \eqref{eq:phi-proportionality}.  When cohort shares are estimated, its weight term in \eqref{eq:event-psi} multiplies the cell value $\delta_{L,g,t}-\mu_{D,g}\theta_{R,g,t}$.

When the numerical weights are known and imposed in estimation, set $\widehat w_{rj}=w_{rj}$ and $\xi^w_{rj}=0$.  If $L$ is the resulting fixed aggregation matrix, the influence function and covariance formulas in \eqref{eq:event-if} and \eqref{eq:event-sigma} reduce to $\boldsymbol\psi=L\boldsymbol\phi^{P}$ and $\widehat\Omega=L\widehat\Sigma L'$, the formulas for a fixed linear map.

For cohort-share weighting, let $q_g=P(G=g)$, $Q_e=P(G\in\mathcal G_e)$, and suppose $Q_e>0$.  The population and sample weights are
\[
  w_{g,e}=\frac{q_g}{Q_e},
  \qquad
  \widehat w_{g,e}
  =
  \frac{n_g}{\sum_{h\in\mathcal G_e}n_h},
\]
and the weight influence function is
\[
  \xi^w_{g,e}(O)
  =
  \frac{
  \ind\{G=g\}
  -w_{g,e}\ind\{G\in\mathcal G_e\}
  }{Q_e}.
\]
Its plug-in version replaces $(w_{g,e},Q_e)$ by $(\widehat w_{g,e},\widehat Q_e)$, where $\widehat Q_e=n^{-1}\sum_i\ind\{G_i\in\mathcal G_e\}$.  These influence functions have mean zero and sum to zero over $g\in\mathcal G_e$.  The cohort counts used in the weights come from the full panel, not from the units satisfying $S_c=1$.

\subsection{Pre-treatment comparisons}\label{sec:pretrend}

Let $\mathcal C_{\mathrm{pre}}\subseteq\{(g,t):3\le g\le\mathcal T,\ 1\le t<g-1\}$ be finite, nonrandom, and not selected from the analysis sample. For $c=(g,t)\in\mathcal C_{\mathrm{pre}}$, define
\[
  C_c^{\mathrm{pre}}=\ind\{G>g\},
  \qquad
  S_c^{\mathrm{pre}}=\ind\{G=g\}+C_c^{\mathrm{pre}},
  \qquad
  Y_c^{\mathrm{pre}}=Y_t-Y_{g-1}.
\]
The control group excludes cohort $g$ and remains untreated at both dates. The pre-trend targets are the observed-data functionals
\begin{equation*}% \label{eq:pretrend-targets}
  \begin{aligned}
    \delta_{L,g,t}^{\mathrm{pre}}
    &=\E[Y_c^{\mathrm{pre}}\mid G=g]-\E[Y_c^{\mathrm{pre}}\mid G>g],\\
    \theta_{R,g,t}^{\mathrm{pre}}
    &=\frac{\Cov(D,Y_c^{\mathrm{pre}}\mid G=g)}{V_g}.
  \end{aligned}
\end{equation*}
Estimate them by the corresponding sample mean contrast and within-cohort OLS slope.  Their influence functions are \eqref{eq:phiL-0} and \eqref{eq:phiR-0} with $Y_c^{\mathrm{pre}}$ in place of $Y_c$, $G>g$ in place of $G>t$, and the pre-treatment means, probabilities, and response index in place of the post-treatment ones.

\begin{corollary}[Pre-trend inference]\label{cor:pretrend-inference}
  Suppose observations are i.i.d. across units and, for every $c\in\mathcal C_{\mathrm{pre}}$, $P(G=g)>0$, $P(G>g)>0$, $V_g>0$, and
  \[
    \E[|Y_c^{\mathrm{pre}}|^4+|D|^4\mid G=g]<\infty,
    \qquad
    \E[(Y_c^{\mathrm{pre}})^2\mid G>g]<\infty.
  \]
  Then Theorem~\ref{thm:baseline-stacked} holds for the pre-trend stack, and under Assumption~\ref{ass:baseline-inference} also for the pre-trend and post-treatment stacks jointly; Theorem~\ref{thm:event-aggregation} applies to fixed pre-trend aggregates.
\end{corollary}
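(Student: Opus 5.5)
The plan is to observe that the pre-trend stack has exactly the same algebraic form as the post-treatment stack, and then to reuse the proofs of Theorems~\ref{thm:baseline-stacked} and~\ref{thm:event-aggregation} coordinate by coordinate. For each $c=(g,t)\in\mathcal C_{\mathrm{pre}}$, the estimator $\widehat\delta^{\mathrm{pre}}_{L,g,t}$ is a difference of two group sample means. The treated group is $\{G=g\}$ and the control group is $\{G>g\}$. The estimator $\widehat\theta^{\mathrm{pre}}_{R,g,t}$ is the within-cohort OLS slope of $Y^{\mathrm{pre}}_c$ on $D$. Both are smooth functions of the sample averages
\[
\mathbb P_n[\ind\{G=g\}],\quad \mathbb P_n[\ind\{G>g\}],\quad \mathbb P_n[\ind\{G=g\}D],\quad \mathbb P_n[\ind\{G=g\}D^2],\quad \mathbb P_n[\ind\{G=g\}Y^{\mathrm{pre}}_c],\quad \mathbb P_n[\ind\{G=g\}DY^{\mathrm{pre}}_c],\quad \mathbb P_n[\ind\{G>g\}Y^{\mathrm{pre}}_c].
\]
The control indicator $\ind\{G>t\}$ is replaced by $\ind\{G>g\}$, but the proof of Theorem~\ref{thm:baseline-stacked} never uses the specific form of the control set. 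It uses only positivity of the treated and control probabilities, $V_g>0$, and the stated moment conditions. The hypotheses of the corollary supply exactly these for the pre-trend cells.

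\textbf{Step 1: pre-trend stack.} I would write each pre-trend estimator as a map $h_c$ of this finite vector of sample moments. The multivariate CLT applies to the moment vector, since fourth moments give finite variances for the products $\ind\{G=g\}DY^{\mathrm{pre}}_c$ and $\ind\{G=g\}D^2$. The delta method then gives the linear representation with the influence functions \eqref{eq:phiL-0} and \eqref{eq:phiR-0}, after the stated substitutions. The maps are differentiable at the population values because $q_g>0$, $P(G>g)>0$ and $V_g>0$. Consistency of $\widehat\Sigma$ follows as in Theorem~\ref{thm:baseline-stacked}. Each plug-in column differs from its population counterpart by consistent estimated constants multiplying terms with finite second moments. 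This gives $\mathbb P_n[\|\widehat{\boldsymbol\phi}-\boldsymbol\phi\|^2]=o_{\Prb}(1)$, and the law of large numbers together with Cauchy--Schwarz completes the argument.

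\textbf{Step 2: joint pre- and post-treatment stack.} Under Assumption~\ref{ass:baseline-inference} I would concatenate the two moment vectors, both evaluated on the same i.i.d. units. The joint CLT holds for the concatenated vector because all components have finite second moments. The delta method, applied to the block map $(h_c)_{c\in\mathcal C\cup\mathcal C_{\mathrm{pre}}}$, then yields the joint representation with $\Sigma=\E[\boldsymbol\phi\boldsymbol\phi']$ over the combined columns. Cross-covariances are retained automatically.

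\textbf{Step 3: aggregates.} The hypotheses of Theorem~\ref{thm:event-aggregation} are now in place: a joint asymptotically linear representation for the cell coordinates, i.i.d. $Z_i$ with mean zero and finite second moment, and the weight representation \eqref{eq:event-weight-alr}. The last of these is unchanged because the cohort-share weights come from the full panel. Since $\mathcal C_{\mathrm{pre}}$ is fixed and nonrandom, that theorem applies directly.

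The main obstacle is bookkeeping rather than analysis. One must check that no step of the earlier proof relied on the control set being nested in a specific way, such as $\{G>t\}$ with $t\ge g$. The key point is that the influence functions have mean zero under the full-panel $P$ regardless of which indicator defines the control group. I would verify this directly from the forms of \eqref{eq:phiL-0} and \eqref{eq:phiR-0}.
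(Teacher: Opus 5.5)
Your proposal is correct and follows essentially the same route as the paper. The paper re-runs the sample-mean and slope expansions of Theorem~\ref{thm:baseline-stacked} with $Y_c^{\mathrm{pre}}$ and the groups $G=g$ and $G>g$, notes that only the stated moments and positive denominators are used (so the ordering of the outcome dates and the form of the control set are immaterial), stacks the pre-trend cells alone or with the post-treatment cells, and then invokes Theorem~\ref{thm:event-aggregation} for the aggregates; your delta method applied to the vector of sample moments is a more explicit version of that expansion argument.
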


Corollary~\ref{cor:pretrend-inference} yields a pre-trend test for each margin.  With $\mathcal C_{\mathrm{pre}}$ and the weighting rule fixed in advance, stack that margin's pre-trend aggregates with their influence columns from \eqref{eq:event-psi} and compare the Wald statistic $n\widehat{\boldsymbol\eta}'\widehat\Omega^{-1}\widehat{\boldsymbol\eta}$ with a $\chi^2$ distribution on as many degrees of freedom as there are aggregates. \citet{roth-2022} shows that pre-trend tests in binary-treatment DiD often have low power and that conditioning on passing them can enlarge bias and reduce coverage, so we report both margins regardless of the tests and do not read nonrejection as support for PT-L or PT-R.  \citet{rambachan-roth-2023} instead bound the post-treatment violation by the pre-trends and obtain confidence sets valid under that bound; the jointly normal pre-treatment and post-treatment stacks supply their inputs. For the response index the bound covers dose-related trends, not selection on gains, which \eqref{eq:sensitivity-bound} bounds instead.

\section{Covariate adjustment}\label{sec:covariates}

Covariates matter in two ways: they provide a conditional identification route for the same level margin, while residualizing dose defines a different response index.

\subsection{Adjusted targets and conditional identification}

For cohort $g$, define
\[
  e_g(x)=\E[D\mid G=g,X=x],
  \qquad
  V_g^{(X)}
  =
  \E[(D-e_g(X))^2\mid G=g],
\]
and, for cell $c=(g,t)$,
\[
  g_{g,t}(x)=\E[Y_c\mid G=g,X=x].
\]
When $V_g^{(X)}>0$, the adjusted response index is
\begin{equation}\label{eq:cohort-response-X-def}
  \theta_{R,g,t}^{(X)}
  =
  \frac{
  \E[(D-e_g(X))\{Y_c-g_{g,t}(X)\}\mid G=g]
  }{V_g^{(X)}}.
\end{equation}

Total covariance and total variance give the exact relation
\begin{equation}\label{eq:response-0-X-relation}
  \theta_{R,g,t}
  =
  \frac{
  V_g^{(X)}\theta_{R,g,t}^{(X)}
  +\Cov\{e_g(X),g_{g,t}(X)\mid G=g\}
  }{
  V_g^{(X)}+\Var\{e_g(X)\mid G=g\}
  }.
\end{equation}
Conditioning on $X$ therefore changes the residual dose variation and the index itself, and it may yield $V_g^{(X)}=0$ when $V_g>0$.  When $X$ is constant, \eqref{eq:cohort-response-X-def} reduces to \eqref{eq:cohort-response-0-def}.  Adjusted response objects carry the superscript $(X)$.

The level margin $\tau_{L,g,t}$ is still \eqref{eq:cohort-level-def}.  Let $m_{0,g,t}(x)=\E[Y_c\mid G>t,X=x]$ and define the adjusted level contrast
\begin{equation}\label{eq:cohort-level-id-X}
  \delta_{L,g,t}^{(X)}
  =
  \E[Y_c-m_{0,g,t}(X)\mid G=g].
\end{equation}
The identifying restrictions are conditional versions of those in Section~\ref{sec:generic-assumptions}.

\begin{assumption}[Conditional cohort-time conditions]\label{ass:covariate-identification}
  For every $c=(g,t)\in\mathcal C$: (i) $X$ is a pre-treatment vector common to every outcome date and cell, and $P(G>t\mid X=x)>0$ for almost every $x$ in the support of $X\mid G=g$; (ii) conditional level parallel trends, $\mathrm{PT\text{-}L}^{(X)}_{g,t}$: $\E[U_c\mid G=g,X]=\E[U_c\mid G>t,X]$; (iii) conditional response parallel trends, $\mathrm{PT\text{-}R}^{(X)}_{g,t}$: $\E[U_c\mid D,G=g,X]=\E[U_c\mid G=g,X]$ almost surely; (iv) $V_g^{(X)}>0$.
\end{assumption}

Under Assumptions~\ref{ass:staggered} and~\ref{ass:covariate-identification}(i), (ii), $\delta_{L,g,t}^{(X)}=\tau_{L,g,t}$.  Conditional PT-L does not imply PT-L, so $\delta_{L,g,t}^{(X)}$ and $\delta_{L,g,t}$ may differ.  Conditional PT-L and conditional PT-R are non-nested, by the examples in the proof of Proposition~\ref{prop:pt-nonnested} applied conditionally on $X$.  Part (i) fixes one covariate vector for every outcome date, so $e_g$ and $V_g^{(X)}$ are common over $t$ within a cohort, as $\mu_{D,g}$ and $V_g$ are.  The conditions for a causal derivative reading of the adjusted index under part (iii) are the conditional analogues of those in Section~\ref{sec:why-two}, stated in Appendix~\ref{app:dmlproofs}.

Partialling out functions of $X$ from the cell regression does not isolate the adjusted response index: the dose coefficient retains a level component (Proposition~\ref{prop:adjusted-mixture}, Appendix~\ref{app:dmlproofs}).

\subsection{Cross-fitted DML and stacked inference}

Work under $P_c$, the distribution conditional on eligibility for cell $c$, and let $\pi_{g,t}(x)=P_c(T_c=1\mid X=x)$ denote the cell propensity.  Together with $m_{0,g,t}$, $e_g$, and $g_{g,t}$, it forms the four nuisance functions.  The propensity, control regression, and treated outcome regression vary by cell.  The dose regression $e_g$ is defined conditional on $G=g$, so each fold-specific estimate uses only cohort-$g$ observations and is reused across all outcome dates for that cohort.

The corresponding orthogonal scores are given in equations~\eqref{eq:level-score} and \eqref{eq:response-score} of Appendix~\ref{app:dmlproofs}.  The response score is a treated-cohort analogue of the residual-on-residual score in \citet{clarke-polselli-2026}, and the level score is an augmented DiD score.  One fixed $K$-fold unit partition is reused across every cell and nuisance function, so that the stacked influence vectors are evaluated for the same units in the same order.  Let $k(i)$ denote unit $i$'s fold and form all predictions out of fold.  With $\mathbb P_{n,c}[h]=\mathbb P_n[S_ch]/\mathbb P_n[S_c]$, the resulting cross-fitted ratio estimators are
\begin{equation}\label{eq:deltahat-X}
  \widehat\delta_{L,g,t}^{(X)}
  =
  \frac{
    \mathbb P_{n,c}\left[
      T_c\{Y_c-\widehat m_{0,-k}(X)\}
      -(1-T_c)
      \frac{\widehat\pi_{-k}(X)}{1-\widehat\pi_{-k}(X)}
      \{Y_c-\widehat m_{0,-k}(X)\}
      \right]
  }{\mathbb P_{n,c}[T_c]},
\end{equation}
and
\begin{equation}\label{eq:thetahat-X}
  \widehat\theta_{R,g,t}^{(X)}
  =
  \frac{
    \mathbb P_{n,c}\left[
      T_c\{D-\widehat e_{-k}(X)\}
      \{Y_c-\widehat g_{-k}(X)\}
      \right]
  }{
    \mathbb P_{n,c}\left[
      T_c\{D-\widehat e_{-k}(X)\}^2
      \right]
  }.
\end{equation}

The product rates in the next assumption come from the exact nuisance remainders in Appendix~\ref{app:dmlproofs}.  The asymptotic expansion uses no identifying restriction; conditional PT-L and PT-R govern only interpretation.

\begin{assumption}[Cross-fitted DML conditions]\label{ass:dml}
  Full-path observations are i.i.d. across units.  Outcomes, doses, and fitted nuisance functions are bounded by fixed constants; $p>0$ and $V^{(X)}>0$; and, for some $\epsilon\in(0,1/2)$, $\epsilon\le\pi(X)\le1-\epsilon$.  Estimated propensities are truncated to $[\epsilon/2,1-\epsilon/2]$.  The number of folds is fixed, and every prediction for unit $i$ is made by a model trained without unit $i$'s fold.  Uniformly over folds, the four nuisances are $L_2$ consistent, with norms evaluated under the distribution of $X\mid S_c=1$ for $(m_0,\pi)$ and under the distribution of $X\mid G=g$ for $(e,g)$, and
  \[
    \norm{\widehat m_0-m_0}_2
    \norm{\widehat\pi-\pi}_2
    =o_{\Prb}(n^{-1/2}),
  \]
  \begin{equation}\label{eq:response-products}
    \norm{\widehat e-e}_2
    \norm{\widehat g-g}_2
    +\norm{\widehat e-e}_2^2
    =o_{\Prb}(n^{-1/2}).
  \end{equation}
\end{assumption}

The squared term in \eqref{eq:response-products} requires $\norm{\widehat e-e}_2=o_{\Prb}(n^{-1/4})$.  Appendix~\ref{app:dmlproofs} states the generic-cell asymptotic linear representation and efficiency result, Theorem~\ref{thm:joint-dml}.

\begin{assumption}[Cellwise DML conditions]\label{ass:staggered-dml}
  The collection $\mathcal C$ is finite, nonrandom, and not selected from the analysis sample.  For every $c\in\mathcal C$, $P(S_c=1)>0$ and Assumption~\ref{ass:dml} holds under $P_c$.
\end{assumption}

Let $\phi_{L,c}^{P_c}$ and $\phi_{R,c}^{(X),P_c}$ denote the gradients under $P_c$ derived in equations~\eqref{eq:phiL} and \eqref{eq:phiR} of Appendix~\ref{app:dmlproofs}.  Because $P_c$ varies across cells, each gradient must first be transported to the common full-panel distribution $P$:
\begin{equation}\label{eq:cell-full-if}
  \boldsymbol\phi_c^{(X),P}(O)
  =
  \frac{S_c}{P(S_c=1)}
  \begin{pmatrix}
    \phi_{L,c}^{P_c}(O) \\
    \phi_{R,c}^{(X),P_c}(O)
  \end{pmatrix}.
\end{equation}

\begin{theorem}[Stacked covariate-adjusted DML]\label{thm:staggered-dml}
  Under Assumption~\ref{ass:staggered-dml}, let $\boldsymbol\vartheta^{(X)}$ stack $(\delta_{L,g,t}^{(X)},\theta_{R,g,t}^{(X)})$ over $c\in\mathcal C$ in a fixed order, let $\widehat{\boldsymbol\vartheta}^{(X)}$ stack the estimators \eqref{eq:deltahat-X} and \eqref{eq:thetahat-X}, computed with one unit-level fold partition for every cell and nuisance function, and let $\boldsymbol\phi^{(X),P}$ stack the rescaled columns in \eqref{eq:cell-full-if}, both in the same order.  Then
  \begin{equation}\label{eq:staggered-vector-clt}
    \sqrt n
    (\widehat{\boldsymbol\vartheta}^{(X)}
    -\boldsymbol\vartheta^{(X)})
    =
    \frac1{\sqrt n}\sum_{i=1}^n
    \boldsymbol\phi_i^{(X),P}
    +o_{\Prb}(1)
    \xrightarrow{d}
    N(0,\Sigma^{(X)}),
  \end{equation}
  where $\Sigma^{(X)}=\E[\boldsymbol\phi^{(X),P}\boldsymbol\phi^{(X),P\prime}]$.  Let $\widehat{\boldsymbol\phi}_i^{(X),P}$ stack the plug-in columns $\{S_{c,i}/\mathbb P_n[S_c]\}\widehat{\boldsymbol\phi}_c^{P_c}(O_i)$, with $\widehat{\boldsymbol\phi}_c^{P_c}$ the plug-in gradient of Theorem~\ref{thm:joint-dml}, evaluated for every unit $i$ in the same order.  Then
  \begin{equation}\label{eq:staggered-sigma}
    \widehat\Sigma^{(X)}
    =
    \mathbb P_n[
    \widehat{\boldsymbol\phi}_i^{(X),P}
    \widehat{\boldsymbol\phi}_i^{(X),P\prime}]
    \xrightarrow{p}\Sigma^{(X)}.
  \end{equation}
\end{theorem}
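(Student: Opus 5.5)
The plan is to transport the generic-cell result, Theorem~\ref{thm:joint-dml}, from each $P_c$ to the common full-panel distribution $P$, and then stack. Fix $c\in\mathcal C$. Assumption~\ref{ass:staggered-dml} gives Assumption~\ref{ass:dml} under $P_c$. Theorem~\ref{thm:joint-dml} then yields, for the estimators \eqref{eq:deltahat-X} and \eqref{eq:thetahat-X} computed on the eligible subsample $\{i:S_{c,i}=1\}$, the expansion
\[
  \sqrt{n_c}\bigl(\widehat{\boldsymbol\vartheta}_c^{(X)}-\boldsymbol\vartheta_c^{(X)}\bigr)
  =\frac{1}{\sqrt{n_c}}\sum_{i:S_{c,i}=1}\boldsymbol\phi_c^{P_c}(O_i)+o_{\Prb}(1),
\]
where $n_c=\sum_iS_{c,i}$.

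Two preliminary checks are needed before using this. First, the estimators are written with $\mathbb P_{n,c}$. Each is therefore a ratio of full-sample averages of $S_c\times(\cdot)$ terms, and the sample ratio $\mathbb P_n[S_c]$ cancels exactly. Second, the cell expansion is applied conditional on $n_c$. Since the eligible units are i.i.d. draws from $P_c$ given $n_c$, and $n_c/n\to P(S_c=1)>0$ almost surely, the remainder is still $o_{\Prb}(1)$ unconditionally. This also holds with the common fold partition: restricting a unit-level partition to the eligible units gives a valid partition of the subsample, with fold sizes proportional to the full fold sizes.

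Next, rewrite the expansion in full-sample form:
\[
  \sqrt n(\widehat{\boldsymbol\vartheta}_c^{(X)}-\boldsymbol\vartheta_c^{(X)})
  =\sqrt{n/n_c}\cdot\frac{1}{\sqrt{n_c}}\sum_iS_{c,i}\boldsymbol\phi_c^{P_c}(O_i)+o_{\Prb}(1)
  =\frac{1}{\sqrt n}\sum_i\frac{n}{n_c}S_{c,i}\boldsymbol\phi_c^{P_c}(O_i)+o_{\Prb}(1).
\]
Replace $n/n_c$ by $1/P(S_c=1)$; the error is $o_{\Prb}(1)\cdot n^{-1/2}\sum_iS_{c,i}\boldsymbol\phi_c^{P_c}$. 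The sum $n^{-1/2}\sum_iS_{c,i}\boldsymbol\phi_c^{P_c}$ is $O_{\Prb}(1)$, because $\E[S_c\boldsymbol\phi_c^{P_c}]=P(S_c=1)\,\E_{P_c}[\boldsymbol\phi_c^{P_c}]=0$ and the gradient has a finite second moment (the bounded setting of Assumption~\ref{ass:dml}). This gives the representation with $\boldsymbol\phi_c^{(X),P}$ from \eqref{eq:cell-full-if}. The terms have mean zero under $P$ and finite variance.

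Stacking over the finitely many cells, each remainder is $o_{\Prb}(1)$, so the stacked remainder is too. The summands $\boldsymbol\phi_i^{(X),P}$ are functions of the full-path observation $O_i$ and i.i.d. across $i$. The multivariate Lindeberg--Lévy CLT then gives \eqref{eq:staggered-vector-clt}. Cross-cell dependence is captured automatically because every column is a function of the same $O_i$.

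For \eqref{eq:staggered-sigma}, write $\widehat{\boldsymbol\phi}_i-\boldsymbol\phi_i$ as the sum of two pieces:
\[
  \frac{S_{c,i}}{\mathbb P_n[S_c]}\bigl(\widehat{\boldsymbol\phi}_c^{P_c}-\boldsymbol\phi_c^{P_c}\bigr)(O_i)
  \qquad\text{and}\qquad
  \Bigl(\frac{1}{\mathbb P_n[S_c]}-\frac{1}{P(S_c=1)}\Bigr)S_{c,i}\boldsymbol\phi_c^{P_c}(O_i).
\]
The first has empirical second moment $o_{\Prb}(1)$ by the plug-in consistency established within Theorem~\ref{thm:joint-dml}, rescaled by $n_c/n$. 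The second is $o_{\Prb}(1)$ by the LLN. It follows that $\mathbb P_n\norm{\widehat{\boldsymbol\phi}-\boldsymbol\phi}^2=o_{\Prb}(1)$. Combined with $\mathbb P_n[\boldsymbol\phi\boldsymbol\phi']\to\Sigma^{(X)}$ and Cauchy--Schwarz on the cross terms, this gives consistency of $\widehat\Sigma^{(X)}$.

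The main obstacle is the first step: justifying that Theorem~\ref{thm:joint-dml}, stated for i.i.d. draws from $P_c$ with cross-fitting, applies to the random-size eligible subsample under a fold partition fixed at the full-panel level. The cleanest route is to verify directly that the remainder bounds in Appendix~\ref{app:dmlproofs} depend only on the out-of-fold independence and on the nuisance rates in Assumption~\ref{ass:dml}. Both are preserved under restriction to $S_c=1$. The $e_g$ nuisance is fit on cohort-$g$ units only, which is a subset of every eligible sample for cohort $g$, so its out-of-fold independence is preserved as well.
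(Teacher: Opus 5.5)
Your proposal is correct and follows essentially the same route as the paper. You apply Theorem~\ref{thm:joint-dml}(i) under each $P_c$, convert the $n_c$-normalized expansion to a full-panel one using $n_c/n\xrightarrow{p}P(S_c=1)>0$ to obtain the columns in \eqref{eq:cell-full-if}, stack and apply a CLT, and get covariance consistency from empirical $L_2$ convergence of the rescaled plug-in columns with Cauchy--Schwarz and the LLN; your extra care about conditioning on the eligibility pattern and restricting the common fold partition to $\{S_c=1\}$ makes explicit a point the paper leaves implicit (and the net factor in your first covariance piece is $n/n_c$ rather than $n_c/n$, which is harmless since both are $O_{\Prb}(1)$).
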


Because \eqref{eq:staggered-vector-clt} is an asymptotic linear representation under $P$, Theorem~\ref{thm:event-aggregation} applies to the adjusted stack.  Cohort-share weights and their influence functions are computed on the full panel without cross-fitting and enter \eqref{eq:event-psi} without the rescaling in \eqref{eq:cell-full-if}.

\section{Simulation}\label{sec:simulation}

We study the finite-sample performance of the baseline level and response estimators in a staggered-adoption design.

We generate a balanced panel over $t=1,\ldots,7$.  Adoption dates satisfy
\[
  \Pr(G_i=4)=\Pr(G_i=5)=\Pr(G_i=6)=0.2,
  \qquad
  \Pr(G_i=\infty)=0.4.
\]
The fixed post-treatment cell collection contains cohorts $g\in\{4,5,6\}$ at event times $e=t-g\in\{0,1\}$, giving six cohort-time cells.  For treated units,
\[
  D_i=\mu+h(2B_i-1),
  \qquad
  B_i\sim\operatorname{Beta}(3,3),
  \qquad
  (\mu,h)=(2,1.8),
\]
so the positive-dose support is $[0.2,3.8]$, and never-treated units have $D_i=0$.  The untreated potential outcome is
\[
  Y_{it}(\infty,0)
  =A_i+m_t+\sum_{r=1}^t\varepsilon_{ir}
  +\mathbf{1}\{G_i<\infty\}\,t\{b_L+b_R(D_i-\mu)\},
  \qquad
  m_t=0.1t+0.02t^2,
\]
where $A_i$ and the innovations $\varepsilon_{ir}$ are independent standard normal variables and are independent of $(G_i,D_i)$. The performance and benchmark designs set $b_L=b_R=0$, so PT-L and PT-R hold. The pre-trend experiment below varies these two trend parameters. For a unit in cohort $g$ with realized dose $d$, the treatment-effect surface at event time $e$ on the positive-dose support is
\begin{equation}\label{eq:simulation-effect-surface}
  \tau_{g,e}(u\mid d)
  =
  3+a_g r_e\left\{0.5u+\kappa h
  \tanh\!\left(3\frac{u-\mu}{h}\right)\right\}
  +s(d-\mu),
\end{equation}
where $(a_4,a_5,a_6)=(0.8,1,1.2)$ and $(r_0,r_1)=(1,1.4)$. The selection derivative in Theorem~\ref{thm:response-decomp} is $s$. The centered $\tanh$ term averages to zero under the symmetric treated-dose distribution, so varying $\kappa$ changes the response index while leaving the level margin unchanged.

Two designs, both with $s=0$, differ only in $\kappa$: the nonlinear design $\kappa=0.4$ is the main design, and the linear design sets $\kappa=0$. Symmetry gives $\mu_D=2$, and the Beta variance gives $V_D=h^2/7$. At each event time, we aggregate the three cohort estimates using sample cohort shares computed from full-panel counts. Their population values are one third. For each margin, the overall aggregate weights event times zero and one equally. In the main design, the aggregate level margin is $4.2$, and the aggregate response index is $1.4436$.

We use $n\in\{500,1{,}000,2{,}000\}$ and 5,000 replications at each sample size. The cell estimators are those of Section~\ref{sec:baseline-estimation}, and aggregate standard errors use the combined influence function in \eqref{eq:event-psi}.

% Set row spacing here so table regeneration preserves the manuscript layout.
\AddToHookNext{env/tabular/before}{\renewcommand{\arraystretch}{1.0}}
\begin{table}[!htbp]
  \centering
  \caption{Finite-sample performance of the two-margin estimators}
  \label{tab:simulation-staggered}
  \small
  \renewcommand{\arraystretch}{1.15}
  \begin{tabular}{rcccc}
    \toprule
    $n$ & Bias & SD & Mean SE & Coverage \\
    \midrule
    \multicolumn{5}{l}{\textit{Panel A: Level margin}} \\
    500 & 0.0001 & 0.102 & 0.101 & 0.9516 \\
    1{,}000 & 0.0016 & 0.072 & 0.072 & 0.9472 \\
    2{,}000 & 0.0001 & 0.051 & 0.051 & 0.9472 \\
    \addlinespace
    \multicolumn{5}{l}{\textit{Panel B: Response index}} \\
    500 & 0.0044 & 0.099 & 0.097 & 0.9406 \\
    1{,}000 & 0.0026 & 0.070 & 0.069 & 0.9480 \\
    2{,}000 & -0.0003 & 0.048 & 0.049 & 0.9514 \\
    \bottomrule
  \end{tabular}
  \begin{minipage}{0.98\textwidth}
    \begin{singlespace}
    \footnotesize
    \textit{Notes:} 5{,}000 replications; $n$ is the number of units. The nonlinear design has no selection. The level and response targets are 4.2 and 1.4436, respectively. Both aggregates use full-panel sample cohort shares and equal weights on $e=0,1$. SD is the empirical standard deviation; Mean SE is the average influence-function standard error, including uncertainty in cohort shares and cross-cell covariance. Coverage refers to 95\% Wald intervals.
    \end{singlespace}
  \end{minipage}
\end{table}

Table~\ref{tab:simulation-staggered} reports results for both margins. Mean standard errors are close to the empirical standard deviations, and coverage of the 95 percent intervals is close to the nominal level. For the response aggregate, ignoring covariance between the two event-time estimates lowers coverage to 0.8568--0.8722.

Table~\ref{tab:simulation-benchmark} compares the within-cohort OLS estimator with the \texttt{contdid} estimator of the average causal response on the treated (ACRT), using the same cohort and event-time weights.\footnote{The benchmark is the ACRT estimator in \texttt{contdid}~0.1.1 with not-yet-treated controls, a varying base period, and the default cubic specification with no interior knots.} In the linear design the causal response is constant in dose within each cell, so the response index and the ACRT coincide at $0.6$ and the within-cohort regression is correctly specified. Both estimators have little bias, and within-cohort OLS has 22.1--25.3 percent lower empirical variance across the three sample sizes, the parsimony gain of a single correctly specified slope over a cubic dose curve. In the nonlinear design the two targets differ. The response index weights the causal response by $W$ and equals $1.4436$, whereas the ACRT averages it over the dose distribution and equals $1.3655$. Against its own target, within-cohort OLS again has smaller absolute bias and standard deviation and coverage closer to 95 percent, while the \texttt{contdid} bias includes an approximation error of $-0.0154$ from its fixed cubic specification.

\AddToHookNext{env/tabular/before}{\renewcommand{\arraystretch}{1.0}}
\begin{table}[!htbp]
  \centering
  \caption{Comparison of response estimators}
  \label{tab:simulation-benchmark}
  \small
  \renewcommand{\arraystretch}{1.15}
  \setlength{\tabcolsep}{4pt}
  \begin{tabular}{rcccccc}
    \toprule
    & \multicolumn{3}{c}{Within-cohort OLS} & \multicolumn{3}{c}{\texttt{contdid}} \\
    \cmidrule(lr){2-4} \cmidrule(lr){5-7}
    $n$ & Bias & SD & Coverage & Bias & SD & Coverage \\
    \midrule
    \multicolumn{7}{l}{\textit{Panel A: Linear design}} \\
    500 & -0.003 & 0.095 & 0.946 & -0.003 & 0.110 & 0.938 \\
    1{,}000 & -0.001 & 0.067 & 0.947 & -0.001 & 0.076 & 0.946 \\
    2{,}000 & 0.001 & 0.047 & 0.951 & 0.001 & 0.053 & 0.947 \\
    \addlinespace
    \multicolumn{7}{l}{\textit{Panel B: Nonlinear design}} \\
    500 & 0.004 & 0.099 & 0.941 & -0.022 & 0.118 & 0.930 \\
    1{,}000 & 0.003 & 0.070 & 0.948 & -0.017 & 0.080 & 0.936 \\
    2{,}000 & 0.000 & 0.048 & 0.951 & -0.018 & 0.055 & 0.937 \\
    \bottomrule
  \end{tabular}
  \begin{minipage}{0.98\textwidth}
    \begin{singlespace}
    \footnotesize
    \textit{Notes:} 5{,}000 replications; $n$ is the number of units. Aggregation uses full-panel sample cohort shares and equal weights on $e=0,1$. Both targets equal 0.6 in Panel A. In Panel B, the response index is 1.4436 and the ACRT is 1.3655; \texttt{contdid} bias includes cubic approximation error. SD is the empirical standard deviation. Coverage is for 95\% Wald intervals.
    \end{singlespace}
  \end{minipage}
\end{table}

\begingroup
\let\simulationtable\table
\renewcommand{\table}[1][]{\simulationtable[!htb]}
\AddToHookNext{env/tabular/before}{\renewcommand{\arraystretch}{1.0}}
\begin{table}[!htbp]
  \centering
  \caption{Empirical rejection rates of pre-trend tests}
  \label{tab:simulation-pretrend-tests}
  \small
  \renewcommand{\arraystretch}{1.15}
  \begin{tabular}{lrcc}
    \toprule
    Design $(b_L,b_R)$ & $n$ & Level test & Response-index test \\
    \midrule
    Null $(0,0)$ & 500 & 0.049 & 0.056 \\
     & 1{,}000 & 0.050 & 0.055 \\
     & 2{,}000 & 0.054 & 0.053 \\
    \addlinespace
    Level trend $(0.10,0)$ & 500 & 0.264 & 0.067 \\
     & 1{,}000 & 0.467 & 0.054 \\
     & 2{,}000 & 0.778 & 0.047 \\
    \addlinespace
    Level trend $(0.20,0)$ & 500 & 0.771 & 0.059 \\
     & 1{,}000 & 0.976 & 0.055 \\
     & 2{,}000 & 1.000 & 0.054 \\
    \addlinespace
    Dose-related trend $(0,0.10)$ & 500 & 0.053 & 0.311 \\
     & 1{,}000 & 0.050 & 0.563 \\
     & 2{,}000 & 0.049 & 0.853 \\
    \addlinespace
    Dose-related trend $(0,0.20)$ & 500 & 0.055 & 0.861 \\
     & 1{,}000 & 0.051 & 0.995 \\
     & 2{,}000 & 0.049 & 1.000 \\
    \bottomrule
  \end{tabular}
  \begin{minipage}{0.98\textwidth}
    \begin{singlespace}
    \footnotesize
    \textit{Notes:} 5{,}000 replications per design and sample size; $n$ is the number of units. Entries are rejection rates of separate joint Wald tests of the level leads and response-index leads at $e=-3,-2$, relative to $e=-1$. Each test uses a 5\% nominal level and two degrees of freedom. Level comparisons use $G>g$ controls. Covariance estimates retain dependence across leads and account for full-panel sample cohort shares.
    \end{singlespace}
  \end{minipage}
\end{table}

\endgroup

% Keep the short design paragraph together after the two simulation tables.
\begin{samepage}
Table~\ref{tab:simulation-pretrend-tests} evaluates the separate pre-trend tests for the level and response-index leads. We hold $(\kappa,s)=(0.4,0)$ and vary $b_L$ and $b_R$ in the untreated outcome process. Each test assesses whether the two event-time leads at $e=-3,-2$, relative to $e=-1$, are jointly zero at the 5 percent nominal level. In this design, $b_L$ shifts the level leads, whereas the dose-related term $b_R(D-2)$ shifts the response-index leads.
\par
\end{samepage}

Under the null, level rejection rates range from 4.9 to 5.4 percent and response rejection rates from 5.3 to 5.6 percent. With $b_L=0.1$, level-test power rises from 26.4 to 77.8 percent as $n$ increases from 500 to 2,000, and with $b_R=0.1$, response-test power rises from 31.1 to 85.3 percent. Stronger trends increase power further, while the test for the unaffected margin generally remains near its nominal level, with a rejection rate of 6.7 percent for the response test when $b_L=0.1$ and $n=500$.

\section{Application}\label{sec:application}

\subsection{Research design and interpretation}

\citet{bartik-et-al-2019} study the local effects of hydraulic fracturing by comparing counties in the highest prospectivity quartile of a shale play with other counties in the same play. Using their data, \citet{callaway-goodmanbacon-santanna-2024-event} report employment effects for high- and low-dose groups and estimate level effects as a function of dose. We use the CGBS sample to estimate the average employment change relative to counties not yet treated and the association between employment changes and prospectivity within an adoption cohort. We then examine their pre-treatment diagnostics and their contributions to the cellwise continuous-dose regression.

The outcome is log total county employment. The dose is a time-invariant geological prospectivity score, and adoption occurs in the first year of fracking in the county's shale play. The processed CGBS sample is a balanced panel of 402 counties from 1990 through 2014, with 329 positive-dose counties and 73 zero-dose counties. Their preparation code assigns zero to 44 missing prospectivity scores; we retain this coding and examine its consequences below.

Prospectivity scores are constructed separately for each shale play and cannot be compared directly across plays \citep[p.~115]{bartik-et-al-2019}. We retain the CGBS score and read the response index as a statistical association in the recorded score units. Some adoption cohorts contain several plays, so their slopes may reflect differences across plays as well as differences among counties in the same play. Neither cohort centering nor play clustering makes a one-point score difference comparable across plays.

We use the baseline estimators in Section~\ref{sec:baseline-estimation} for seven cohorts, 2001 and 2005--2010, at event times $e=0,\ldots,4$. This gives 307 treated counties and 35 cohort-time cells. Excluding the 2012 cohort holds cohort composition fixed along the event-time path, so the cohort shares and the within-cohort response weights of Corollary~\ref{cor:cohort-interpretations} are common to every event time. Each cell uses the outcome change $Y_{g+e}-Y_{g-1}$. The level contrast uses all counties with $G>g+e$ as controls, between 73 and 368 depending on the cell, whereas the response index uses only cohort-$g$ counties. We aggregate with full-panel cohort shares and include the influence of the estimated shares.

\subsection{Employment paths and separate diagnostics}

Figure~\ref{fig:fracking-margins} reports the event-time estimates. Both are close to zero at adoption and rise over the following four years. At event time four, the level contrast is 0.0474 with an i.i.d. standard error of 0.0116. The response index is 0.0284 log points per prospectivity point with an i.i.d. standard error of 0.0082. Under county-level i.i.d. sampling, joint 95 percent Gaussian max-$t$ intervals over the ten estimates exclude zero for both paths from event time two onward.

\begin{figure}[!htbp]
  \centering
  \caption{Event-time level contrast and response index}
  \label{fig:fracking-margins}
  \vspace{0.35em}
  \includegraphics[width=0.97\textwidth]{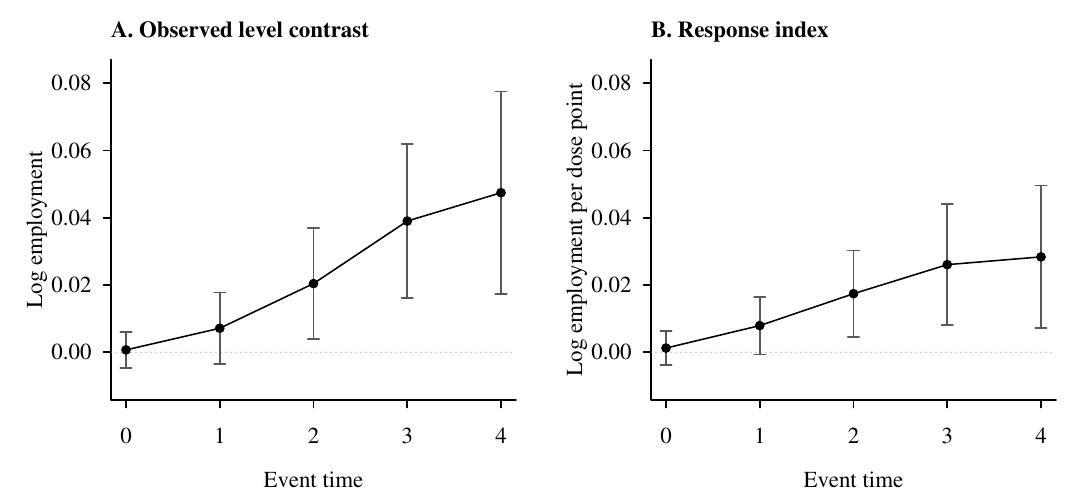}
  \vspace{0.35em}
  \begin{minipage}{0.96\textwidth}
    \begin{singlespace}
      \footnotesize
      \textit{Notes:} The figure reports aggregates with full-panel cohort shares for the 2001 and 2005--2010 cohorts. Vertical bars are joint 95 percent Gaussian max-$t$ intervals over both panels under county-level i.i.d. sampling. Panel~A is the level contrast, which equals the level margin under PT-L; Panel~B is the within-cohort statistical response index in the recorded prospectivity units.
    \end{singlespace}
  \end{minipage}
\end{figure}

For the pre-specified leads $e=-3,-2$, we use $g-1$ as the reference period and $G>g$ controls, as defined in Section~\ref{sec:pretrend}. The two level leads reject a joint zero restriction ($p<0.001$), whereas the response-index leads do not ($p=0.322$). These tests use Corollary~\ref{cor:pretrend-inference} and Theorem~\ref{thm:event-aggregation} under county-level i.i.d. sampling. We therefore report the level path as an observed-data contrast and retain the statistical interpretation of the response path. Nonrejection of its pre-trend restriction establishes neither PT-R nor the absence of selection on gains.

The two diagnostics can diverge because PT-L concerns mean untreated changes across groups, while PT-R concerns their relation to dose within a cohort, and Proposition~\ref{prop:pt-nonnested} shows that neither restriction implies the other. \citet[pp.~125, 130]{bartik-et-al-2019} also report positive employment pre-trends and allow for differential trends and a trend break in their richer specifications. CGBS interpret their dose-bin pre-treatment paths as supporting parallel trends. Those plots, the Bartik specifications, and our baseline contrasts use different samples and comparisons.

\subsection{What the cellwise dose coefficient measures}

Figure~\ref{fig:fracking-mixture} applies Corollary~\ref{cor:share} to the 35 cohort-time cells. The weight $\lambda_{g,t}$ on the response index ranges from 0.019 to 0.117. This weight changes over event time only through the not-yet-treated share. Its mean, weighted by cohort shares, rises from 0.066 at event time zero to 0.084 at event time four, while the weighted number of eligible controls falls from 209.8 to 109.9.

Panel~B aggregates the exact cellwise components using the same cohort shares. At event time four, the aggregate of the cellwise OLS coefficients is 0.0111, with 0.0097 from the level-contrast component and 0.0013 from the response-index component. Thus the coefficient mainly reflects the treated-control contrast per unit of mean dose. Its units of log points per prospectivity point do not make it a within-cohort response. This is the comparison for which the level leads reject.

\begin{figure}[!htbp]
  \centering
  \caption{Cellwise continuous-dose regression weights and components}
  \label{fig:fracking-mixture}
  \vspace{0.35em}
  \includegraphics[width=0.97\textwidth]{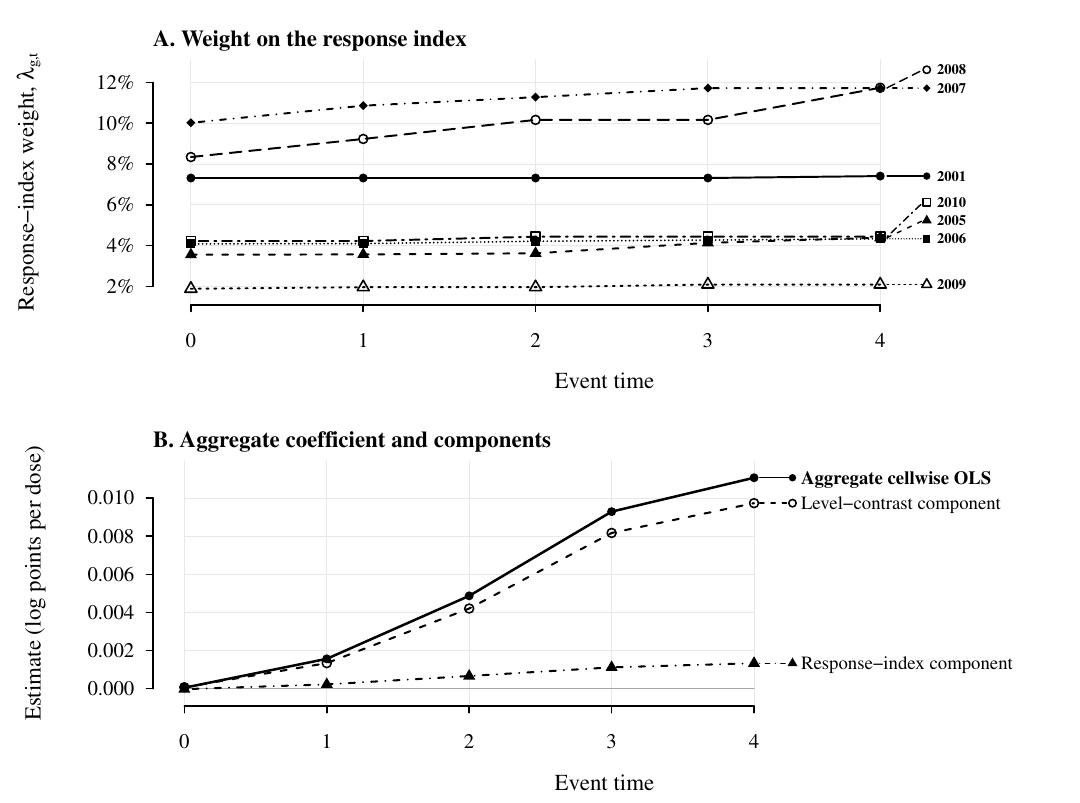}
  \vspace{0.35em}
  \begin{minipage}{0.96\textwidth}
    \begin{singlespace}
      \footnotesize
      \textit{Notes:} Panel~A reports $\lambda_{g,t}$ from \eqref{eq:cell-twfe-mixture} for each cohort-time cell. Panel~B reports cohort-share aggregates of the cellwise coefficient and its two exact components. The identity is applied cell by cell; the figure is not a decomposition of a pooled multi-period staggered TWFE coefficient.
    \end{singlespace}
  \end{minipage}
\end{figure}

We also test whether the five event-time aggregates of $\delta_{L,g,t}-\mu_{D,g}\theta_{R,g,t}$ are jointly zero. The i.i.d. Wald test does not reject ($p=0.228$). This is an aggregate restriction, not a test that the two endpoints coincide in every cell. Appendix~\ref{app:fracking-checks} gives the inference details.

Removing the 44 counties whose missing scores were coded as zero raises the long-window level contrast from 0.0432 to 0.0513. The response estimates are unchanged. These counties enter the control means but not the within-cohort dose covariance, so their removal changes only the level comparison. Appendix~\ref{app:fracking-checks} reports all three windows.

\subsection{Inference and comparison with CGBS}

Table~\ref{tab:fracking-comparison} reports window averages. The short-run and long-run level contrasts are 0.0094 and 0.0432, and the corresponding response indices are 0.0088 and 0.0272 log points per prospectivity point. The larger long-run estimates agree with the employment pattern reported by \citet{bartik-et-al-2019} and \citet{callaway-goodmanbacon-santanna-2024-event}. The CGBS benchmark in Panel~B is a level effect at dose four. Its long-run value of 0.0277 has a different meaning and different units from the response index of 0.0272. Samples, controls, and aggregation also differ, and Appendix~\ref{app:fracking-checks} documents these choices.

\begin{table}[!htbp]
  \centering
  \caption{Level contrasts, response indices, and the CGBS benchmark}
  \label{tab:fracking-comparison}
  {\renewcommand{\arraystretch}{1.15}
    \begin{tabular}{llrrr}
\toprule
Object & Window & Estimate & iid SE & Play SE \\ 
\midrule
\multicolumn{5}{l}{\textit{A. Proposed level contrast and response index}} \\
$\delta_L$ & Short & 0.0094 & 0.0040 & 0.0069 \\
$\delta_L$ & Long & 0.0432 & 0.0101 & 0.0206 \\
$\delta_L$ & Overall & 0.0229 & 0.0063 & 0.0121 \\
$\theta_R$ & Short & 0.0088 & 0.0031 & 0.0063 \\
$\theta_R$ & Long & 0.0272 & 0.0075 & 0.0160 \\
$\theta_R$ & Overall & 0.0162 & 0.0046 & 0.0100 \\
\midrule
\multicolumn{5}{l}{\textit{B. CGBS level at $d=4$}} \\
CGBS $d=4$ & Short & 0.0059 & 0.0061 & 0.0077 \\
CGBS $d=4$ & Long & 0.0277 & 0.0131 & 0.0226 \\
CGBS $d=4$ & Overall & 0.0163 & 0.0089 & 0.0138 \\
\bottomrule
\end{tabular}
}
  \vspace{0.5em}
  \begin{minipage}{0.98\textwidth}
    \begin{singlespace}
      \footnotesize
      \textit{Notes:} Short averages event times 0--2; long averages 3--4; overall averages 0--4. Level contrasts and CGBS estimates are in log points; response indices are in log points per recorded prospectivity point. Panel~A uses 307 treated counties in seven cohorts, with i.i.d. standard errors that include estimation of full-panel cohort shares. In Panel~B, the i.i.d. standard errors include estimation of the zero-dose control mean conditional on the empirical spline basis. Play SE is the CR1 standard error clustered by shale play. Appendix~\ref{app:fracking-checks} gives the CGBS sample details and archived standard errors.
    \end{singlespace}
  \end{minipage}
\end{table}

\looseness=-1 Adoption timing is shared within shale plays, so Table~\ref{tab:fracking-comparison} reports play-clustered CR1 standard errors alongside the standard errors under i.i.d. sampling across counties. The design has 14 shale plays, and several response cohorts contain only one play. Play-clustered intervals for all four short- and long-window estimates in Panel~A include zero. Inference under i.i.d. sampling across counties follows Theorems~\ref{thm:baseline-stacked} and \ref{thm:event-aggregation}, and the clustered results are a sensitivity analysis outside their sampling assumptions. CGBS inference also changes when uncertainty in the estimated control mean and dependence among counties are included, as shown in Appendix~\ref{app:fracking-checks}.

\section{Conclusion}\label{sec:conclusion}

CGBS decompose the continuous-treatment TWFE coefficient, show that its weights on causal responses depend on the dose distribution and the size of the untreated group while selection on gains enters alongside, and recommend estimating treatment effects and causal responses at each dose instead. We take a different route through the same coefficient. In each cohort-time cell it is an exact convex combination of two comparisons that applied researchers already run: the difference in mean outcome changes between the cohort and its not-yet-treated controls, and the OLS slope of the outcome change on dose within the cohort. Our suggestion is to stop reporting their mixture and report the two comparisons separately.

Separating them pays in two ways. The first is identification. The level margin needs a control group and PT-L. The response index uses only the treated cohort and needs, for a causal reading, PT-R and a restriction on selection on gains. Because neither restriction implies the other, evidence against one leaves the other comparison intact, and each has its own pre-trend diagnostic. The second is estimation. A difference in means and an OLS slope involve no dose density, derivative estimator, bandwidth, or functional form for the dose-response curve, and their influence functions are explicit, so joint inference across cells, event times, and margins is closed form in the baseline case and carries over to the covariate-adjusted estimators. The price is that the response index is one number with fixed weights rather than a dose-response curve, and its causal content is only as good as the response restrictions behind it.

The hydraulic fracturing application illustrates the empirical relevance of the distinction. The level leads reject a joint zero restriction, whereas the response-index leads do not. The latter remain a diagnostic rather than evidence for a causal response interpretation. The exact cellwise mixture places most of its weight on the level-contrast component, and removing counties with missing prospectivity changes the level estimates but leaves the response estimates unchanged. These findings show what is lost when the two sources of variation are summarized by a single continuous-dose coefficient.

\clearpage
\addcontentsline{toc}{section}{References}
\begin{singlespace}
  \bibliographystyle{ecta}
  \bibliography{references}
\end{singlespace}

\clearpage
\appendix

% Number equations and each theorem-like environment independently within
% appendix sections: A.1, A.2, ..., B.1, B.2, ....
\counterwithin{equation}{section}
\renewcommand{\theHequation}{appendix.\Alph{section}.\arabic{equation}}
\counterwithin{theorem}{section}
\renewcommand{\theHtheorem}{appendix.\Alph{section}.\arabic{theorem}}
\counterwithin{proposition}{section}
\renewcommand{\theHproposition}{appendix.\Alph{section}.\arabic{proposition}}
\counterwithin{lemma}{section}
\renewcommand{\theHlemma}{appendix.\Alph{section}.\arabic{lemma}}
\counterwithin{corollary}{section}
\renewcommand{\theHcorollary}{appendix.\Alph{section}.\arabic{corollary}}
\counterwithin{assumption}{section}
\renewcommand{\theHassumption}{appendix.\Alph{section}.\arabic{assumption}}
\counterwithin{definition}{section}
\renewcommand{\theHdefinition}{appendix.\Alph{section}.\arabic{definition}}
\counterwithin{example}{section}
\renewcommand{\theHexample}{appendix.\Alph{section}.\arabic{example}}
\counterwithin{remark}{section}
\renewcommand{\theHremark}{appendix.\Alph{section}.\arabic{remark}}

\section{Additional results and proofs}\label{app:mainproofs}

\subsection{Identification}

\begin{proof}[Proof of Proposition~\ref{prop:pt-nonnested}]
  For PT-R without PT-L, let $U=1$ for every treated unit and $U=0$ for every control, with any nondegenerate positive-dose distribution. Then $\E[U\mid D,T=1]=1=\E[U\mid T=1]$, but the treated and control means of $U$ differ.

  For PT-L without PT-R, let controls have $U=0$.  Among treated units, let $P(D=1\mid T=1)=P(D=2\mid T=1)=1/2$, with $U=1$ at $D=1$ and $U=-1$ at $D=2$.  The treated and control means of $U$ are both zero, but the conditional treated mean varies with dose.  Finally, under PT-R, iterated expectations give $\E[U\mid D,T=1]=\E[U\mid T=1]$; PT-L equates the latter with $\E[U\mid T=0]$, proving \eqref{eq:all-dose-pt}.
\end{proof}

\begin{proof}[Proof of Theorem~\ref{thm:two-margins}]
  For part (i), consistency gives $Y=U+\tau(D)$ for treated units and $Y=U$ for untreated units.  PT-L therefore gives
  \[
    \E[Y\mid T=1]-\E[Y\mid T=0]
    =
    \E_+[\tau(D)]=\tau_L.
  \]
  For part (ii), the two identities in \eqref{eq:baseline-balancing} follow immediately from $\alpha(D)=(D-\mu_D)/V_D$.  The OLS normal equations yield $\theta_R=\Cov_+(D,Y)/V_D$ and show invariance to an additive constant.  For part (iii), consistency gives
  \[
    \Cov_+(D,Y)
    =
    \Cov_+(D,U)+\Cov_+\{D,\tau(D)\}.
  \]
  PT-R makes the first covariance zero, proving \eqref{eq:response-gain-id}.
\end{proof}

\begin{proof}[Proof of Theorem~\ref{thm:staggered}]
  Assumption~\ref{ass:staggered} gives $Y_c=U_c+\tau_c(D)$ for $G=g$ and $Y_c=U_c$ for $G>t$. $\mathrm{PT\text{-}L}_{g,t}$ therefore gives
  \[
    \E[Y_c\mid G=g]-\E[Y_c\mid G>t]
    =
    \E[\tau_c(D)\mid G=g]=\tau_{L,g,t}.
  \]
  Likewise,
  \[
    \Cov(D,Y_c\mid G=g)
    =
    \Cov(D,U_c\mid G=g)
    +\Cov\{D,\tau_c(D)\mid G=g\}.
  \]
  $\mathrm{PT\text{-}R}_{g,t}$ sets the first covariance to zero.  Division by $V_g$ proves \eqref{eq:cohort-response-gain-id}.
\end{proof}

\subsection{The regression mixture}

\begin{proof}[Proof of Theorem~\ref{thm:mixture}]
  Because $D=0$ when $T=0$,
  \[
    \E[D]=p\mu_D,
    \qquad
    \E[D^2]=p(V_D+\mu_D^2),
  \]
  and hence
  \begin{equation}\label{eq:twfe-var-proof}
    \Var(D)=p\{V_D+\rho\mu_D^2\}.
  \end{equation}
  Writing $\mu_{Y,1}=\E[Y\mid T=1]$ and $\mu_{Y,0}=\E[Y\mid T=0]$,
  \[
    \E[DY]=p\{\Cov_+(D,Y)+\mu_D\mu_{Y,1}\}.
  \]
  Subtracting $\E[D]\E[Y]$ yields
  \begin{equation}\label{eq:twfe-cov-proof}
    \Cov(D,Y)
    =
    p\{V_D\theta_R+\rho\mu_D\Delta_Y\}.
  \end{equation}
  The ratio of \eqref{eq:twfe-cov-proof} and \eqref{eq:twfe-var-proof} is \eqref{eq:share-formula}; collecting the two numerator terms produces $\lambda$.  PT-L gives $\Delta_Y=\tau_L$ by Theorem~\ref{thm:two-margins}(i).  Direct differentiation of the ratio in \eqref{eq:share-formula} gives \eqref{eq:share-derivative}.

  First differencing a two-period regression with unit and time effects turns the regressor $D_i\ind\{s=2\}$ into $D_i$ and the outcome into $Y_i$; the resulting regression includes a constant.  Thus its coefficient is the slope used above.
\end{proof}

\begin{proof}[Proof of Corollary~\ref{cor:share}]
  Repeat the population-moment calculation in the proof of Theorem~\ref{thm:mixture} under $P_c$.  Under $P_c$, $G=g$ is the treated group, $G>t$ is the control group, $D_c$ equals $D$ for cohort-$g$ units and zero for controls, and $\Delta_Y=\delta_{L,g,t}$.  The corollary assumes the positive group probabilities, finite second moments, and positive within-cohort dose variation required by that calculation. Substituting $\rho_{g,t}$ from \eqref{eq:cell-twfe-components} gives \eqref{eq:cell-twfe-mixture}. Under Assumption~\ref{ass:staggered} and the conditions of Theorem~\ref{thm:staggered}(i), $\delta_{L,g,t}=\tau_{L,g,t}$.
\end{proof}

\begin{proof}[Proof of the identities in Remark~\ref{rem:cgbs-weights}]
  Because $D=0$ when $T=0$, $\E[D]=p\mu_D$ and $\Var(D)=p(V_D+\rho\mu_D^2)$ by \eqref{eq:twfe-var-proof}.  For $d>0$, only treated units satisfy $D\ge d$, so
  \[
    \E[(D-\E[D])\ind\{D\ge d\}]
    =
    p\,\E_+[(D-\mu_D)\ind\{D\ge d\}]
    +p\rho\mu_D\,\Q(D\ge d).
  \]
  Since $\E_+[D-\mu_D]=0$, the first expectation equals $-\E_+[(D-\mu_D)\ind\{D<d\}]$; for an absolutely continuous treated dose this is $V_DW(d)$ and $\Q(D\ge d)=\Q(D>d)$.  Dividing by $\Var(D)$ and using $1-\lambda=\rho\mu_D^2/(V_D+\rho\mu_D^2)$ gives the first identity in \eqref{eq:cgbs-weight-identity}.  For the second, $(\mu_D-\E[D])pa=p\rho\mu_Da$, and division by $\Var(D)$ gives $\rho\mu_Da/(V_D+\rho\mu_D^2)=(1-\lambda)a/\mu_D$.

  Under the hypotheses of Theorem~\ref{thm:response-decomp}, $q$ is continuously differentiable on $[a,b]$, so Fubini's theorem gives
  \[
    \int_a^b\Q(D>d)\,q'(d)\dd d
    =
    \E_+\!\left[\int_a^D q'(d)\dd d\right]
    =
    \E_+[q(D)]-q(a)
    =
    \tau_L-q(a),
  \]
  where the last equality uses \eqref{eq:diagonal-compatibility} and iterated expectations.  By \eqref{eq:qprime-representation}, the first component of $w_1$ contributes $\lambda\theta_R$, and the lowest-dose term contributes $w_0q(a)/a=(1-\lambda)q(a)/\mu_D$.  The three contributions sum to $\lambda\theta_R+(1-\lambda)\tau_L/\mu_D$, which is \eqref{eq:share-formula} with $\Delta_Y=\tau_L$ under Assumption~\ref{ass:levelpt}.
\end{proof}

\subsection{Response weights and selection}

\begin{lemma}[Signed-measure integration identity]\label{lem:signed-connected}
  Let $\nu$ be a finite signed Borel measure supported on $[a,b]$ with $\nu([a,b])=0$, and define $W(t)=-\nu(( -\infty,t])$.  If $q$ is absolutely continuous and $\int_a^b|W(t)q'(t)|\dd t<\infty$, then
  \[
    \int q(d)\nu(\dd d)=\int_a^bW(t)q'(t)\dd t.
  \]
\end{lemma}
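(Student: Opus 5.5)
The plan is integration by parts via Fubini. Since $q$ is absolutely continuous on $[a,b]$, for every $d\in[a,b]$ we can write $q(d)=q(a)+\int_a^d q'(t)\dd t=q(a)+\int_a^b \ind\{t<d\}q'(t)\dd t$. We integrate this against $\nu$. Because $\nu([a,b])=0$ and $\nu$ is supported on $[a,b]$, the constant term $q(a)\nu([a,b])$ vanishes, which leaves
\[
  \int q(d)\,\nu(\dd d)=\int\!\!\int_a^b \ind\{t<d\}\,q'(t)\dd t\,\nu(\dd d).
\]
Interchanging the order of integration, the inner integral becomes $\nu(\{d:d>t\})=\nu((t,\infty))=\nu([a,b])-\nu((-\infty,t])=-\nu((-\infty,t])=W(t)$. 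This gives $\int_a^b W(t)q'(t)\dd t$ as claimed.

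The main obstacle is justifying the Fubini interchange. The integrability hypothesis is stated for $W q'$, not for $|\nu|\otimes|q'|$. I would work with the Jordan decomposition $\nu=\nu^+-\nu^-$, where both parts are finite measures on $[a,b]$. Since $q'\in L^1[a,b]$ by absolute continuity, we have
\[
  \int\!\!\int \ind\{t<d\}\,|q'(t)|\dd t\,\nu^\pm(\dd d)\le\nu^\pm([a,b])\,\norm{q'}_{L^1}<\infty.
\]
Tonelli and Fubini then apply to each part separately, and subtracting the two results is legitimate because both are finite. In this argument the assumed integrability of $Wq'$ is automatic, since $|W|\le|\nu|([a,b])$ is bounded; it is only needed to make sense of the right-hand side. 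One more check: $q$ is bounded on $[a,b]$, hence $\nu$-integrable, so the left-hand side is well defined.

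A minor point is the boundary convention at atoms. Using the strict inequality $t<d$ versus $t\le d$ differs only on the Lebesgue-null diagonal set $\{t=d\}$ for each fixed $d$, so the choice is immaterial. I would also note that $W(t)=0$ for $t<a$ and $W(t)=-\nu([a,b])=0$ for $t\ge b$, so restricting the $t$-integral to $[a,b]$ loses nothing.
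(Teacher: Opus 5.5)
Your proof is correct and follows the paper's argument. Both write $q(d)=q(a)+\int_a^b\ind\{t<d\}q'(t)\dd t$, drop the constant because $\nu([a,b])=0$, interchange by Fubini, and identify the tail mass $\nu((t,b])=-\nu([a,t])$ with $W(t)$. Your Jordan-decomposition justification of the interchange, which needs only $q'\in L^1[a,b]$ and the finiteness of $\nu^{\pm}$, fills in a step the paper leaves implicit. It also correctly shows that the stated integrability hypothesis on $Wq'$ is automatic.
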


\begin{proof}
  Write $q(d)=q(a)+\int_a^b\ind\{t<d\}q'(t)\dd t$.  The constant integrates to zero.  Fubini's theorem gives
  \[
    \int q(d)\nu(\dd d)
    =
    \int_a^bq'(t)\nu((t,b])\dd t.
  \]
  Zero total mass implies $\nu((t,b])=-\nu([a,t])=W(t)$ apart from an immaterial atom convention on a Lebesgue-null set, proving the identity.
\end{proof}

\begin{proof}[Proof of Lemma~\ref{lem:weight-nonnegative}]
  If $d<\mu_D$, then $(D-\mu_D)\ind\{D\le d\}\le0$.  If $d\ge\mu_D$, centeredness gives
  \[
    \E_+[(D-\mu_D)\ind\{D\le d\}]
    =
    -\E_+[(D-\mu_D)\ind\{D>d\}]\le0.
  \]
  Multiplication by $-1/V_D$ proves nonnegativity.
\end{proof}

\begin{proof}[Proof of Theorem~\ref{thm:response-decomp}]
  Under Assumptions~\ref{ass:consistency}, \ref{ass:dosept}, and~\ref{ass:connected}(iii),
  \[
    \begin{aligned}
      \E[Y\mid D,T=1]
      &=\E[U\mid T=1]+\E[\tau(D)\mid D,T=1]\\
      &=\E[U\mid T=1]+q(D)
      \quad \Q\text{-almost surely}.
    \end{aligned}
  \]
  By iterated expectations and $\E_+[\alpha]=0$,
  \[
    \theta_R
    =
    \E_+[\alpha(D)\tau(D)]
    =
    \E_+[\alpha(D)q(D)].
  \]
  Under Assumption~\ref{ass:connected}, $q$ is continuously differentiable on $[a,b]$, hence absolutely continuous, and $|W(d)|\le\{\E_+[\alpha(D)^2]\}^{1/2}=V_D^{-1/2}$ by Cauchy--Schwarz, so $\int_a^b|W(t)q'(t)|\dd t<\infty$.  Apply Lemma~\ref{lem:signed-connected} to the signed measure $\nu(B)=\E_+[\alpha(D)\ind\{D\in B\}]$.  Its cumulative weight is $W$, giving \eqref{eq:qprime-representation}.  Apply the same lemma with $q(d)=d$.  The left side is $\E_+[\alpha D]=1$ and the derivative is one, proving \eqref{eq:W-normalized}.  Nonnegativity follows from Lemma~\ref{lem:weight-nonnegative}.

  The chain rule along the diagonal gives
  \[
    q'(d)=ACRT(d\mid d)+S(d).
  \]
  Substitution proves \eqref{eq:selection-decomp} and \eqref{eq:selection-part}; the integrals are finite because $W$ is bounded and $ACRT(\cdot\mid\cdot)$ and $S$ are continuous on the compact interval $[a,b]$.  The sign results follow from $W\ge0$, and the triangle inequality gives \eqref{eq:sensitivity-bound}.
\end{proof}

\subsection{Balancing and local reweighting}\label{app:response-characterizations}

\begin{proof}[Proof of Proposition~\ref{prop:minnorm}]
  For any $a\in\mathcal A_R$,
  \[
    1=\E_+[a(D)D]=\E_+[a(D)(D-\mu_D)].
  \]
  Cauchy--Schwarz gives $1\le\{\E_+[a(D)^2]V_D\}^{1/2}$, hence $\E_+[a(D)^2]\ge1/V_D$.  The function $\alpha=(D-\mu_D)/V_D$ belongs to the class and attains equality. Equality in Cauchy--Schwarz requires proportionality to $D-\mu_D$, and the normalization fixes the constant at $1/V_D$.  Under homoskedasticity,
  \[
    \Var_+\{a(D)\varepsilon\}
    =
    \sigma^2\E_+[a(D)^2],
  \]
  so the same unique minimizer applies.
\end{proof}

Proposition~\ref{prop:minnorm} gives $\alpha$ the smallest $L_2(\Q_D)$ norm among representers satisfying the two restrictions in \eqref{eq:balancing-class-main}.  The same balancing geometry appears in the least-squares estimands of \citet{hines-diazordaz-vansteelandt-2026}.  Without covariates, their two least-squares targets coincide with $\theta_R$, and under outcome homoskedasticity their efficiency-optimal representer reduces to $\alpha$.  Their criterion minimizes a nonparametric efficiency bound for a sample-analogue target, whereas Proposition~\ref{prop:minnorm} minimizes the $L_2(\Q_D)$ norm directly.

\begin{proof}[Proof of Proposition~\ref{prop:tilt}]
  For $|\varepsilon|\le\varepsilon_0/4$, the integrands $\mu(D)e^{\varepsilon D}$, $De^{\varepsilon D}$, and $e^{\varepsilon D}$, together with their $\varepsilon$-derivatives, are dominated by $\{1+|\mu(D)|\}(1+D^2)e^{\varepsilon_0|D|/4}$.  This bound is $\Q$-integrable: by Cauchy--Schwarz its mean is at most $\{\E_+[(1+|\mu(D)|)^2]\}^{1/2}\{\E_+[(1+D^2)^2e^{\varepsilon_0|D|/2}]\}^{1/2}$, where $\E_+[\mu(D)^2]\le\E_+[Y^2]<\infty$ by Jensen's inequality and $(1+D^2)^2e^{\varepsilon_0|D|/2}\le Ce^{\varepsilon_0|D|}$ for a constant $C$.  Dominated convergence therefore justifies differentiating $M(\varepsilon)$, $\mathcal Y(\varepsilon)$, and $\mathcal D(\varepsilon)$ under the integral near zero.  At $\varepsilon=0$, the derivative of the likelihood ratio in \eqref{eq:tilt-law} is $d-\mu_D$, and differentiation under the integral gives
  \[
    \mathcal Y'(0)
    =
    \E_+[(D-\mu_D)\mu(D)]
    =
    \Cov_+(D,Y),
  \]
  and
  \[
    \mathcal D'(0)
    =
    \E_+[(D-\mu_D)D]=V_D.
  \]
  Taking the ratio proves \eqref{eq:tilt-interpretation}.
\end{proof}

Exponential tilting also appears in the stochastic policy framework of \citet{jetsupphasuk-et-al-2025}, but the estimands are different.  For a finite tilt parameter, they define a counterfactual dose distribution and, under their maintained causal identification assumptions, identify an average potential-outcome contrast under that policy relative to no treatment.  Here the tilted family $\{\Q_\varepsilon\}$ defines only a differentiable path through the observed treated-dose distribution, and $\theta_R$ is the ratio of the derivatives of the reweighted outcome and dose means at $\varepsilon=0$.  We therefore neither interpret $\mathcal Y(\varepsilon)$ at a finite $\varepsilon$ as a counterfactual policy value nor identify a finite stochastic policy effect.

\begin{remark}[Comparison with a global average derivative]\label{rem:global-average-derivative}
  The global average derivative studied by \citet{callaway-goodmanbacon-santanna-2026} is a different response parameter.  On a fixed compact support, integration by parts in its density-score representation can introduce endpoint terms; \citet{newey-stoker-1993} discuss the associated difficulty of boundary evaluation in an unrestricted fixed-support model.  Weighted average derivatives do not in general require density estimation: the least-squares weighted average derivatives of \citet{hines-diazordaz-vansteelandt-2026} form a density-free class that includes the representer used here, which is why $\theta_R$ can be estimated without a dose density, a derivative estimator, or a bandwidth.  We make no claim about the pathwise regularity of the global derivative under alternative support models.
\end{remark}

\subsection{Cohort-specific representations}

\begin{proof}[Proof of Corollary~\ref{cor:cohort-interpretations}]
  Apply Propositions~\ref{prop:minnorm} and \ref{prop:tilt} and Lemma~\ref{lem:weight-nonnegative} under $P(\,\cdot\mid G=g)$, with outcome $Y_c$ and the conditions stated for each result.  For the selection representation, repeat the proof of Theorem~\ref{thm:response-decomp} under $P(\,\cdot\mid G=g)$.  Theorem~\ref{thm:staggered}(ii) supplies the gain-covariance representation and the required cohort moments, while Assumption~\ref{ass:connected}(iii) for this cohort gives $q_{g,t}(D)=\E[\tau_c(D)\mid D,G=g]$ almost surely.  Iterated expectations, the signed-measure identity, and the chain rule then give the stated representation using only the cohort distribution.  Because the distribution of $D\mid G=g$ does not vary with $t$, neither do the representers or weights.
\end{proof}

\section{Proofs for closed-form stacked inference and aggregation}\label{app:baseline-inference}

\begin{proof}[Proof of Theorem~\ref{thm:baseline-stacked}]
  For the level estimator, the standard ratio expansion for a conditional mean gives
  \[
    \sqrt n(\overline Y_{c,g}-\mu_{1,g,t})
    =
    \frac1{\sqrt n}\sum_{i=1}^n
    \frac{\ind\{G_i=g\}}{q_g}(Y_{c,i}-\mu_{1,g,t})
    +o_{\Prb}(1),
  \]
  with the analogous expansion for the $G>t$ mean.  Subtraction yields \eqref{eq:phiL-0}.

  For the response slope, work first under $P(\,\cdot\mid G=g)$ and write $\mu_D=\mu_{D,g}$, $\mu_Y=\mu_{1,g,t}$, $C=\E[(D-\mu_D)(Y_c-\mu_Y)\mid G=g]$, and $V=V_g$. The influence functions of $C$ and $V$ under this conditional distribution are
  \[
    (D-\mu_D)(Y_c-\mu_Y)-C
    \quad\text{and}\quad
    (D-\mu_D)^2-V.
  \]
  The ratio rule for $C/V=\theta_{R,g,t}$ gives
  \[
    \frac1V\left[
      (D-\mu_D)(Y_c-\mu_Y)
      -\theta_{R,g,t}(D-\mu_D)^2
      \right].
  \]
  Transporting this cohort-specific influence function to the full-panel distribution $P$ multiplies it by $\ind\{G=g\}/q_g$, producing \eqref{eq:phiR-0}.

  The finite second moments in Assumption~\ref{ass:baseline-inference} justify these smooth sample-moment expansions.  Stack the finitely many cells.  Each fixed linear combination of the stacked influence function vector is i.i.d., mean zero, and square integrable, so the Cram\'er--Wold device and the scalar central limit theorem give \eqref{eq:baseline-vector-clt}.

  The sample proportions, conditional means, dose means, variances, and slopes are consistent.  Under the displayed moment conditions, replacing them in the influence functions yields $L_2$-consistent columns, and hence
  \[
    \mathbb P_n[
    \widehat{\boldsymbol\phi}^{P}
    \widehat{\boldsymbol\phi}^{P\prime}]
    -
    \mathbb P_n[
    \boldsymbol\phi^{P}
    \boldsymbol\phi^{P\prime}]
    =o_{\Prb}(1)
  \]
  entry by entry.  The law of large numbers proves \eqref{eq:baseline-sigma}.
\end{proof}

\begin{proof}[Proof of Theorem~\ref{thm:event-aggregation}]
  For each aggregate $r$, add and subtract the population-weighted cell estimates to obtain
  \[
    \widehat\eta_r-\eta_r
    =
    \sum_{j\in\mathcal J_r}
    w_{rj}(\widehat\vartheta_j-\vartheta_j)
    +
    \sum_{j\in\mathcal J_r}
    \vartheta_j(\widehat w_{rj}-w_{rj})
    +
    \sum_{j\in\mathcal J_r}
    (\widehat w_{rj}-w_{rj})
    (\widehat\vartheta_j-\vartheta_j).
  \]
  The first two sums are jointly $O_{\Prb}(n^{-1/2})$ by the maintained asymptotic linear representations.  Because the family of aggregates and all coordinate sets are fixed and finite, every term in the last sum is $O_{\Prb}(n^{-1})$, and hence that sum is $o_{\Prb}(n^{-1/2})$.  Substituting the two first-order expansions gives the influence function in \eqref{eq:event-if}.  The joint influence vector is i.i.d., mean zero, and square integrable, so the Cram\'er--Wold device yields the stated multivariate central limit theorem.

  For covariance consistency, the assumed empirical $L_2$ convergence of the combined plug-in columns and Cauchy--Schwarz imply, entry by entry,
  \[
    \mathbb P_n[
    \widehat{\boldsymbol\psi}
    \widehat{\boldsymbol\psi}']
    -
    \mathbb P_n[
    \boldsymbol\psi
    \boldsymbol\psi']
    =o_{\Prb}(1).
  \]
  The law of large numbers then gives \eqref{eq:event-sigma}.

  For the cohort-share construction, write
  $\widehat q_g=\mathbb P_n\ind\{G=g\}$ and
  $\widehat Q_e=\mathbb P_n\ind\{G\in\mathcal G_e\}$.  The ratio delta method applied to $\widehat w_{g,e}=\widehat q_g/\widehat Q_e$ gives
  \[
    \sqrt n(\widehat w_{g,e}-w_{g,e})
    =
    \frac1{\sqrt n}\sum_{i=1}^n
    \frac{
    \ind\{G_i=g\}
    -w_{g,e}\ind\{G_i\in\mathcal G_e\}
    }{Q_e}
    +o_{\Prb}(1),
  \]
  which is \eqref{eq:event-weight-alr}.  The positive lower bound on $Q_e$ and finiteness of the fixed cohort family also give square integrability and consistency of the displayed plug-in weight influence functions.
\end{proof}

\begin{proof}[Proof of Corollary~\ref{cor:pretrend-inference}]
  Apply the sample-mean and slope expansions in the proof of Theorem~\ref{thm:baseline-stacked} to $Y_c^{\mathrm{pre}}$ and the disjoint groups $G=g$ and $G>g$. These expansions use only the observed moments and positive denominators, so the ordering of the two outcome dates is immaterial. The stated moments give square-integrable influence functions and empirical $L_2$ consistency of their plug-in columns. Stacking the fixed finite collection, alone or together with the post-treatment collection under Assumption~\ref{ass:baseline-inference}, and applying the same central limit theorem and covariance argument proves the claim. Theorem~\ref{thm:event-aggregation} then supplies the aggregate influence functions and covariance under its conditions.
\end{proof}

\section{Covariate-adjusted results, DML derivations, and proofs}\label{app:dmlproofs}

This appendix develops the population characterizations and semiparametric details summarized in Section~\ref{sec:covariates}; adjusted response objects carry the superscript $(X)$.

\subsection{Adjusted balancing, tilt, weights, and selection}

Conditional on $G=g$, $\theta_{R,g,t}^{(X)}$ is the least-squares estimand $\Psi$ of \citet{hines-diazordaz-vansteelandt-2026}: the ratio of the average conditional covariance of dose and outcome to the average conditional dose variance, or equivalently an average of conditional least-squares slopes weighted by conditional dose variance.  Define its representer by
\begin{equation}\label{eq:alpha-X}
  \alpha_g^{(X)}(d,x)=\frac{d-e_g(x)}{V_g^{(X)}}.
\end{equation}
Then $\theta_{R,g,t}^{(X)}=\E[\alpha_g^{(X)}(D,X)Y_c\mid G=g]$.

Define
\[
  \mathcal A_g^{(X)}
  =
  \left\{
  \alpha\in L_2(P_{D,X\mid G=g}):
  \E[\alpha(D,X)\mid G=g,X]=0,
  \quad
  \E[\alpha(D,X)D\mid G=g]=1
  \right\}.
\]

\begin{proposition}[Adjusted minimum-norm balancing weight]\label{prop:covariate-properties}
  If $V_g^{(X)}>0$, then $\alpha_g^{(X)}$ is the unique minimizer of
  \[
    \min_{\alpha\in\mathcal A_g^{(X)}}
    \E[\alpha(D,X)^2\mid G=g],
    \qquad
    \min_{\alpha\in\mathcal A_g^{(X)}}
    \E[\alpha(D,X)^2\mid G=g]
    =
    \frac1{V_g^{(X)}}.
  \]
\end{proposition}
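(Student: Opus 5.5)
The argument mirrors the Cauchy--Schwarz proof of Proposition~\ref{prop:minnorm}, with the conditional centering constraint taking the role of the unconditional one. Throughout, work under $P(\,\cdot\mid G=g)$. Write $R=D-e_g(X)$, so that $V_g^{(X)}=\E[R^2\mid G=g]$ and $\alpha_g^{(X)}(D,X)=R/V_g^{(X)}$.

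\emph{Feasibility.} Since $\E[R\mid G=g,X]=0$, the representer $\alpha_g^{(X)}$ satisfies the conditional centering constraint. Next, $\E[\alpha_g^{(X)}D\mid G=g]=\E[RD\mid G=g]/V_g^{(X)}$. The term $\E[R\,e_g(X)\mid G=g]$ vanishes by iterated expectations, so this ratio equals $\E[R^2\mid G=g]/V_g^{(X)}=1$. Square integrability follows once $\E[D^2\mid G=g]<\infty$. I take this as implicit in $V_g^{(X)}$ being defined and finite, and I would state it explicitly.

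\emph{Lower bound.} For any $\alpha\in\mathcal A_g^{(X)}$, condition on $X$ and use $\E[\alpha\mid G=g,X]=0$. This gives $\E[\alpha\,e_g(X)\mid G=g]=\E[e_g(X)\E[\alpha\mid G=g,X]\mid G=g]=0$, and hence
\[
  1=\E[\alpha D\mid G=g]=\E[\alpha R\mid G=g].
\]
Cauchy--Schwarz then yields $1\le \E[\alpha^2\mid G=g]\,V_g^{(X)}$, so every feasible $\alpha$ has norm at least $1/V_g^{(X)}$. The feasible $\alpha_g^{(X)}$ attains this bound, since $\E[R^2\mid G=g]/(V_g^{(X)})^2=1/V_g^{(X)}$. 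The main technical care is the iterated-expectation step. It needs $\alpha\,e_g(X)$ to be integrable, which follows from $\alpha\in L_2$ together with $e_g\in L_2$ (by Jensen, from $\E[D^2\mid G=g]<\infty$).

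\emph{Uniqueness.} Equality in Cauchy--Schwarz in $L_2(P_{D,X\mid G=g})$ forces $\alpha=cR$ almost surely for some constant $c$, using $V_g^{(X)}>0$ so that $R\neq0$ in $L_2$. The constraint $\E[\alpha R\mid G=g]=1$ then fixes $c=1/V_g^{(X)}$. An equivalent route is the Hilbert-space one. The difference $\alpha-\alpha_g^{(X)}$ is orthogonal to $R$, because both functions have inner product one with $R$. The cross term in the expansion of $\E[\alpha^2\mid G=g]$ therefore vanishes, giving $\E[\alpha^2\mid G=g]=1/V_g^{(X)}+\E[(\alpha-\alpha_g^{(X)})^2\mid G=g]$, from which both the minimum and uniqueness follow.
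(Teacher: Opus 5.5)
Your proof is correct and matches the paper's argument: the paper likewise uses conditional centering to write $1=\E[\alpha(D,X)\{D-e_g(X)\}\mid G=g]$, applies Cauchy--Schwarz to obtain the bound $1/V_g^{(X)}$, checks that $\alpha_g^{(X)}$ attains it, and gets uniqueness from the equality case together with the normalization, as in Proposition~\ref{prop:minnorm}. Your integrability remark ($\E[D^2\mid G=g]<\infty$, so that $\alpha\,e_g(X)$ and $\alpha D$ are integrable) makes explicit a moment condition that the paper leaves implicit.
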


The local-tilt interpretation has an analogue that holds the covariate distribution fixed.  Exponentially tilt each conditional distribution of $D\mid G=g,X=x$ by $\exp(\varepsilon d)$ while holding the distribution of $X\mid G=g$ fixed, and let $\mathcal Y_{g,t}^{(X)}(\varepsilon)$ and $\mathcal D_g^{(X)}(\varepsilon)$ denote the induced cohort-average outcome and dose.  Suppose $V_g^{(X)}>0$, $\E[Y_c^2\mid G=g]<\infty$, and, for some $\varepsilon_0>0$, there is a function $B_g\in L_2(P_{X\mid G=g})$ such that, almost surely,
\[
  \sup_{|\varepsilon|\le\varepsilon_0}
  \frac{\E[(1+|Y_c|)(1+D^2)\exp(\varepsilon D)\mid G=g,X]}
       {\E[\exp(\varepsilon D)\mid G=g,X]}
  \le B_g(X)<\infty.
\]
This common exponential envelope permits differentiation conditional on $X$ and supplies integrable domination when averaging over $X\mid G=g$.  The dominated convergence argument in the proof of Proposition~\ref{prop:tilt} then gives
\[
  \theta_{R,g,t}^{(X)}
  =
  \frac{\mathcal Y_{g,t}^{(X)\prime}(0)}
  {\mathcal D_g^{(X)\prime}(0)}.
\]
Indeed, the conditional likelihood-ratio derivative at zero is $d-e_g(x)$, so differentiation yields the numerator and denominator in \eqref{eq:cohort-response-X-def}.

Define the adjusted response weight
\[
  W_g^{(X)}(d,x)
  =
  -\E[\alpha_g^{(X)}(D,X)\ind\{D\le d\}\mid G=g,X=x].
\]
For the derivative interpretation, define
\[
  \tau_{g,t,x}(u\mid d)
  =
  \E[Y_t(g,u)-Y_t(\infty,0)\mid D=d,G=g,X=x],
  \qquad
  q_{g,t,x}(d)=\tau_{g,t,x}(d\mid d).
\]
Assume that, for almost every $x$ under $P(\,\cdot\mid G=g)$, the conditional support is a compact interval $[a_g(x),b_g(x)]$ and the effect surface has a version that is jointly measurable in $(u,d,x)$ and continuously differentiable in $(u,d)$ near its diagonal.  Require the chosen version to satisfy
\begin{equation}\label{eq:diagonal-compatibility-X}
  q_{g,t,X}(D)=\E[\tau_c(D)\mid D,X,G=g]
  \qquad P(\,\cdot\mid G=g)\text{-almost surely}.
\end{equation}
The diagonal is then absolutely continuous.  Assume also that the weighted derivatives below are absolutely integrable after averaging over $X\mid G=g$.  The weights satisfy
\[
  W_g^{(X)}(d,x)\ge0,
  \qquad
  \int_{a_g(x)}^{b_g(x)}W_g^{(X)}(d,x)\dd d
  =
  \frac{\Var(D\mid G=g,X=x)}{V_g^{(X)}},
\]
and therefore
\begin{equation}\label{eq:cohort-W-X-global}
  \E_{X\mid G=g}
  \int W_g^{(X)}(d,X)\dd d=1.
\end{equation}
Normalization holds only after averaging over the cohort covariate distribution; the conditional integral need not equal one.

Define $ACRT_{g,t,x}(d\mid d)$ and $S_{g,t,x}(d)$ as in \eqref{eq:acrt-selection}.  Under consistency, no anticipation, Assumption~\ref{ass:covariate-identification}(iii), and the preceding support, smoothness, compatibility, and integrability conditions,
\begin{equation}\label{eq:cohort-selection-decomp-X}
  \theta_{R,g,t}^{(X)}
  =
  \E_{X\mid G=g}\int W_g^{(X)}(d,X)
  \{ACRT_{g,t,X}(d\mid d)+S_{g,t,X}(d)\}\dd d.
\end{equation}
The no-selection condition is untestable from the two-period comparison, so sign restrictions and the conditional analogue of \eqref{eq:sensitivity-bound} provide a sensitivity analysis.  Because the same $X$ is used across $t$, $e_g$, $V_g^{(X)}$, $\alpha_g^{(X)}$, and $W_g^{(X)}$ are common across outcome dates within a cohort.  As in Section~\ref{sec:why-two}, the derivative interpretation is limited to the connected-support conditions stated above.

\subsection{A compact partialling-out identity}

The distinction between level and response variation survives partialling out $X$.  For a generic two-period model with covariates, write
\begin{align*}
  \pi(x)    & =P(T=1\mid X=x),                      \\
  \mu_D(x)  & =\E[D\mid T=1,X=x],                   \\
  V_D(x)    & =\Var(D\mid T=1,X=x),                 \\
  C_{DY}(x) & =\Cov(D,Y\mid T=1,X=x),               \\
  \Delta_Y(x)
            & =\E[Y\mid T=1,X=x]-\E[Y\mid T=0,X=x].
\end{align*}
Let $\beta_F$ be the coefficient after partialling out the closed linear span of all square-integrable functions of $X$, and define
\begin{align*}
  \beta_F
    & =
  \frac{\E[\Cov(D,Y\mid X)]}{\E[\Var(D\mid X)]}, \\
  A & =\E[\pi(X)V_D(X)],
  \qquad
  B=\E[\pi(X)\{1-\pi(X)\}\mu_D(X)^2].
\end{align*}
When $A>0$, let $\theta_R^{(X)}=\E[\pi(X)C_{DY}(X)]/A$, equivalently the residual projection defined in \eqref{eq:thetaR-X-def} below.

\begin{proposition}[Adjusted partialling-out identity]\label{prop:adjusted-mixture}
  If $D=0$ for $T=0$ and $A>0$, then
  \begin{equation}\label{eq:beta-decomp-raw}
    \beta_F
    =
    \frac{
      A\theta_R^{(X)}
      +\E[\pi(X)\{1-\pi(X)\}\mu_D(X)\Delta_Y(X)]
    }{A+B}.
  \end{equation}
  Under consistency and conditional PT-L, $\Delta_Y(X)=\E[\tau(D)\mid T=1,X]$.
\end{proposition}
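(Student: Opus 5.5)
The plan is to repeat the moment calculation in the proof of Theorem~\ref{thm:mixture} conditionally on $X=x$ and then average the conditional numerators and denominators over $X$. Because $\beta_F$ is defined as a ratio of expected conditional moments, no projection argument beyond this is needed. Fix $x$ and work under $P(\,\cdot\mid X=x)$. The conditional treated share is $\pi(x)$, the conditional untreated share is $1-\pi(x)$, and $D=0$ when $T=0$. These are exactly the ingredients of the unconditional calculation, with $(p,\rho,\mu_D,V_D,\Delta_Y)$ replaced by $(\pi(x),1-\pi(x),\mu_D(x),V_D(x),\Delta_Y(x))$.

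Equations~\eqref{eq:twfe-var-proof} and \eqref{eq:twfe-cov-proof} then give, almost surely,
\[
  \Var(D\mid X)=\pi(X)\{V_D(X)+(1-\pi(X))\mu_D(X)^2\},
\]
and
\[
  \Cov(D,Y\mid X)=\pi(X)\{C_{DY}(X)+(1-\pi(X))\mu_D(X)\Delta_Y(X)\}.
\]
The covariance identity uses $C_{DY}(x)=V_D(x)\cdot\{C_{DY}(x)/V_D(x)\}$ only implicitly, so I keep it in the form $C_{DY}(x)$. This avoids dividing by $V_D(x)$, which may vanish for some $x$. Where $\pi(x)\in\{0,1\}$, the terms with $\mu_D$ or $\Delta_Y$ carry the factor $\pi(1-\pi)=0$, so any convention for the undefined conditional means is harmless.

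Next take expectations over $X$. The denominator becomes $A+B$ and the numerator becomes $\E[\pi C_{DY}]+\E[\pi(1-\pi)\mu_D\Delta_Y]$. By the definition $\theta_R^{(X)}=\E[\pi C_{DY}]/A$, the first numerator term equals $A\theta_R^{(X)}$. Since $A>0$, the denominator $A+B\ge A$ is positive, and the ratio is \eqref{eq:beta-decomp-raw}.

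The step I expect to need the most care is integrability and measurability. Square integrability of $Y$ and $D$, inherited from the paper's maintained moment conditions, makes the conditional moments finite almost surely and $L_1$. Cauchy--Schwarz controls $\pi(1-\pi)\mu_D\Delta_Y$. The conditional expectations should also be chosen as regular versions. The final claim follows by applying the proof of Theorem~\ref{thm:two-margins}(i) conditionally on $X$. Consistency gives $Y=U+\tau(D)$ for treated units and $Y=U$ for controls, and conditional PT-L cancels $\E[U\mid T=1,X]-\E[U\mid T=0,X]$. This leaves $\Delta_Y(X)=\E[\tau(D)\mid T=1,X]$ wherever $0<\pi(X)<1$, which is almost everywhere on the set where $\Delta_Y$ enters with nonzero weight.
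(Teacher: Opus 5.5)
Your proposal is correct and follows the paper's proof. Like the paper, you compute $\Var(D\mid X)$ and $\Cov(D,Y\mid X)$ conditionally using the zero dose for controls, integrate over $X$ to obtain $A+B$ and $A\theta_R^{(X)}+\E[\pi(X)\{1-\pi(X)\}\mu_D(X)\Delta_Y(X)]$, and read off $\Delta_Y(X)$ from the conditional version of Theorem~\ref{thm:two-margins}(i). Your attention to integrability, to the cases $\pi(x)\in\{0,1\}$, and to keeping $C_{DY}(x)$ rather than dividing by $V_D(x)$ supplies details the paper leaves implicit.
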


Partialling out $X$ therefore does not isolate the adjusted response index: the regression retains a level component.

\subsection{Canonical gradients and generic-cell DML}

Under the treated-group distribution $\Q$, write
\begin{equation}\label{eq:thetaR-X-def}
  \begin{aligned}
    e(x)              & =\E[D\mid T=1,X=x],                        \\
    g(x)              & =\E[Y\mid T=1,X=x],                        \\
    V^{(X)}           & =\E_+[(D-e(X))^2],                         \\
    \theta_R^{(X)}
                      & =\frac{\E_+[(D-e(X))\{Y-g(X)\}]}{V^{(X)}}, \\
    \alpha^{(X)}(d,x) & =\frac{d-e(x)}{V^{(X)}}.
  \end{aligned}
\end{equation}

For generic nuisances in cell notation $(Y,T,D,X)$, define the orthogonal scores
\begin{equation}\label{eq:level-score}
  \psi_L(O;\delta,\bar m,\bar\pi)
  =
  T\{Y-\bar m(X)-\delta\}
  -(1-T)\frac{\bar\pi(X)}{1-\bar\pi(X)}
  \{Y-\bar m(X)\},
\end{equation}
and
\begin{equation}\label{eq:response-score}
  \psi_R^{(X)}(O;\theta,\bar e,\bar g)
  =
  T\left[
    \{D-\bar e(X)\}\{Y-\bar g(X)\}
    -\theta\{D-\bar e(X)\}^2
    \right].
\end{equation}

In generic cell notation, let $p=P_c(T=1)$ and $V^{(X)}=\E[(D-e(X))^2\mid T=1]$, and define the adjusted level functional of the observed distribution
\[
  \delta_L^{(X)}=\E[Y-m_0(X)\mid T=1].
\]
The canonical gradients under $P_c$ are
\begin{equation}\label{eq:phiL}
  \phi_L^{P_c}(O)
  =
  \frac1p\left[
    T\{Y-m_0(X)-\delta_L^{(X)}\}
    -(1-T)\frac{\pi(X)}{1-\pi(X)}\{Y-m_0(X)\}
    \right],
\end{equation}
and
\begin{equation}\label{eq:phiR}
  \phi_R^{(X),P_c}(O)
  =
  \frac{T}{pV^{(X)}}
  \left[
    \{D-e(X)\}\{Y-g(X)\}
    -\theta_R^{(X)}\{D-e(X)\}^2
    \right].
\end{equation}

\begin{proposition}[Canonical gradients]\label{prop:gradients}
  Under overlap, positive residual dose variation, and square integrability of the displayed functions, \eqref{eq:phiL} and \eqref{eq:phiR} are the canonical gradients of the adjusted level and response functionals of the observed distribution in the unrestricted nonparametric model.
\end{proposition}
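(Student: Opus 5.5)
The plan is to verify each formula as an efficient influence function by the standard pathwise-derivative route. Take a regular parametric submodel $P_{c,s}$ through $P_c$ with score $\ell(O)$. Write the density of the observed data $O=(Y,T,D,X)$ as
\[
  f(x)\,\pi(x)^T\{1-\pi(x)\}^{1-T}f(d,y\mid T,x).
\]
The score then splits into orthogonal pieces $\ell=\ell_X(X)+\ell_T(T,X)+\ell_{DY}(D,Y\mid T,X)$, each mean zero given the preceding variables. For each functional $\Psi(P_{c,s})$ the goal is to show
\[
  \partial_s\Psi(P_{c,s})|_{s=0}=\E_{P_c}[\phi(O)\ell(O)].
\]
Because the model is nonparametric, the tangent space is all of $L_2^0(P_c)$. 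Any mean-zero, square-integrable $\phi$ satisfying this identity is therefore the unique canonical gradient. It remains to check that \eqref{eq:phiL} and \eqref{eq:phiR} have mean zero; this follows from the definitions of $\delta_L^{(X)}$, $\theta_R^{(X)}$, $m_0$, $e$, $g$ together with overlap.

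For the level functional, write
\[
  \delta_L^{(X)}=\frac{\E[T Y]-\E[T m_0(X)]}{p},
\]
and differentiate the three pieces. The term $\E[TY]/p$ gives the usual $T(Y-\E[Y\mid T=1])/p$ contribution. For the term $\E[Tm_0(X)]/p=\E[\pi(X)m_0(X)]/p$, perturbing $f(x)$ and $\pi(x)$ gives $\{T m_0(X)-\E[Tm_0]\}/p$ plus the $p$-derivative. Perturbing $m_0(x)=\E[Y\mid T=0,X=x]$ gives
\[
  \E\!\left[\pi(X)\,\frac{(1-T)\{Y-m_0(X)\}}{1-\pi(X)}\,\ell\right]\Big/p,
\]
which is the familiar control-regression derivative. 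Collecting terms and subtracting $\delta_L^{(X)}$ times the score of $p$ (namely $(T-p)/p$) yields \eqref{eq:phiL}. The signs and the $\pi/(1-\pi)$ weight come out exactly as in the augmented DiD score of \citet{santanna-zhao-2020}.

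For the response functional, write $\theta_R^{(X)}=N/V^{(X)}$ with $N=\E_+[(D-e(X))(Y-g(X))]$, and use the quotient rule. The key observation is that $N$ and $V^{(X)}$ are, under $\Q$, expected conditional covariances and variances. For these, the derivatives with respect to $e$ and $g$ vanish by the conditional-mean-zero properties $\E_+[D-e(X)\mid X]=0$ and $\E_+[Y-g(X)\mid X]=0$. That vanishing is Neyman orthogonality in gradient form. Hence the derivative of $N$ under $\Q$ is $\E_+[\{(D-e)(Y-g)-N\}\ell_+]$, and similarly for $V^{(X)}$, where $\ell_+$ is the score of $\Q$. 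Transport from $\Q$ to $P_c$ multiplies by $T/p$, as in the proof of Theorem~\ref{thm:baseline-stacked}. The constant terms $-N$ and $-V^{(X)}$ cancel in the combination $N-\theta_R^{(X)}V^{(X)}=0$, leaving \eqref{eq:phiR}.

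The main obstacle is the nuisance-perturbation bookkeeping: keeping track of which conditional distribution each nuisance is defined under, with $m_0$ and $\pi$ under $P_c$ and $e$ and $g$ under $\Q$. This matters most for the level term, where the $\pi/(1-\pi)$ factor must be derived rather than guessed. I would handle it by writing $\E[\pi(X)m_0(X)]$ as $\E[(1-T)\,w(X)\,Y]$ with $w=\pi/(1-\pi)$, and differentiating both $w$ and the control law explicitly. Square integrability and overlap ensure that all the interchanges of derivative and expectation are valid.
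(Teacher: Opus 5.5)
Your proposal is correct and follows essentially the same route as the paper. The paper splits $\delta_L^{(X)}$ into the treated mean and $\E[Tm_0(X)]/p$, picks up the control-regression term with the weight $\pi/(1-\pi)$, applies the quotient rule for the response under $\Q$ with vanishing derivatives through $e$ and $g$, and transports by $T/p$. The one thing to fix is the $p$-derivative in the level step: it must enter exactly once overall, as $-\delta_L^{(X)}(T-p)/p$ added to $p^{-1}$ times the numerator's influence function. Your $T(Y-\E[Y\mid T=1])/p$ and your ``plus the $p$-derivative'' already include it, so subtracting $\delta_L^{(X)}(T-p)/p$ again, as written, would double count.
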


\begin{theorem}[Generic adjusted joint DML]\label{thm:joint-dml}
  Let $n_c=\sum_iS_{c,i}$, let $\boldsymbol\phi_c^{P_c}=(\phi_L^{P_c},\phi_R^{(X),P_c})'$ with $\Sigma_c=\E[\boldsymbol\phi_c^{P_c}\boldsymbol\phi_c^{P_c\prime}\mid S_c=1]$, and let $\widehat{\boldsymbol\phi}_c^{P_c}$ denote the plug-in gradient obtained from \eqref{eq:phiL} and \eqref{eq:phiR} by replacing $(m_0,\pi,e,g)$ with their out-of-fold estimates and $(p,V^{(X)},\delta_L^{(X)},\theta_R^{(X)})$ with $(\mathbb P_{n,c}[T],\mathbb P_{n,c}[T\{D-\widehat e_{-k}(X)\}^2]/\mathbb P_{n,c}[T],\widehat\delta_L^{(X)},\widehat\theta_R^{(X)})$.  Under Assumption~\ref{ass:dml}:
  \begin{enumerate}[label=(\roman*),leftmargin=*]
    \item the two estimators in a generic cell obey
          \begin{equation}\label{eq:joint-alr}
            \sqrt{n_c}
            \begin{pmatrix}
              \widehat\delta_L^{(X)}-\delta_L^{(X)} \\
              \widehat\theta_R^{(X)}-\theta_R^{(X)}
            \end{pmatrix}
            =
            \frac1{\sqrt{n_c}}\sum_{i:S_{c,i}=1}
            \begin{pmatrix}
              \phi_L^{P_c}(O_i) \\
              \phi_R^{(X),P_c}(O_i)
            \end{pmatrix}
            +o_{\Prb}(1);
          \end{equation}
    \item $\sqrt{n_c}(\widehat\delta_L^{(X)}-\delta_L^{(X)},\widehat\theta_R^{(X)}-\theta_R^{(X)})'\xrightarrow{d} N(0,\Sigma_c)$;
    \item $\widehat\Sigma_c=\mathbb P_{n,c}[\widehat{\boldsymbol\phi}_c^{P_c}\widehat{\boldsymbol\phi}_c^{P_c\prime}]\xrightarrow{p}\Sigma_c$;
    \item the estimators are regular, and $\Sigma_c$ equals the nonparametric efficiency bound for $(\delta_L^{(X)},\theta_R^{(X)})$.
  \end{enumerate}
\end{theorem}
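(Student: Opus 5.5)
The plan is to treat each estimator as a cross-fitted ratio of two sample means of orthogonal scores, expand each numerator and denominator around its population counterpart, and apply the delta method. For the level estimator \eqref{eq:deltahat-X}, write
\[
\widehat\delta_L^{(X)}-\delta_L^{(X)}
=\frac{\mathbb P_{n,c}[\psi_L(O;\delta_L^{(X)},\widehat m_{0,-k},\widehat\pi_{-k})]}{\mathbb P_{n,c}[T]}.
\]
For the response estimator \eqref{eq:thetahat-X}, write
\[
\widehat\theta_R^{(X)}-\theta_R^{(X)}
=\frac{\mathbb P_{n,c}[\psi_R^{(X)}(O;\theta_R^{(X)},\widehat e_{-k},\widehat g_{-k})]}{\mathbb P_{n,c}[T\{D-\widehat e_{-k}(X)\}^2]}.
\]
Denominators converge to $p$ and $pV^{(X)}$. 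For the second one I use $L_2$ consistency of $\widehat e$, boundedness, and the fold-wise conditional law of large numbers. Both limits are bounded away from zero because $p>0$ and $V^{(X)}>0$. It then suffices to show that each numerator equals $\mathbb P_{n,c}$ of the score evaluated at the true nuisances, plus $o_{\Prb}(n_c^{-1/2})$. Dividing by $p$ and $pV^{(X)}$ yields exactly \eqref{eq:phiL} and \eqref{eq:phiR}, which gives (i).

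For each fold $k$, I split the nuisance error into two parts. The first is an empirical-process term, $(\mathbb P_{n,c,k}-P_c)[\psi(\widehat\eta_{-k})-\psi(\eta)]$. Conditional on the training folds it has mean zero, and its variance is bounded by $n_k^{-1}\E[(\psi(\widehat\eta)-\psi(\eta))^2]$. By boundedness, the truncation of $\widehat\pi$ to $[\epsilon/2,1-\epsilon/2]$, and $L_2$ consistency, this is $o(n^{-1})$. Chebyshev conditional on the training data then gives $o_{\Prb}(n^{-1/2})$, with no Donsker condition needed.

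The second part is the bias term $P_c[\psi(\widehat\eta)-\psi(\eta)]$, and this is where the exact remainders enter.

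\textbf{Level score.} Iterating expectations on $X$ gives
\[
P_c[\psi_L(\bar m,\bar\pi)-\psi_L(m_0,\pi)]
=\E_c\Big[(m_0-\bar m)(X)\Big\{\pi(X)-(1-\pi(X))\tfrac{\bar\pi(X)}{1-\bar\pi(X)}\Big\}\Big]
=\E_c\Big[\frac{(m_0-\bar m)(X)\,(\pi-\bar\pi)(X)}{1-\bar\pi(X)}\Big].
\]
The first equality uses the true-$\pi$ cancellation of the control term. By Cauchy--Schwarz this is bounded by a constant times $\norm{\widehat m_0-m_0}_2\norm{\widehat\pi-\pi}_2=o_{\Prb}(n^{-1/2})$.

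\textbf{Response score.} Using $\E[D-e(X)\mid T=1,X]=0$ and $\E[Y-g(X)\mid T=1,X]=0$, the bias under $T=1$ reduces to
\[
p\,\E_+\big[(e-\bar e)(g-\bar g)\big]-\theta\,p\,\E_+\big[(e-\bar e)^2\big].
\]
This is controlled by \eqref{eq:response-products}. The squared term is why $\norm{\widehat e-e}_2=o(n^{-1/4})$ is needed. The same bias calculation shows the denominator error is only $O(\norm{\widehat e-e}_2^2)+O_{\Prb}(n^{-1/2})$, which suffices after multiplying by the $O_{\Prb}(n^{-1/2})$ numerator.

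Summing over the fixed number of folds completes (i). Part (ii) follows from (i) by the multivariate CLT with Cram\'er--Wold, since the gradients are bounded, mean zero, and i.i.d. given $S_c$. Here $n_c/n\to P(S_c=1)$.

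For (iii), I show $\mathbb P_{n,c}\norm{\widehat{\boldsymbol\phi}_c^{P_c}-\boldsymbol\phi_c^{P_c}}^2=o_{\Prb}(1)$. This follows from consistency of the plug-in scalars, $L_2$ consistency of the nuisances evaluated out of fold (conditional Markov within folds), and boundedness. Cauchy--Schwarz plus the LLN then yield $\widehat\Sigma_c\to\Sigma_c$.

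For (iv), I invoke Proposition~\ref{prop:gradients}. An estimator that is asymptotically linear with influence function equal to the canonical gradient in the nonparametric model is regular and efficient, and its variance is the efficiency bound. Joint efficiency for the pair follows because the stacked influence function is the stacked canonical gradient.

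The main obstacle is the response bias calculation, and the care it needs in tracking the $T$ factor and the cell-conditional norms. The norms for $(e,g)$ are under $X\mid G=g$ while $P_c$ mixes treated and controls, so the factor $p$ and the treated-only expectations must be converted consistently. The truncation step also needs care, since it must hold uniformly over folds.
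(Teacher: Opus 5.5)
Your proposal is correct and follows essentially the same route as the paper. That route is: write each estimator as a ratio through its linear score, bound the exact level and response nuisance remainders (Lemmas~\ref{lem:level-rem} and~\ref{lem:response-rem}) by the product rates, control the fold-wise empirical-process term by conditional Chebyshev (Lemma~\ref{lem:crossfit}), show the denominators are consistent, apply Cram\'er--Wold, establish empirical $L_2$ convergence of the plug-in gradient, and use the canonical gradient of Proposition~\ref{prop:gradients} for regularity and efficiency.
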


\subsection{Proofs of adjusted population identities}

Conditional on $G=g$, total covariance gives
\[
  \Cov(D,Y_c\mid G=g)
  =
  \E[\Cov(D,Y_c\mid G=g,X)\mid G=g]
  +\Cov\{e_g(X),g_{g,t}(X)\mid G=g\}.
\]
The first term is $V_g^{(X)}\theta_{R,g,t}^{(X)}$.  Total variance gives
\[
  V_g
  =
  V_g^{(X)}+\Var\{e_g(X)\mid G=g\}.
\]
Their ratio proves \eqref{eq:response-0-X-relation}.

Under consistency and conditional PT-L,
\[
  m_{0,g,t}(X)=\E[U_c\mid G>t,X]
  =
  \E[U_c\mid G=g,X].
\]
Therefore
\[
  \E[Y_c-m_{0,g,t}(X)\mid G=g]
  =
  \E[\tau_c(D)\mid G=g]=\tau_{L,g,t},
\]
proving \eqref{eq:cohort-level-id-X}.

\begin{proof}[Proof of Proposition~\ref{prop:covariate-properties}]
  For every $\alpha\in\mathcal A_g^{(X)}$,
  \[
    1
    =
    \E[\alpha(D,X)\{D-e_g(X)\}\mid G=g].
  \]
  Cauchy--Schwarz implies
  \[
    \E[\alpha(D,X)^2\mid G=g]\ge\frac1{V_g^{(X)}}.
  \]
  The representer in \eqref{eq:alpha-X} belongs to the class and attains the bound.  Equality and normalization give uniqueness exactly as in the proof of Proposition~\ref{prop:minnorm}.
\end{proof}

Fix $x$ and apply the proof of Lemma~\ref{lem:weight-nonnegative} with conditional mean $e_g(x)$ and common positive scale $V_g^{(X)}$.  This proves $W_g^{(X)}(d,x)\ge0$.  Applying Lemma~\ref{lem:signed-connected} conditionally on $X=x$ to the function $q(d)=d$ gives
\[
  \int W_g^{(X)}(d,x)\dd d
  =
  \E[\alpha_g^{(X)}(D,X)D\mid G=g,X=x]
  =
  \frac{\Var(D\mid G=g,X=x)}{V_g^{(X)}}.
\]
Averaging proves \eqref{eq:cohort-W-X-global}.  Consistency, no anticipation, and conditional PT-R remove the additive conditional untreated trend.  Equation~\eqref{eq:diagonal-compatibility-X} and iterated expectations give, for almost every $x$,
\[
  \begin{aligned}
    \E[\alpha_g^{(X)}(D,x)Y_c\mid G=g,X=x]
    &=\E[\alpha_g^{(X)}(D,x)\tau_c(D)\mid G=g,X=x]\\
    &=\E[\alpha_g^{(X)}(D,x)q_{g,t,x}(D)\mid G=g,X=x].
  \end{aligned}
\]
A conditional application of the signed-measure lemma followed by the diagonal chain rule and averaging over $X\mid G=g$ proves \eqref{eq:cohort-selection-decomp-X}; the stipulated absolute integrability justifies the averaging.

\begin{proof}[Proof of Proposition~\ref{prop:adjusted-mixture}]
  Conditional on $X=x$, the zero dose for controls gives $\Var(D\mid X=x)=\pi V_D+\pi(1-\pi)\mu_D^2$ and $\Cov(D,Y\mid X=x)=\pi C_{DY}+\pi(1-\pi)\mu_D\Delta_Y$.  Integrating over $X$ gives denominator $A+B$ and numerator $A\theta_R^{(X)}+\E[\pi(X)\{1-\pi(X)\}\mu_D(X)\Delta_Y(X)]$, proving \eqref{eq:beta-decomp-raw}.  Consistency and conditional PT-L give the stated interpretation of $\Delta_Y(X)$.
\end{proof}

\subsection{Canonical gradients}

\begin{proof}[Proof of Proposition~\ref{prop:gradients}]
  Work under $P_c$ for a generic eligible cell and suppress $c$.  For the level functional, write
  \[
    \mu_1=\E[Y\mid T=1],
    \qquad
    \nu_0=\E[m_0(X)\mid T=1],
    \qquad
    \delta_L^{(X)}=\mu_1-\nu_0.
  \]
  The conditional-mean ratio gives
  \[
    IF_{\mu_1}(O)=\frac{T}{p}(Y-\mu_1).
  \]
  For $N_0=\E[Tm_0(X)]$, differentiation along a regular submodel with score $s(O)$ has the direct component $Tm_0(X)-N_0$ and the regression component
  \[
    \E[T\dot m_0(X)]
    =
    \E\left[
      (1-T)\frac{\pi(X)}{1-\pi(X)}
      \{Y-m_0(X)\}s(O)
      \right].
  \]
  The ratio rule for $\nu_0=N_0/p$ therefore gives
  \[
    IF_{\nu_0}(O)
    =
    \frac{T}{p}\{m_0(X)-\nu_0\}
    +\frac{1-T}{p}\frac{\pi(X)}{1-\pi(X)}
    \{Y-m_0(X)\}.
  \]
  Subtracting yields \eqref{eq:phiL}.

  For the response, first work under the treated-group distribution $\Q$.  Let
  \[
    N_R=\E_+[(D-e(X))(Y-g(X))],
    \qquad
    V^{(X)}=\E_+[(D-e(X))^2].
  \]
  The derivatives through $e$ and $g$ vanish because both residuals have conditional mean zero.  Thus influence functions for $N_R$ and $V^{(X)}$ are
  \[
    (D-e(X))(Y-g(X))-N_R
    \quad\text{and}\quad
    (D-e(X))^2-V^{(X)}.
  \]
  The ratio rule gives the gradient under $\Q$:
  \[
    \frac1{V^{(X)}}\left[
    (D-e(X))(Y-g(X))
    -\theta_R^{(X)}(D-e(X))^2
    \right].
  \]
  Transporting it from $\Q$ to the eligible-population distribution $P_c$ multiplies it by $T/p$, proving \eqref{eq:phiR}.  Both gradients are mean zero and lie in the unrestricted nonparametric tangent space, so they are canonical.
\end{proof}

\subsection{A cross-fitting lemma}

\begin{lemma}[Fixed-fold cross-fitting expansion]\label{lem:crossfit}
  Let $W_1,\ldots,W_n$ be i.i.d. and split into a fixed number $K$ of folds $I_k$.  Let $\widehat\eta_k$ be measurable with respect to observations outside $I_k$.  Suppose $\E[\psi(W;\eta_0)]=0$, $\E[\psi(W;\eta_0)^2]<\infty$, and, uniformly over folds,
  \[
    \norm{\psi(\cdot;\widehat\eta_k)-
      \psi(\cdot;\eta_0)}_{L_2(P)}=o_{\Prb}(1),
  \]
  \[
    \sqrt n\left|
    \E[\psi(W;\widehat\eta_k)\mid\widehat\eta_k]
    \right|=o_{\Prb}(1).
  \]
  Then
  \begin{equation}\label{eq:crossfit-expansion}
    \frac1{\sqrt n}\sum_{k=1}^K\sum_{i\in I_k}
    \psi(W_i;\widehat\eta_k)
    =
    \frac1{\sqrt n}\sum_{i=1}^n\psi(W_i;\eta_0)
    +o_{\Prb}(1).
  \end{equation}
  The result holds componentwise for a fixed-dimensional vector score.
\end{lemma}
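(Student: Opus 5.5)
The plan is the standard cross-fitting argument: add and subtract the score at the true nuisance within each fold, then control the resulting remainder by a conditional second-moment bound. For each fold $k$, write $n_k=|I_k|$ and decompose
\[
  \frac1{\sqrt n}\sum_{i\in I_k}\psi(W_i;\widehat\eta_k)
  =
  \frac1{\sqrt n}\sum_{i\in I_k}\psi(W_i;\eta_0)
  +R_{1k}+R_{2k},
\]
where
\[
  R_{1k}=\frac1{\sqrt n}\sum_{i\in I_k}\Bigl[\psi(W_i;\widehat\eta_k)-\psi(W_i;\eta_0)-\E\{\psi(W;\widehat\eta_k)\mid\widehat\eta_k\}\Bigr],
\]
\[
  R_{2k}=\frac{n_k}{\sqrt n}\,\E\{\psi(W;\widehat\eta_k)\mid\widehat\eta_k\}.
\]
Here $\E[\psi(W;\eta_0)]=0$ is used to center. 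Since $K$ is fixed, it suffices to show $R_{1k}=o_{\Prb}(1)$ and $R_{2k}=o_{\Prb}(1)$ for each $k$. Summing over folds then gives \eqref{eq:crossfit-expansion}.

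The bias term $R_{2k}$ is immediate. Because $n_k\le n$, we have $|R_{2k}|\le\sqrt n\,|\E\{\psi(W;\widehat\eta_k)\mid\widehat\eta_k\}|$, which is $o_{\Prb}(1)$ by the second hypothesis.

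The empirical-process term $R_{1k}$ is the main step. I condition on the observations outside $I_k$, and hence on $\widehat\eta_k$. Because the observations in $I_k$ are i.i.d. and independent of $\widehat\eta_k$, the summands of $R_{1k}$ are conditionally i.i.d. with conditional mean zero. Their conditional variance is therefore bounded by
\[
  \E[R_{1k}^2\mid\widehat\eta_k]\le\frac{n_k}{n}\,\norm{\psi(\cdot;\widehat\eta_k)-\psi(\cdot;\eta_0)}_{L_2(P)}^2=o_{\Prb}(1),
\]
by the first hypothesis. The conditional Chebyshev inequality gives $\Prb(|R_{1k}|>\epsilon\mid\widehat\eta_k)\le\epsilon^{-2}\,o_{\Prb}(1)$.

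The delicate point is converting this conditional statement into the unconditional $R_{1k}=o_{\Prb}(1)$. I would argue as in \citet{chernozhukov-et-al-2018}. Let $A_n$ be the event that the conditional variance is below some $\delta_n\to0$; by the $L_2$ hypothesis, $\Prb(A_n)\to1$. Then
\[
  \Prb(|R_{1k}|>\epsilon)\le \E\bigl[\min\{1,\epsilon^{-2}\delta_n\}\bigr]+\Prb(A_n^c)\to0.
\]
Equivalently, one can invoke the fact that a conditional probability bounded by one and converging to zero in probability also converges to zero in expectation, by bounded convergence.

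For a fixed-dimensional vector score, the same argument applies to each component separately.
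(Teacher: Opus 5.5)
Your proposal is correct and follows the paper's proof essentially step for step. Both arguments split each fold's sum into a conditionally centered term and a fold-bias term. The centered term is controlled by the conditional variance bound $(|I_k|/n)\norm{\psi(\cdot;\widehat\eta_k)-\psi(\cdot;\eta_0)}_{L_2(P)}^2$ and conditional Chebyshev, the bias term by the $\sqrt n$ conditional-bias hypothesis, and the folds are then summed over the fixed $K$. Your explicit passage from the conditional to the unconditional statement through the event $A_n$ (or bounded convergence) fills in a step the paper leaves implicit.
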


\begin{proof}
  Let $\delta_k(W)=\psi(W;\widehat\eta_k)-\psi(W;\eta_0)$ and center it conditionally on the training data.  The conditional-bias assumption makes the sum of fold means $o_{\Prb}(1)$ after $\sqrt n$ scaling.  The centered contribution from fold $k$ has conditional variance at most $(|I_k|/n)\norm{\delta_k}_{L_2(P)}^2=o_{\Prb}(1)$.  Conditional Chebyshev's inequality and fixed $K$ prove \eqref{eq:crossfit-expansion}.
\end{proof}

\subsection{Exact nuisance remainders}

\begin{lemma}[Level-score remainder]\label{lem:level-rem}
  Let $\bar m=m_0+\delta_m$ and $\bar\pi=\pi+\delta_\pi$, with $\bar\pi$ bounded away from one.  Then
  \[
    \E[\psi_L(O;\delta_L^{(X)},\bar m,\bar\pi)]
    =
    \E\left[
      \delta_m(X)
      \frac{\bar\pi(X)-\pi(X)}{1-\bar\pi(X)}
      \right].
  \]
\end{lemma}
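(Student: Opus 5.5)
The plan is to compute the expectation of the score directly by iterated expectations conditional on $X$, splitting it into its treated and control pieces.

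Write $\delta=\delta_L^{(X)}$. The treated piece is $\E[T\{Y-\bar m(X)-\delta\}]$. Conditioning on $X$ and using $\E[TY\mid X]=\pi(X)g(X)$ with $g(x)=\E[Y\mid T=1,X=x]$, it equals
\[
\E\bigl[\pi(X)\{g(X)-m_0(X)\}\bigr]-\E[\pi(X)\delta_m(X)]-p\delta.
\]
By the definition $\delta=\E[Y-m_0(X)\mid T=1]$, we have $p\delta=\E[T\{Y-m_0(X)\}]=\E[\pi(X)\{g(X)-m_0(X)\}]$. So the treated piece reduces to $-\E[\pi(X)\delta_m(X)]$.

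For the control piece, condition on $X$ and use $\E[(1-T)\{Y-m_0(X)\}\mid X]=0$, which holds because $m_0(x)=\E[Y\mid T=0,X=x]$. Then
\[
\E[(1-T)\{Y-\bar m(X)\}\mid X]=-\{1-\pi(X)\}\delta_m(X).
\]
Hence the control term, including its minus sign in $\psi_L$, contributes
\[
+\E\left[\frac{\bar\pi(X)}{1-\bar\pi(X)}\{1-\pi(X)\}\delta_m(X)\right].
\]

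Adding the two pieces gives
\[
\E\left[\delta_m(X)\left\{\frac{\bar\pi(1-\pi)}{1-\bar\pi}-\pi\right\}\right].
\]
Over the common denominator $1-\bar\pi$, the numerator is
\[
\bar\pi-\bar\pi\pi-\pi+\pi\bar\pi=\bar\pi-\pi,
\]
which yields the stated remainder.

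There is no genuine obstacle here. The only care needed is integrability: every term must be finite so that the iterated expectations and the splitting of the sum are legitimate. This follows from $\bar\pi$ being bounded away from one and from square-integrability of $Y$, $m_0$ and $\bar m$, which holds, for example, under the boundedness in Assumption~\ref{ass:dml}. It also relies on $m_0$ being the control regression on the support of $X$, which the overlap in Assumption~\ref{ass:covariate-identification}(i) secures.
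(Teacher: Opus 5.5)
Your proof is correct and follows essentially the paper's argument. The paper also conditions on $X$, uses $\E[Y-\bar m(X)\mid T=0,X]=-\delta_m(X)$, cancels $\E[\pi(X)\{g(X)-m_0(X)-\delta_L^{(X)}\}]=0$ by the definition of $\delta_L^{(X)}$, and then combines the coefficients on $\delta_m$, $-\pi+(1-\pi)\bar\pi/(1-\bar\pi)=(\bar\pi-\pi)/(1-\bar\pi)$. Splitting the score into treated and control pieces rather than writing $\E[\psi_L\mid X]$ in one line is only a presentational difference.
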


\begin{proof}
  Condition on $X=x$ and write $y_1(x)=\E[Y\mid T=1,X=x]$.  Since $\E[Y-\bar m(X)\mid T=0,X]=-\delta_m(X)$,
  \[
    \E[\psi_L\mid X]
    =
    \pi\{y_1-m_0-\delta_m-\delta_L^{(X)}\}
    +(1-\pi)\frac{\bar\pi}{1-\bar\pi}\delta_m.
  \]
  The expectation of $\pi(y_1-m_0-\delta_L^{(X)})$ is zero by definition of $\delta_L^{(X)}$.  The remaining coefficient on $\delta_m$ is $(\bar\pi-\pi)/(1-\bar\pi)$, proving the result.
\end{proof}

\begin{lemma}[Response-score remainder]\label{lem:response-rem}
  Let $\bar e=e+\delta_e$ and $\bar g=g+\delta_g$.  Then
  \[
    \E[\psi_R^{(X)}(O;\theta_R^{(X)},\bar e,\bar g)]
    =
    p\left\{
    \E_+[\delta_e(X)\delta_g(X)]
    -\theta_R^{(X)}\E_+[\delta_e(X)^2]
    \right\}.
  \]
\end{lemma}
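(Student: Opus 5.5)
The plan is to compute the expectation of the response score by conditioning on $(T,X)$, using only the definitions of $e$ and $g$ as conditional means under $\Q$. Because $\psi_R^{(X)}$ carries the factor $T$, we have
\[
  \E[\psi_R^{(X)}(O;\theta_R^{(X)},\bar e,\bar g)]
  =
  p\,\E_+\!\left[\{D-\bar e(X)\}\{Y-\bar g(X)\}-\theta_R^{(X)}\{D-\bar e(X)\}^2\right].
\]
So it suffices to evaluate the two treated-group expectations separately.

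For the cross term, write $D-\bar e(X)=\{D-e(X)\}-\delta_e(X)$ and $Y-\bar g(X)=\{Y-g(X)\}-\delta_g(X)$ and expand the product into four terms.
\begin{itemize}
  \item The product of the true residuals has $\E_+$-mean $N_R=\theta_R^{(X)}V^{(X)}$ by the definition in \eqref{eq:thetaR-X-def}.
  \item The two mixed terms, $-\delta_g(X)\{D-e(X)\}$ and $-\delta_e(X)\{Y-g(X)\}$, vanish after conditioning on $X$ under $\Q$. This holds because $\E_+[D-e(X)\mid X]=0$ and $\E_+[Y-g(X)\mid X]=0$.
  \item The last term contributes $\E_+[\delta_e(X)\delta_g(X)]$.
\end{itemize}

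For the squared term, the same expansion gives
\[
  \E_+[\{D-\bar e(X)\}^2]=V^{(X)}+\E_+[\delta_e(X)^2],
\]
since the cross term $-2\delta_e(X)\{D-e(X)\}$ again has conditional mean zero.

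Combining, the expectation equals
\[
  p\left\{\theta_R^{(X)}V^{(X)}+\E_+[\delta_e\delta_g]-\theta_R^{(X)}V^{(X)}-\theta_R^{(X)}\E_+[\delta_e^2]\right\},
\]
and the two $\theta_R^{(X)}V^{(X)}$ terms cancel, which gives the stated remainder.

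There is no real obstacle here. The only point needing care is integrability, so that each expanded term has a finite expectation and iterated expectations may be applied. This follows from finite second moments of $D$ and $Y$ under $\Q$ and square integrability of $\delta_e$ and $\delta_g$, together with Cauchy--Schwarz; under Assumption~\ref{ass:dml} everything is bounded anyway. The perturbations $\delta_e$ and $\delta_g$ are treated as fixed, nonrandom functions of $X$, as in Lemma~\ref{lem:level-rem}.
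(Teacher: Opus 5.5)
Your proposal is correct and follows essentially the same route as the paper. The paper also expands both residuals around $(e,g)$, drops the cross terms by conditional mean zero under $\Q$, and cancels $N_R=\theta_R^{(X)}V^{(X)}$. Your explicit factoring $\E[T\,h]=p\,\E_+[h]$ makes precise where the factor $p$ comes from, which the paper's proof leaves implicit.
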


\begin{proof}
  Under $\Q$,
  \[
    D-\bar e=D-e-\delta_e,
    \qquad
    Y-\bar g=Y-g-\delta_g.
  \]
  Expanding and using the two conditional mean-zero residuals gives
  \[
    \E_+[(D-\bar e)(Y-\bar g)]
    =
    N_R+\E_+[\delta_e\delta_g],
  \]
  \[
    \E_+[(D-\bar e)^2]
    =
    V^{(X)}+\E_+[\delta_e^2].
  \]
  Subtract $\theta_R^{(X)}$ times the second equality from the first and use $N_R=\theta_R^{(X)}V^{(X)}$.
\end{proof}

\subsection{DML asymptotic linearity}

\begin{proof}[Proof of Theorem~\ref{thm:joint-dml}]
  The level estimator solves the empirical level-score equation, whose derivative in $\delta$ is $-T$.  Hence
  \[
    \sqrt{n_c}(\widehat\delta_L^{(X)}-\delta_L^{(X)})
    =
    \frac{
      \sqrt{n_c}\mathbb P_{n,c}
      \psi_L(O;\delta_L^{(X)},\widehat m_{0,-k},\widehat\pi_{-k})
    }{\mathbb P_{n,c}[T]}.
  \]
  Lemma~\ref{lem:level-rem}, overlap, and Cauchy--Schwarz bound each fold-specific population bias by a constant times
  \[
    \norm{\widehat m_0-m_0}_2
    \norm{\widehat\pi-\pi}_2=o_{\Prb}(n^{-1/2}).
  \]
  Boundedness, propensity truncation, and nuisance consistency imply score $L_2$ convergence.  Lemma~\ref{lem:crossfit} and $\mathbb P_{n,c}[T]\xrightarrow{p}p$ give the first influence function expansion in \eqref{eq:joint-alr}.

  The response estimator likewise solves the empirical response-score equation:
  \[
    \sqrt{n_c}(\widehat\theta_R^{(X)}-\theta_R^{(X)})
    =
    \frac{
      \sqrt{n_c}\mathbb P_{n,c}
      \psi_R^{(X)}(O;\theta_R^{(X)},
      \widehat e_{-k},\widehat g_{-k})
    }{
      \mathbb P_{n,c}[T\{D-\widehat e_{-k}(X)\}^2]
    }.
  \]
  Lemma~\ref{lem:response-rem} bounds the fold-specific bias by
  \[
    C\left\{
    \norm{\widehat e-e}_2\norm{\widehat g-g}_2
    +\norm{\widehat e-e}_2^2
    \right\}=o_{\Prb}(n^{-1/2}).
  \]
  Boundedness and nuisance consistency give score $L_2$ convergence.  Moreover, conditional on a training fold,
  \[
    \E_+[\{D-\widehat e(X)\}^2\mid\widehat e]
    =
    V^{(X)}+\E_+[\{\widehat e(X)-e(X)\}^2\mid\widehat e],
  \]
  so the denominator converges to $pV^{(X)}$.  The cross-fitting lemma yields the second expansion.

  Stacking the two components and applying Cram\'er--Wold gives joint normality with covariance $\Sigma_c$, proving (ii).  The same boundedness and $L_2$ convergence arguments, together with the consistency of $\mathbb P_{n,c}[T]$, of the denominator in \eqref{eq:thetahat-X}, and of the two estimators, show that the plug-in gradient converges to $\boldsymbol\phi_c^{P_c}$ in $L_2$, so its empirical second moment is consistent, proving (iii).  For (iv), the influence function in \eqref{eq:joint-alr} is the canonical gradient of Proposition~\ref{prop:gradients} and the observed-data model is unrestricted, so the estimators are regular and their asymptotic covariance $\Sigma_c$ is the semiparametric efficiency bound, by the convolution theorem and the characterization of regular asymptotically linear estimators in \citet[Chapter~25]{vandervaart-1998}.
\end{proof}

\begin{proof}[Proof of Theorem~\ref{thm:staggered-dml}]
  For fixed $c$, let $s_c=P(S_c=1)$, which is positive by Assumption~\ref{ass:staggered-dml}, and $n_c=\sum_iS_{c,i}$.  Theorem~\ref{thm:joint-dml}(i) under $P_c$ gives
  \[
    \widehat{\boldsymbol\vartheta}_c^{(X)}
    -\boldsymbol\vartheta_c^{(X)}
    =
    \frac1{n_c}\sum_{i:S_{c,i}=1}
    \boldsymbol\phi_c^{(X),P_c}(O_i)
    +o_{\Prb}(n_c^{-1/2}).
  \]
  Since $n_c/n\xrightarrow{p}s_c$,
  \[
    \sqrt n
    (\widehat{\boldsymbol\vartheta}_c^{(X)}
    -\boldsymbol\vartheta_c^{(X)})
    =
    \frac1{\sqrt n}\sum_{i=1}^n
    \frac{S_{c,i}}{s_c}
    \boldsymbol\phi_c^{(X),P_c}(O_i)
    +o_{\Prb}(1),
  \]
  which is \eqref{eq:cell-full-if}.  Stack the influence functions over the fixed finite collection of cells.  Each influence function under $P_c$ is transported to the full-panel distribution $P$ by \eqref{eq:cell-full-if}, so every column is a function of the same observation $O_i$, and the stacked vector is i.i.d., mean zero, and square integrable.  Cram\'er--Wold yields \eqref{eq:staggered-vector-clt}.  For \eqref{eq:staggered-sigma}, $\mathbb P_n[S_c]\xrightarrow{p}s_c$ and the $L_2$ convergence of each cell's plug-in gradient in the proof of Theorem~\ref{thm:joint-dml}(iii) give $\mathbb P_n[\norm{\widehat{\boldsymbol\phi}_i^{(X),P}-\boldsymbol\phi_i^{(X),P}}^2]=o_{\Prb}(1)$; Cauchy--Schwarz entry by entry and the law of large numbers then prove covariance consistency.  Theorem~\ref{thm:event-aggregation} then gives event-time inference for known or regularly estimated treatment-timing weights, provided the estimated aggregate influence functions satisfy its empirical-$L_2$ condition.  The cell-level nuisance-rate conditions are unchanged.
\end{proof}

\clearpage
\section{Additional application results}\label{app:fracking-checks}

\subsection{CGBS samples and control groups}

The CGBS Figure~2 code regresses treated counties' employment changes, net of estimated zero-dose control means, on a cubic spline in dose. The short window (event times 0--2) uses 329 treated counties in eight cohorts; the long window (3--4) uses 307 counties in seven cohorts. The archive also contains an overall window (0--4) for those seven cohorts. Section~\ref{sec:application} uses seven cohorts in every window and date-specific $G>t$ controls.

In the archived spline code, the control sample consists of strict zero-dose counties. There are 73 in the short window and 60 in the long and overall windows. For the latter windows, the code additionally requires a control county's recorded adoption year plus the window's last event time to fall within the panel. This restriction removes 13 zero-dose counties even though the preparation otherwise codes them as never treated.

To isolate the choice of controls within our specification, we hold the cohorts, outcome differences, and aggregation fixed and replace $G>t$ controls with all 73 zero-dose counties. The short-window level contrast falls from 0.0094 to 0.0065, a difference of 0.0030 (i.i.d. SE 0.0024; play SE 0.0049). The long-window contrast falls from 0.0432 to 0.0388, a difference of 0.0044 (i.i.d. SE 0.0027; play SE 0.0064). Both differences are imprecisely estimated. This comparison changes only the control group in our level estimator and does not reproduce the CGBS spline estimand.

\subsection{Inference for the CGBS benchmark}

The 14 plays in the 402-county sample have an effective count of $(\sum_h n_h)^2/\sum_h n_h^2=6.97$, where $n_h$ is the number of sample counties in play $h$. This measures concentration in cluster sizes.

Table~\ref{tab:fracking-cgbs-inference} reports the CGBS level estimates at dose four. The archived influence function standard errors account only for the treated observations and treat the estimated zero-dose control means as fixed. Adding the influence of those means raises the long-window standard error from 0.0091 to 0.0131 and the overall-window standard error from 0.0060 to 0.0089. This correction holds the empirical spline knots and boundaries fixed; it does not account for selection of the spline basis.

\begin{table}[!htbp]
  \centering
  \caption{Inference for the CGBS level estimates at dose four}
  \label{tab:fracking-cgbs-inference}
  {\renewcommand{\arraystretch}{1.15}
    \begin{tabular}{lrrrr}
\toprule
Window & Estimate & Original SE & iid SE & Play SE \\ 
\midrule
Short & 0.0059 & 0.0041 & 0.0061 & 0.0077 \\
Long & 0.0277 & 0.0091 & 0.0131 & 0.0226 \\
Overall & 0.0163 & 0.0060 & 0.0089 & 0.0138 \\
\bottomrule
\end{tabular}
}
  \vspace{0.5em}
  \begin{minipage}{0.98\textwidth}
    \begin{singlespace}
      \footnotesize
      \textit{Notes:} Estimates are in log points. Short averages event times 0--2; long averages 3--4; overall averages 0--4. Original SE reproduces the treated-only influence function standard error in the CGBS archive. The i.i.d. SE adds uncertainty from the estimated zero-dose control means conditional on the empirical spline basis. Play SE applies CR1 clustering by shale play to the corrected influence functions. The clustered standard errors provide a sensitivity analysis for within-play dependence; the paper's sampling results assume independent units.
    \end{singlespace}
  \end{minipage}
\end{table}

The archived long- and overall-window intervals exclude zero. After the control-mean correction, the i.i.d. long-window interval still excludes zero but the overall interval does not. The corresponding play-clustered standard errors are 0.0226 and 0.0138, and both intervals include zero. These are pointwise comparisons. The published CGBS figures also report pointwise intervals, whereas Figure~\ref{fig:fracking-margins} uses simultaneous intervals for the fixed collection of ten estimates under county-level i.i.d. sampling.

\subsection{Missing scores and the aggregate proportionality restriction}

Excluding the 44 counties whose missing prospectivity scores were coded as zero leaves 358 counties: 329 with positive scores and 29 with recorded zeros. With the same cells and aggregation rules, the short-, long-, and overall-window level contrasts change from 0.0094, 0.0432, and 0.0229 to 0.0138, 0.0513, and 0.0288. The response estimates remain 0.0088, 0.0272, and 0.0162 because the omitted counties enter only the controls.

For each event time $e=0,\ldots,4$, we aggregate $\delta_{L,g,g+e}-\mu_{D,g}\theta_{R,g,g+e}$ with full-panel cohort shares. Inference includes estimation of $\mu_{D,g}$ through \eqref{eq:phi-proportionality} and of the shares. The joint zero restriction on the five aggregates has an i.i.d. Wald $p$-value of 0.228 and a descriptive play-CR1 Wald $p$-value of 0.337. Nonrejection does not establish equality in all 35 cells, and the restriction is distinct from PT-L and PT-R.

\clearpage
\section{Illustrations of the two margins}\label{app:illustrations}

\begingroup
\setlength{\intextsep}{8pt}
\setlength{\floatsep}{8pt}
\begin{figure}[!htb]
  \centering
  \caption{Separate roles of PT-L and PT-R}
  \label{fig:pt-roles}
  \vspace{0.35em}
  \includegraphics[width=0.96\textwidth]{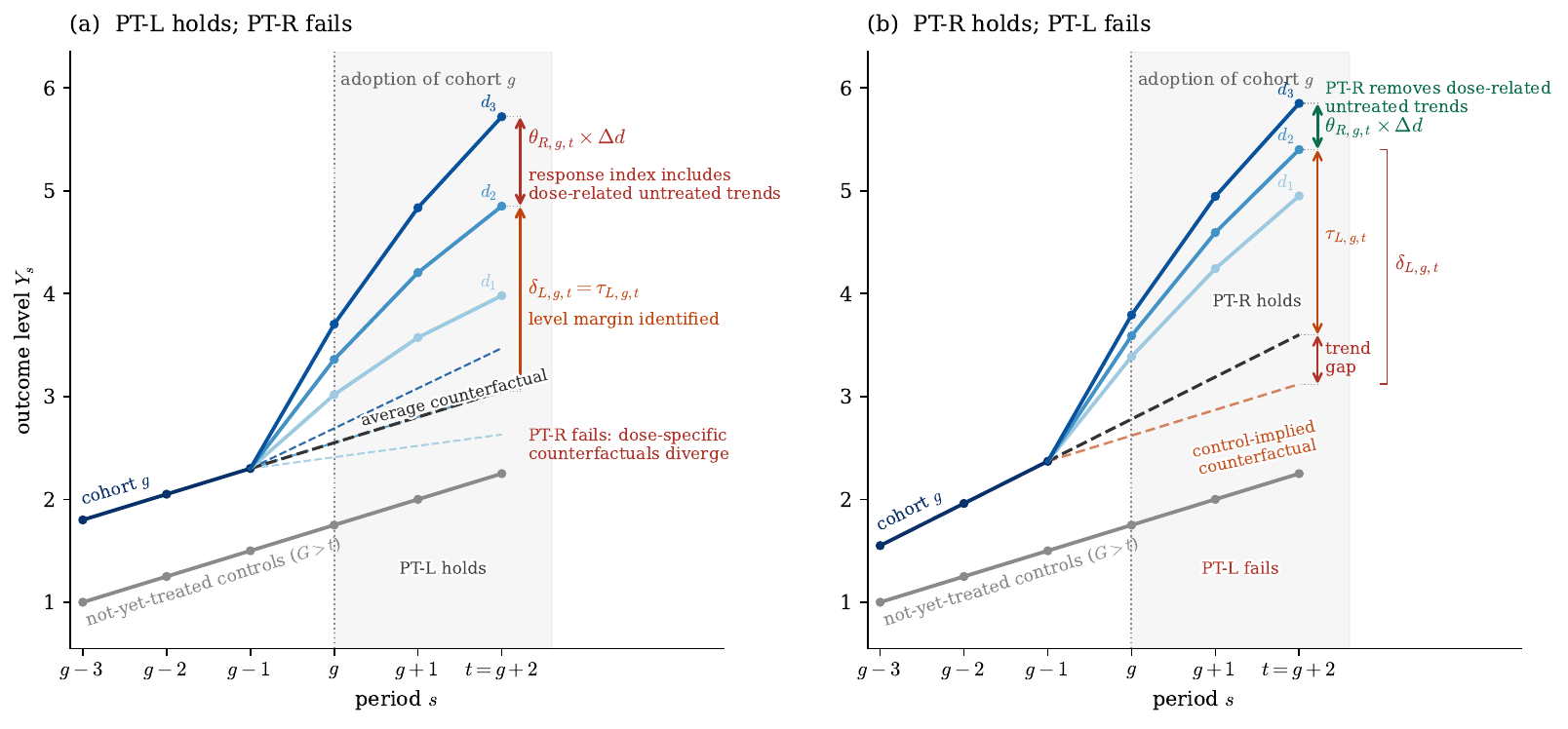}
  \vspace{0.35em}
  \begin{minipage}{0.96\textwidth}
    \begin{singlespace}
      \footnotesize
      \textit{Notes:} Stylized linear example with three equally weighted doses and $d_2=\mu_{D,g}$. PT-L identifies the level margin; PT-R removes dose-related untreated trends from the response index.
    \end{singlespace}
  \end{minipage}
\end{figure}

\begin{figure}[!htb]
  \centering
  \caption{The two margins and the cellwise continuous-dose OLS coefficient}
  \label{fig:two-margins-cell}
  \vspace{0.35em}
  \includegraphics[width=0.96\textwidth]{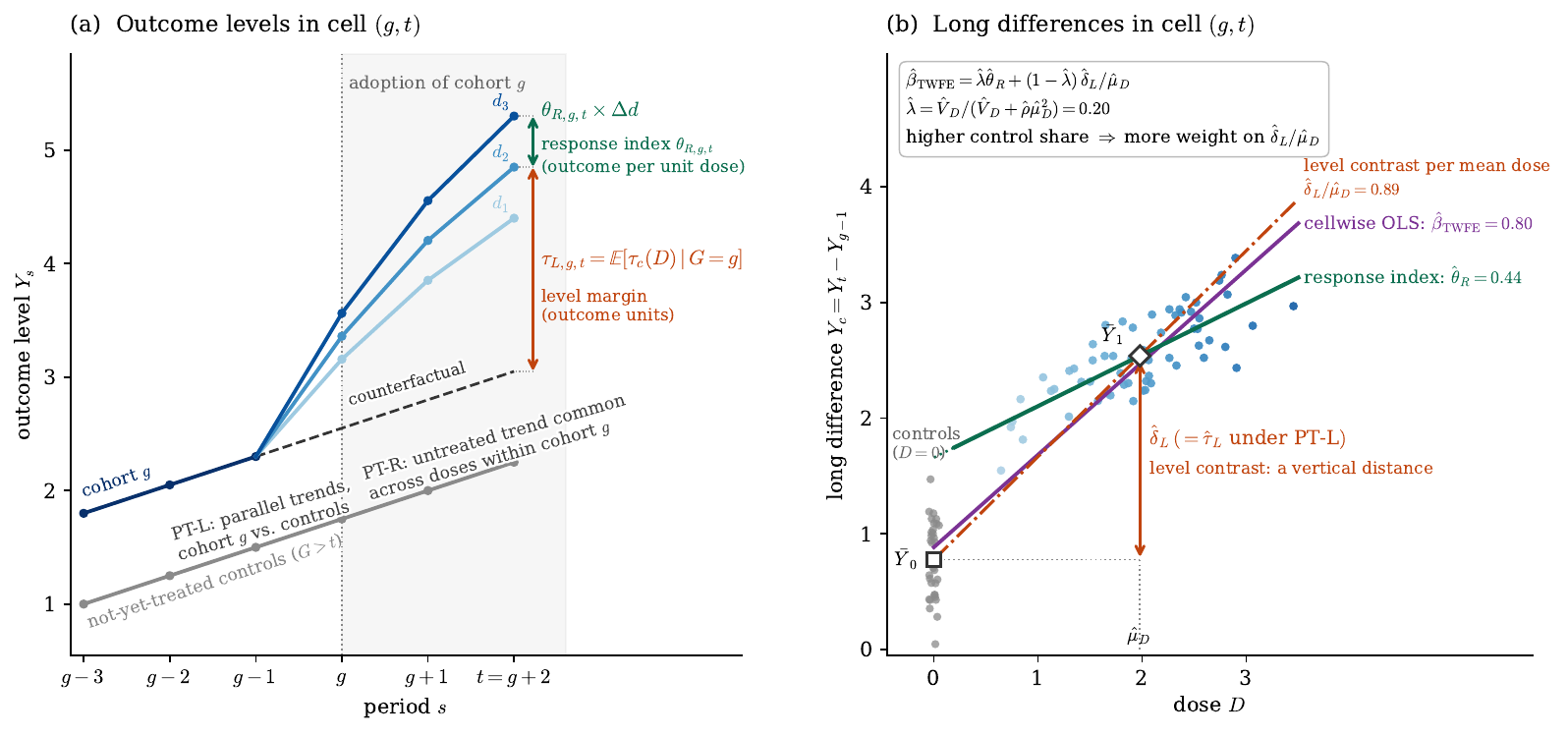}
  \vspace{0.35em}
  \begin{minipage}{0.96\textwidth}
    \begin{singlespace}
      \footnotesize
      \textit{Notes:} Stylized linear example. Panel~A uses three equally weighted doses, with $d_2=\mu_{D,g}$. Panel~B uses simulated data to illustrate Corollary~\ref{cor:share} for a valid cohort-time cell.
    \end{singlespace}
  \end{minipage}
\end{figure}
\endgroup

% End of the self-contained article.
\end{document}